\documentclass[journal, onecolumn, 12pt, draftclsnofoot]{IEEEtran}

\usepackage{amsmath,amsfonts,amssymb,float,enumerate,bm,graphicx,mathtools}

\DeclarePairedDelimiter{\ceil}{\lceil}{\rceil}
\DeclarePairedDelimiter{\floor}{\lfloor}{\rfloor}

\allowdisplaybreaks

\usepackage{tikz}
\usetikzlibrary{arrows}
\usepackage{verbatim}
\tikzstyle{int}=[draw, fill=white!20, minimum size=2em]
\tikzstyle{init} = [pin edge={to-,thin,black}]

\usepackage[left=1.27cm,right=1.27cm,bottom=2.9cm,top=1.9cm,a4paper]{geometry}

\restylefloat{table}
\pdfminorversion=4

\newcommand{\QXS}{Q_{X|S}}
\newcommand{\QSY}{Q_{\widehat{S}|Y}}

\newcommand{\wi}[1]{\widehat{#1}}
\newcommand{\mbb}[1]{\mathbb{#1}}
\newcommand{\mcal}[1]{\mathcal{#1}}

\def\SC{{\rm SC}}
\def\LP{{\rm LP}}
\def\OPT{{\rm OPT}}



\usepackage{amsmath, amssymb, bbm, xspace}
\usepackage{epsfig}
\usepackage{longtable}
\usepackage{color}
\usepackage{mathrsfs}
\usepackage{comment}
\usepackage{ifthen}
\newboolean{showcomments}
\setboolean{showcomments}{true}
\newcommand{\aak}[1]{  \ifthenelse{\boolean{showcomments}}
{ \textcolor{blue}{(AAK says:  #1)}} {}  }

\usepackage{courier}



\newtheorem{theorem}{Theorem}[section]

\newtheorem{lemma}[theorem]{Lemma}

\newtheorem{corollary}[theorem]{Corollary}

%

\def\bkE{{\rm I\kern-.17em E}}
\def\bk1{{\rm 1\kern-.17em l}}
\def\bkD{{\rm I\kern-.17em D}}
\def\bkR{{\rm I\kern-.17em R}}
\def\bkP{{\rm I\kern-.17em P}}

\def\bkZ{{\bf{Z}}}

\def\bkE{{\rm I\kern-.17em E}}
\def\bk1{{\rm 1\kern-.17em l}}
\def\bkD{{\rm I\kern-.17em D}}
\def\bkR{{\rm I\kern-.17em R}}
\def\bkP{{\rm I\kern-.17em P}}

\makeatletter
\newcommand{\pushright}[1]{\ifmeasuring@#1\else\omit\hfill$\displaystyle#1$\fi\ignorespaces}
\newcommand{\pushleft}[1]{\ifmeasuring@#1\else\omit$\displaystyle#1$\hfill\fi\ignorespaces}
\makeatother


\def\bkZ{{\bf{Z}}}
\def\b12{(\beta_1,\beta_2)}

\newenvironment{example}{{\noindent \bf Example}}{\hfill $\square$\hspace{-4.5pt}\vspace{6pt}}
\newcounter{example}
\renewcommand{\theexample}{\thesection.\arabic{example}}

\newcounter{remark}
\renewcommand{\theremark}{\thesection.\arabic{remark}}

\def\Bscr{\mathscr{B}}
\def\Xscr{\mathcal{X}}
\def\Yscr{\mathcal{Y}}

\newlength{\noteWidth}
\setlength{\noteWidth}{.75in}
\long\def\notes#1{\ifinner
{\tiny #1}
\else
\marginpar{\parbox[t]{\noteWidth}{\raggedright\tiny #1}}
\fi\typeout{#1}}

 \def\notes#1{\typeout{read notes: #1}} 



\newcommand{\I}[1]{\mathbb{I}_{\{#1\}}}

\newcommand{\ie}{i.e.\@\xspace} 
\newcommand{\eg}{e.g.\@\xspace} 
\newcommand{\etal}{et al.\@\xspace} 



\newcommand{\Real}{\ensuremath{\mathbb{R}}}

\newcommand{\minimize}[1]{\displaystyle\minim_{#1}}
\newcommand{\minim}{\mathop{\hbox{\rm min}}}
\newcommand{\maximize}[1]{\displaystyle\maxim_{#1}}
\newcommand{\maxim}{\mathop{\hbox{\rm max}}}

\def\OPT{{\rm OPT}}

\def\Nbb{{\mathbb{N}}}

\def\exp{\mathop{\hbox{\rm exp}}}

\def\spose#1{\hbox to 0pt{#1\hss}}
\def\sub#1{^{\null}_{#1}}
\def\text #1{\hbox{\quad#1\quad}}


\def\nthinsp{\mskip -2   mu}




\def\superstar{^{\raise 0.5pt\hbox{$\nthinsp *$}}}
\def\SUPERSTAR{^{\raise 0.5pt\hbox{$*$}}}

\def\lamstarT {\lambda^{\raise 0.5pt\hbox{$\nthinsp *$}T}}



\def\Ascr{{\cal A}}
\def\Bscr{{\cal B}}

\def\Dscr{{\cal D}}

\def\Tscr{{\cal T}}

\def\Pscr{{\cal P}}

\def\Sscr{{\cal S}}

\def\Uscr{{\cal U}}

\def\Wscr{{\cal W}}

\def\Xscr{{\cal X}}
\def\Yscr{{\cal Y}}

\def\shat{\widehat s}

\def\eef{\;\textrm{if}\;}

\def\non{\nonumber}

\let\forallnew\forall
\renewcommand{\forall}{\forallnew\ }
\let\forall\forallnew

		\def\bkE{{\rm I\kern-.17em E}}
		\def\bk1{{\rm 1\kern-.17em l}}
		\def\bkD{{\rm I\kern-.17em D}}
		\def\bkR{{\rm I\kern-.17em R}}
		\def\bkP{{\rm I\kern-.17em P}}
		\def\bkY{{\bf \kern-.17em Y}}
		\def\bkZ{{\bf \kern-.17em Z}}
		\def\bkC{{\bf  \kern-.17em C}}


%
{\begin{list}{}%
         {\setlength{\leftmargin}{#1}}%
         \item[]%
}
{\end{list}}

		\def\bsp{\begin{split}}
		\def\beq{\begin{eqnarray}}
		\def\bal{\begin{align*}}
		\def\bc{\begin{center}}
		\def\be{\begin{enumerate}}
		\def\bi{\begin{itemize}}
		\def\bs{\begin{small}}
		\def\bS{\begin{slide}}
		\def\ec{\end{center}}
		\def\ee{\end{enumerate}}
		\def\ei{\end{itemize}}
		\def\es{\end{small}}
		\def\eS{\end{slide}}
		\def\eeq{\end{eqnarray}}
		\def\eal{\end{align*}}
		\def\esp{\end{split}}
		\def\qed{ \vrule height7.5pt width7.5pt depth0pt}  

\def\maxproblemsmallwb#1#2#3#4{
		 \begin{tabular*}{0.47\textwidth}
			{@{}l@{\extracolsep{\fill}}l@{\extracolsep{-4pt}}l@{\extracolsep{\fill}}c@{}}
				#1 &  & $\maximize{#2}$  $#3$ & $ $ \\[4pt]
					  $\sub \ $  &   & $#4$ &  $ $
			\end{tabular*}
			}

\def\problemsmallawb#1#2#3#4{
		 \begin{tabular*}{0.47\textwidth}
			{@{}l@{\extracolsep{\fill}}l@{\extracolsep{-4pt}}l@{\extracolsep{\fill}}c@{}}
				#1 &  & $\minimize{#2}$  $#3$ & $ $ \\[4pt]
					  $\sub \ $  &   & $#4$ &  $ $
			\end{tabular*}
                      }
\def\problemsmallabb#1#2#3#4{\fbox{
		 \begin{tabular*}{0.55\textwidth}
			{@{}l@{\extracolsep{\fill}}l@{\extracolsep{-4pt}}l@{\extracolsep{\fill}}c@{}}
				#1 &  & \quad $= $ $\minimize{#2}$  $#3$&   \\[4pt]
                        $\sub \ $  &   & $#4$ &  $ $
			\end{tabular*}}
			}
\def\maxproblemsmallbb#1#2#3#4{\fbox{
		 \begin{tabular*}{0.55\textwidth}
			{@{}l@{\extracolsep{\fill}}l@{\extracolsep{-4pt}}l@{\extracolsep{\fill}}c@{}}
				#1 &  & \quad $ = $ $\maximize{#2}$  $#3$ & $ $ \\[4pt]
					  $\sub \ $  &   & $#4$ &  $ $
			\end{tabular*}
			}}

	\def\cp2problem#1#2#3#4{\fbox
		 {\begin{tabular*}{0.9\textwidth}
			{@{}l@{\extracolsep{\fill}}l@{\extracolsep{6pt}}l@{\extracolsep{\fill}}c@{}}
				#1 & & $#4 $ 
			\end{tabular*}}}

		\def\bkE{{\rm I\kern-.17em E}}
		\def\bk1{{\rm 1\kern-.17em l}}
		\def\bkD{{\rm I\kern-.17em D}}
		\def\bkR{{\rm I\kern-.17em R}}
		\def\bkP{{\rm I\kern-.17em P}}
		
		\def\bkZ{{\bf{Z}}}

\newcommand {\beeq}[1]{\begin{equation}\label{#1}}
\newcommand {\eeeq}{\end{equation}}
\newcommand {\bea}{\begin{eqnarray}}
\newcommand {\eea}{\end{eqnarray}}

\def\texitem#1{\par\smallskip\noindent\hangindent 25pt
               \hbox to 25pt {\hss #1 ~}\ignorespaces}



\def\bsp{\begin{split}}
		\def\beq{\begin{eqnarray}}
		\def\bal{\begin{align*}}
		\def\bc{\begin{center}}
		\def\be{\begin{enumerate}}
		\def\bi{\begin{itemize}}
		\def\bs{\begin{small}}
		\def\bS{\begin{slide}}
		\def\ec{\end{center}}
		\def\ee{\end{enumerate}}
		\def\ei{\end{itemize}}
		\def\es{\end{small}}
		\def\eS{\end{slide}}
		\def\eeq{\end{eqnarray}}
		\def\eal{\end{align*}}
		\def\esp{\end{split}}
		\def\qed{ \vrule height7.5pt width7.5pt depth0pt}  


\usepackage{amsmath, amssymb, xspace}
\usepackage{epsfig}
\usepackage{longtable}
\usepackage{color}
\usepackage{mathrsfs}
\usepackage{subfig}
\newenvironment{proof}[1][]{{\noindent \emph {Proof} #1: }}{\hfill \qed \vspace{3pt}\\ }

\def\sub{\hbox{\rm s.t}}






\ifCLASSINFOpdf
  \else
 
\fi

\title{Minimax Theorems for Finite Blocklength Lossy Joint Source-Channel Coding over an AVC}
\author{\hspace{0.5cm} Anuj S. Vora, \and Ankur A. Kulkarni\thanks {The authors are with the Systems and Control Engineering group at the Indian Institute of Technology Bombay, Mumbai, 400076. They can be reached at \texttt{anujvora@iitb.ac.in},  \texttt{kulkarni.ankur@iitb.ac.in}. An earlier version of this work was presented at the National Conference on Communications in Bangalore~\cite{vora2019minimax}.}}

 \begin{document}
\maketitle
\vspace{-1cm}
\begin{abstract}
Motivated by applications in the security of cyber-physical systems, we pose the  finite blocklength communication problem in the presence of a jammer as a  zero-sum game between the encoder-decoder team and the jammer, by allowing the communicating team as well as the jammer only locally randomized strategies. 
The communicating team's problem is non-convex under locally randomized codes, and hence, in general, a minimax theorem need not hold for this game. However, we show that \textit{approximate minimax} theorems hold in the sense that the minimax and maximin values of the game approach each other asymptotically. In particular, for rates strictly below a critical threshold, \textit{both} the minimax and maximin values approach zero, and for rates strictly above it, they both approach unity. We then show a \textit{second order} minimax theorem, \ie, for rates exactly approaching the threshold with along a specific scaling,  the minimax and maximin values approach the same constant value, that is neither zero nor one. Critical to these results is our derivation of finite blocklength bounds on the minimax and maximin values of the game and our derivation of second order dispersion-based bounds.
\end{abstract}

\section{Introduction}

Cyber-physical systems consist of physical entities that are remotely controlled via communication channels. Examples of such systems are smart city infrastructure, modern automobiles, power grids and nuclear power plants where a background cyber layer \cite{humayed2017cyber} is deployed to carry time-critical information. 
Efficient and secure transmission of information over the cyber layer is fundamental to the functioning of such  systems. To guarantee reliable communication in the block coding paradigm, it generally requires an increasingly large number of uses of the channel, which induces  delays in the transmission. Since the systems above are sensitive to delays, it is of utmost importance to study these systems in the finite blocklength regime. On another count, the presence of the cyber layer in these systems makes them vulnerable to adversarial attacks which can have catastrophic consequences \cite{slay2007lessons, langner2011stuxnet}. 
 Motivated by these two factors -- limited delay and the possibility of jamming --  in this paper, we study a point-to-point finite blocklength lossy communication problem in presence of a jammer. 

We consider a setting where a sender and a receiver are communicating over a channel whose state is controlled by an active jammer. The jammer can vary the state \textit{arbitrarily}, \ie, at every instance of transmission, with the goal of disrupting communication. The standard formulation of this problem considers this setting only from the vantage point of the communicating team, by seeking coding strategies that are robust against \textit{any} action of the jammer. However, since both the communicating team and the jammer have strategic capabilities, it is logical that one adopts a neutral point of view where one allows both the communicating team and the jammer to act strategically. To this end, we formulate the above problem as a zero-sum game where the sender-receiver team aims to minimize the loss in communication by designing suitable codes, while the jammer tries to maximize it by choosing channel states. 

Critical to our contribution is the definition of allowable strategies for the communicating team and the jammer; in particular the notion of randomization we permit. We allow the strategy for the communicating team to be a \textit{stochastic code}, that is, a pair of encoder and decoder, that are allowed only \textit{local} randomization. 
Generally, for a communication problem, it is of interest to determine codes which are deterministic. In the game theoretic parlance, this corresponds to \textit{pure} strategies for the communicating team. However, for reasons discussed later, finding a deterministic code for this problem appears to require analysis which is beyond the scope of this paper. Consequently, we look for randomized strategies. 
Game-theory provides us with two possible kinds of randomizations of pure strategies (see \eg,~\cite{maschler2013game}). A \textit{mixed} strategy is a random choice of a pure strategy; in the communication theoretic parlance this corresponds to a randomly chosen pair of encoder and decoder, or \textit{random code}. However, implementing such a code requires \textit{shared} randomness between the encoder and the decoder, which is may not be feasible in cyber physical systems where the encoder and decoder are decentralized with the channel as a sole medium of communication. Another notion of randomization in game theory is \textit{behavioral} strategies -- under such a strategy, an action is chosen at random given the information at the time. In the information theory literature these are referred to as \textit{stochastic codes}. Under such a strategy, the encoder and decoder randomize \textit{locally} using their own private source of randomness. We adopt this notion of randomization in this paper for the communicating team.  For the jammer, we allow it to randomize over the choice of state sequences of the channel.

Unfortunately, the consideration of stochastic codes creates significant hurdles in further analysis. First, having formulated the problem as a zero-sum game, one is obligated to analyze it using the concept of a \textit{saddle point}~\cite{maschler2013game}. The game is said to admit a saddle point if the minimum loss that the communicating team can incur in the worst case over all jammer strategies (called the \textit{upper value} of the game), equals the maximum loss that the jammer can induce in the worst case over all strategies of the communicating team (called the \textit{lower value} of the game). However, for each fixed action of the jammer, the optimization problem for the communicating team under stochastic codes is necessarily \textit{nonconvex} due to the non-classical information structure \cite{kulkarni2014optimizer}. As a consequence a saddle point need not exist for the resulting zero-sum game. This puts in jeopardy any further game theoretic analysis. 

Second, the construction of a stochastic code entails some intrinsic challenges. To compute an upper bound for the upper value of the game, we require an achievability scheme for the joint source-channel setting. However, constructing a deterministic code for the joint source-channel setting could possibly require determining the existence of a deterministic code of the AVC for the maximum probability of error criterion. The latter problem is unsolved in general and in some cases is equivalent to determining the zero error capacity of a DMC \cite{ahlswede1970note}.

 Our main results shows that despite the nonconvexity above, near-minimax theorems hold for this game in the sense that as the blocklength becomes large, the upper and lower values of the game come arbitrarily close. In particular, we show that there exists a threshold such that if the asymptotic rate is strictly below the threshold, the upper and lower values tend to zero, whereas, for rates strictly above the threshold, the upper and lower values tend to unity. We then consider a refined regime such that the asymptotic rate is exactly equal to the threshold, but allowing a backoff from the threshold that  \textit{changes} with the blocklength along a specific scaling. In this case, the upper and lower values tend to a same constant whose value is neither zero nor one. This shows that a minimax theorem holds even in this novel, refined regime. 

The upper value of the game corresponds to lossy joint source-channel coding over an \textit{arbitrarily varying channel}. The lower value, on the other hand, corresponds to finding a probability distribution on state sequences such that the minimum loss over the resulting (not necessarily memoryless) channel is maximized. Our results are obtained by constructing a converse for the latter problem to lower bound the lower value, and an achievability scheme for the AVC to upper bound the upper value. These bounds yield second order dispersion bounds for the rate of communication which have been of significant interest in the recent past (see, \eg, Polyankskiy \etal~\cite{polyanskiy2010channel}, Kostina and Verd\'{u}~\cite{kostina2013lossy} and their numerous follow ups). In our context, the dispersion bounds yield the aforementioned refined or ``second order'' minimax theorem.

Our achievability scheme exploits the stochastic coding available for joint source-channel coding. 
A separate source-channel coding scheme for the lossless source-channel setting can be constructed by taking the composition of a source code and a channel code constructed independently of each other. Although the scheme is sufficient for the minimax theorems with classical notions of rate, it is not sharp enough to provide us the second order minimax theorem we seek. We develop a joint source-channel coding scheme using ideas from \cite{csiszar2011information} and \cite{kostina2013lossy} to give a stochastic code that yields the desired theorem. 
To construct this achievability scheme we first derive a reduced random code which is a code with a uniform distribution over a smaller number of deterministic codes. Using a deterministic channel code for average probability of error and the above reduced random code, we construct a joint source-channel code which only requires local randomness at the encoder thereby giving a stochastic code. We use the deterministic channel code presented in \cite{kosut2018finite} for our construction.

The lower bound uses the linear programming relaxation method from \cite{jose2017linear}.
Recently, the authors in \cite{jose2017linear} showed that although the point-to-point communication problem (without a jammer) is non-convex in the space of stochastic codes, it nevertheless possesses a hidden convexity. Specifically, they demonstrate that for large blocklengths, the problem can be approximated arbitrarily closely by a linear programming (LP) based relaxation. This suggests that an \textit{approximate} minimax theorem could hold for our problem. Our results validate this intuition.  They show that the LP relaxation is tight even in adversarial settings, adding to the list of cases where the LP relaxation method has yielded tight results~\cite{jose2016linearIT}, \cite{jose2018improvedSW}.

To the best of our knowledge, the formulation we consider here has not been studied before. To be sure, game-theoretic formulation of communication in presence of jammer has been studied previously, but in a somewhat different sense. For instance \cite{borden1985some}, \cite{hegde1989capacity} consider mutual information as the payoff function and prove the existence of saddle point strategies. The control community has studied problem in the continuous alphabet, \eg, the authors in \cite{basar1985complete} consider the problem of communicating a sequence of Gaussian random variables in presence of a jammer and pose a zero-sum game with the mean squared error as the payoff function. In the context of the AVC, the coding theorems for the Gaussian AVC for peak and average power constraints on the input and the jammer were proved in \cite{hughes1987gaussian}. Implicit in their coding theorems is an approximate minimax theorem for the zero-sum game. A zero-sum game formulation similar to our paper is also discussed in \cite[Ch. 12]{csiszar2011information}, where the authors discuss and formulate different instances of the communication over AVC as a zero-sum game. The closest to our setting, is the problem studied in \cite{jose2018game2, jose2018shannon} by the second author of the present paper. There the authors consider only the channel coding problem and the action of the jammer is fixed throughout the transmission, making the upper value problem equivalent to coding for the \textit{compound channel}; moreover,  \cite{jose2018game2, jose2018shannon} do not consider second order minimax theorems.

Lossless joint source-channel coding over an AVC was studied by us in the `conference version' of this paper \cite{vora2019minimax}.
The present paper differs from \cite{vora2019minimax} in the following aspects. We consider lossy joint source-channel setting in this paper unlike in \cite{vora2019minimax} where we considered only the lossless setting. Moreover, we derive dispersion based bounds on the rate, which pave the way for the second order minimax theorem.

This paper is organized as follows. We formulate the problem in Section III. The lower bound is derived in Section IV and the upper bound is derived in Section V. The corresponding asymptotic analysis is done in Section VI and Section VII concludes the paper.

\section{Preliminaries}

\subsection{Notation}
All random variables in this paper are discrete and are defined on an underlying probability space with measure $\mbb{P}$. 
Random variables are represented with uppercase letters $X$ and their instances are denoted by lower case letters $x$; unless otherwise stated these are vectors whose length will be understood from the context. Blackboard letters $\mbb{X},\mbb{Y}$ etc. are used to represent the corresponding single-letter random variable. Caligraphic letters $\mcal{X}, \Yscr$ etc. denote spaces of single-letter random variables. The set of all probability distributions on a space $\mcal{X}$ is denoted by $\mcal{P}(\mcal{X})$ and a particular distribution is represented as $P_{\mbb{X}}$. 

The type of a sequence $x \in \mcal{X}^{n}$ is the empirical distribution $T_x \in \Pscr(\Xscr)$, given by $T_{x}(\bullet) \equiv \frac{|\{ i: x_{i} = \bullet\}|}{n}$. The joint type of $x,y$ is denoted as $T_{x,y}$. The set $\mcal{P}_{n}(\mcal{X}) \subseteq \mcal{P}(\mcal{X})$ denotes the set of all types of sequence in $\mcal{X}^{n}$. The set of sequences  with type $P$ is denoted as $T(P)$.

For any expression $\mcal{B}$, the indicator function $\mbb{I}\{ \mcal{B} \}$ is unity if the expression holds true, otherwise it is zero. The probability of an event $A$ under the measure induced by a distribution $P$ is denoted by $P\{A\} := \sum_{x} \I{x \in A} P(x)$. The variance of a random variable $X \in \mcal{X}$ is denoted as ${\rm Var}(X) := \mbb{E}[ X - \mbb{E}[X]]^{2}$. For any distributions $P_{X}$ and $P_{Y|X}$, we define $(P_{X} \times P_{Y|X})(x,y) := P_{X}(x)P_{Y|X}(y|x)$ and $(P_{X}P_{Y|X})(y) := \sum_{x}P_{X}(x)P_{Y|X}(y|x)$. The complementary Gaussian distribution function is denoted as ${\sf Q}$. Note that this is different than the conditional distribution functions defined as $Q_{X|Y}$ for some random variables $X,Y$. These distribution functions will be defined within the context and used accordingly. The $\exp$ and $\log$ are with respect to base 2. 

\subsection{Arbitrarily Varying Channel}

The arbitrarily varying channel (AVC) was first introduced in \cite{blackwell1960capacities} where the authors consider a channel whose law changes arbitrarily with every transmission. The AVC can be modelled as follows. Consider a family of channels $\mcal{V} := \{P_{\mbb{Y}|\mbb{X},\Theta = \theta} \in \mcal{P}(\mcal{Y}|\mcal{X}), \theta \in \mcal{T} \}$, having common input spaces and output spaces denoted by the finite sets $\mcal{X}$ and $\mcal{Y}$, respectively, where each channel is indexed by the parameter $\theta$, called the state of the channel, drawn from a finite space $\mcal{T}$.
We assume the channel behaviour is memoryless. Thus, the resulting channel can be modeled as the following discrete memoryless AVC. Specifically, the probability of receiving the sequence $y \in \mcal{Y}^{n}$ on sending the input sequence $x$ in $\mcal{X}^{n}$ when the sequence of states is $\theta \in \mcal{T}^{n}$ is given as $  P_{Y|X,\bm{\Theta}}(y|x,\theta) = \prod_{i = 1}^{n} P_{\mbb{Y}|\mbb{X},\Theta}(y_{i}|x_{i},\theta_{i})$.

A \textit{deterministic} channel code is defined as a pair of functions $(f_{\sf C},\varphi_{\sf C})$ given as 
\begin{align}
  f_{\sf C} &: \Wscr \rightarrow \Xscr^n, \varphi_{\sf C} : \Yscr^n \rightarrow \Wscr,
\end{align}
where $\Wscr = \{1,\hdots, M\}$ with $M \in \mbb{N}$. 
 The probability of error for each message $m \in \mcal{W}$ and state sequence $\theta \in \mcal{T}^{n}$ under the deterministic channel code $(f_{\sf C}, \varphi_{\sf C})$ is given as
\begin{align}
  e_{m,\theta}(f_{\sf C},\varphi_{\sf C}) &:=  \sum_{y \in \mcal{Y}^{n}} \mbb{I}\{\varphi(y) \neq m\}  P_{Y|X,\bm{\Theta}}(y |f_{\sf C}(m),\theta).
\end{align}
The maximum and average probability of error over all messages for this code are defined as 
\begin{align}
  e(f_{\sf C},\varphi_{\sf C}) &:= \max_{\theta \in \mcal{T}^{n} } \max_{m \in \{1,\hdots,M\} } e_{m,\theta}(f_{\sf C},\varphi_{\sf C}),  \;\;\;
  \bar{e}(f_{\sf C},\varphi_{\sf C}) := \max_{\theta \in \mcal{T}^{n}} \frac{1}{M}\sum_{m = 1}^{M} e_{m,\theta}(f_{\sf C},\varphi_{\sf C}),
\end{align}
respectively.

Unlike the ordinary discrete memoryless channel (DMC), the capacity of the AVC depends on the error criteria as well as the type of codes used by the encoder and decoder. Further, in some cases, no closed form expression to the capacity of the AVC is known. In particular, it was shown in \cite{ahlswede1970note} that computing the deterministic code capacity of a certain class of AVCs under the maximum probability of error criterion amounts to computing the zero error capacity of a DMC, which is a significantly hard problem. However, the capacity of the AVC is computable for maximum as well as average probability of error criteria if the encoder and decoder are allowed to randomize their actions.


A stochastic code is defined as a pair of conditional distributions $Q_{X|W} \in \mcal{P}(\mcal{X}^{n}| \Wscr), Q_{\wi{W}|Y} \in \mcal{P}(\Wscr|\mcal{Y}^{n})$. 
 The maximum and average probability of error under a stochastic code $(Q_{X|W}, Q_{\wi{W}|Y})$ are defined as
 \begin{align}
  e(Q_{X|W}, Q_{\wi{W}|Y}) &:= \max_{\theta \in \mcal{T}^{n}} \max_{w \in \{1,\hdots,M\}}\sum_{x,y,\wi{w}} \mbb{I}\{\wi{w} \neq w\}   Q_{X|W}(x|w)P_{Y|X,\bm{\Theta}}(y |x,\theta)Q_{\wi{W}|Y}(\wi{w}|y), \non \\
  \bar{e}(Q_{X|W}, Q_{\wi{W}|Y}) &:= \max_{\theta \in \mcal{T}^{n}} \frac{1}{M}\sum_{w=1}^{M}\sum_{x,y,\wi{w}} \mbb{I}\{\wi{w} \neq w\} Q_{X|W}(x|w)  P_{Y|X,\bm{\Theta}}(y |x,\theta)Q_{\wi{W}|Y}(\wi{w}|y). \non
\end{align}

  Let $e$ be the probability of error criterion (maximum or average). Let $R = \frac{\log M}{n}$ be the rate of communication. For $\{\epsilon_{n}\}_{n \geq 1} \rightarrow 0$, the rate $R$ is said to be achievable if there exists a sequence of $(M,n)$ codes of rate $R$ (deterministic or random) such that the probability of error $e \leq \epsilon_{n}$ for sufficiently large $n$. The supremum of all such rates is defined as the capacity of the channel and is denoted as $C$.

In case of the AVC, the capacity $C$ could be zero under certain types of codes and error criteria. We state the conditions under which the capacity of the AVC is positive. An AVC is said to be non-symmetrizable if there does not exist any distribution $P_{\Theta|\mbb{X}} \in \mcal{P}(\mcal{T}| \mcal{X}) $ such that $\forall \; x \in \mcal{X},x' \in \mcal{X}, y \in \mcal{Y}$, we have
  \begin{align}
     \sum_{\theta \in \mcal{T}} P_{\Theta|\mbb{X}}(\theta|x) P_{\mbb{Y}| \mbb{X},\Theta}(y|x',\theta) &=     \sum_{\theta \in \mcal{T}} P_{\Theta|\mbb{X}}(\theta|x') P_{\mbb{Y}| \mbb{X},\Theta}(y|x,\theta). \non 
  \end{align}
 In this paper, it is assumed that the AVC under study is non-symmetrizable. For a non-symmetrizable AVC, the stochastic code capacity for maximum  (and average) probability of error criterion is given as   \cite{ahlswede1978elimination} 
\begin{align}
  C = \max_{P_{\mbb{X}} \in \mcal{P}(\mcal{X})} \min_{q_{\Theta} \in \mcal{P}(\mcal{T})} I(\mbb{X};\mbb{Y}_{q_{\Theta}}), \label{eq:avc-cap}
\end{align}
where $I(\mbb{X};\mbb{Y}_{q_{\Theta}})$ is the mutual information between $\mbb{X}$ and $\mbb{Y}_{q_{\Theta}}$ and $\mbb{Y}_{q_{\Theta}}$ is the output of the averaged channel $(q_{\Theta}P_{\mbb{Y}|\mbb{X},\Theta}) := \sum_{\theta \in \mcal{T}} q_{\Theta}(\theta) P_{\mbb{Y}|\mbb{X},\Theta = \theta}$ when $\mbb{X}$ is the input of the channel. 
For further discussion on the AVC, the reader is referred to (\cite{csiszar2011information}, Ch. 12) and \cite{lapidoth1998reliable}.   

\subsection{Rate distortion function}

Let $\mcal{S}$ be a space of messages and let $d_{\sf S} : \mcal{S} \times \mcal{S} \rightarrow [0,\infty)$ be a distortion function. Consider a problem where it is required to express $k$-length strings from $\mcal{S}^{k}$ in $M$ messages such that the average distortion given by $\mbb{E}[d_{\sf S}(\mbb{S},\wi{\mbb{S}})] \leq \bm{d}$ for a fixed distortion level $\bm{d} > 0$.

A deterministic source code is defined as a pair of functions $(f_{\sf S},\varphi_{\sf S})$ given as 
\begin{align}
  f_{\sf S} &: \Sscr^k \rightarrow \Wscr,\quad  \varphi_{\sf S} :  \Wscr \rightarrow \Sscr^k,
\end{align}
where $\Wscr = \{1 ,\hdots, M\}$ with $M \in \mbb{N}$.

The distortion for the $k$-length sequences is defined as $d(S,\wi{S}) := \frac{1}{k} \sum_{i=1}^{k}d_{\sf S}(\mbb{S}_{i},\wi{\mbb{S}}_{i})  > \bm{d} $. Let the rate be defined as $R = \frac{\log M}{k}$. For a given distortion level $\bm{d}$, the rate $R$ is said to be achievable if there exists a code, such that $\lim_{k \rightarrow \infty} \mbb{E}[d(S,\wi{S})] \leq \bm{d}$. The infimum of all such rates is defined as the rate distortion function and is given as  
\begin{align}
  R(\bm{d}) = \min_{P_{\wi{\mbb{S}}|\mbb{S}}, \mbb{E}[d_{\sf S}(\mbb{S},  \wi{\mbb{S}})]\leq \bm{d}} I(\mbb{S},\wi{\mbb{S}}), \label{eq:rate-dist-func}
\end{align}
where $\mbb{S} \in \mcal{S}$ is distributed as $P_{\mbb{S}} \in \mcal{P}(\mcal{S})$ and $\wi{\mbb{S}} \in \mcal{S}$ is distributed as $P_{\wi{\mbb{S}}|\mbb{S}} \in \mcal{P}(\mcal{S}|\mcal{S})$. More details on the rate-distortion theory can be found in (\cite{cover2012elements}, Ch. 10).

\subsection{Information quantities}
In this section we define few information quantities that will be  used subsequently in the paper. The information density is defined as
\begin{align}
i_{\mbb{X};\mbb{Y}_{q_{\Theta}}}(x;y) &= \log \frac{(q_{\Theta}P_{\mbb{Y}|\mbb{X},\Theta})(y|x)}{(P_{\mbb{X}}q_{\Theta}P_{\mbb{Y}|\mbb{X},\Theta})(y)}, \; x \in \Xscr, y \in \Yscr,   \label{eq:info-den-defn}
\end{align}
where 
\begin{align}
   (q_{\Theta}P_{\mbb{Y}|\mbb{X},\Theta})(y|x) &= \sum_{\theta \in \Tscr} q_{\Theta}(\theta)P_{\mbb{Y}|\mbb{X},\Theta}(y|x,\theta), \;\;
(P_{\mbb{X}}q_{\Theta}P_{\mbb{Y}|\mbb{X},\Theta})(y) = \sum_{x \in \Xscr,\theta \in \Tscr}P_{\mbb{X}}(x)q_{\Theta}(\theta)P_{\mbb{Y}|\mbb{X},\Theta}(y|x,\theta). \non 
\end{align}
The information density for vector valued random variables $(X,Y)$ is denoted as $i_{X;Y_q}$ and is defined analogously. 
The $\bm{d}$-tilted information is defined as 
\begin{align}
j_{{\sf S}}(s,\bm{d}) &= \log \frac{1}{\mbb{E} \left[\exp(\lambda^{*}\bm{d} - \lambda^{*}d(s,\wi{\mbb{S}}))\right]}, \;\;  s \in \Sscr, \label{eq:d-tilted-info}
\end{align}
where the expectation is with respect to the unconditional distribution $P_{\wi{\mbb{S}}^{*}}$ that achieves the minimum in \eqref{eq:rate-dist-func} and  $\lambda^{*} = -R'(\bm{d})$. To define the $\bm{d}$-tilted information for the vector valued random variables $(S,\wi{S})$, we define the following.
\begin{align}
  R_{S}(\bm{d}) = \inf_{P_{\wi{S}|S} : \mbb{E}[d(S,\wi{S})] \leq \bm{d}} I(S;\wi{S}), \label{eq:vec-rate-dist}
\end{align}
where $d(s,\wi{s})$ is the distortion function defined earlier. Further, the $\bm{d}$-tilted information for random variables $S$ is defined as,
\begin{align}
j_{S}(s,\bm{d}) := \log \frac{1}{\mbb{E}\left[\exp(\lambda^{*}\bm{d} - \lambda^{*}d(s,\wi{S}))\right]},  
\end{align}
where the expectation is with respect to the distribution $P_{\wi{S}^*}$ that achieves the infimum in the \eqref{eq:vec-rate-dist} and  $\lambda^{*} = -R_{S}'(\bm{d})$. Further discussion on the $\bm{d}$-tilted information can be found in  \cite{kostina2012fixed}.

We define the following set of capacity achieving distributions
\begin{align}
  \Pi_{\Theta} &= \left\{q_{\Theta} \in \Pscr(\Tscr) : \max_{P_{\mbb{X}}} I(\mbb{X};\mbb{Y}_{q_{\Theta}}) = C \right\},  \;\;
    \Pi_{\mbb{X}} = \left\{ P_{\mbb{X}} \in \Pscr(\Xscr) : \min_{q_{\Theta}} I(\mbb{X};\mbb{Y}_{q_{\Theta}}) = C \right\}.\non
\end{align}
The source and channel dispersions are defined as
 \begin{align}
   V_{{\sf S}} &= {\sf Var}\left(j_{{\sf S}}(\mbb{S},\bm{d})\right), \label{eq:source-disp} \\
   V_{{\sf C}}^{+} &= \min_{P_{\mbb{X}} \in \Pi_{\mbb{X}}} \max_{q_{\Theta} \in  \Pi_{\Theta}}{\sf Var}(i_{\mbb{X};\mbb{Y}_{q_{\Theta}}}(\mbb{X};\mbb{Y})),  \;\;\;  V_{{\sf C}}^{-} =  \max_{q_{\Theta} \in  \Pi_{\Theta}} \min_{P_{\mbb{X}} \in \Pi_{\mbb{X}}} {\sf Var}(i_{\mbb{X};\mbb{Y}_{q_{\Theta}}}(\mbb{X};\mbb{Y})), \label{eq:chan-disp}
 \end{align}
 where $\mbb{S}$ is distributed as $P_{\mbb{S}}$ and $(\mbb{X},\mbb{Y})$ are distributed as $P_{\mbb{X}} \times (q_{\Theta}P_{\mbb{Y}|\mbb{X},\Theta})$. For computing asymptotics, we assume that there exists a unique capacity achieving state distribution $q_{\Theta}^{*} \in \Pi_{\Theta}$. In this case, we have that the above defined channel dispersions are equal $V_{{\sf C}}^{-} = V_{{\sf C}}^{+} =: V_{\sf C}$.

\section{Problem Formulation} \label{sec:form}

 Consider a finite family of channels $\mcal{V} := \{P_{\mbb{Y}|\mbb{X},\Theta = \theta} \in \mcal{P}(\mcal{Y}|\mcal{X}), \theta \in \mcal{T} \}$. Let $\Sscr$ be a finite space. Suppose a random source message $S \in \Sscr^k, k \in \Nbb$, generated i.i.d. according a fixed distribution $P_{\mbb{S}} \in \mcal{P}(\mcal{S})$, 
 is to be communicated over this family of channels where a jammer can choose a channel from the set $\mcal{V}$ for every transmission. An encoder encodes the message $S$ into the channel input string $X \in \mcal{X}^{n}, n \in \mbb{N}$ according to a law $\QXS \in \mcal{P}(\mcal{X}^{n}|\mcal{S}^{k})$. The channel output string $Y \in \mcal{Y}^{n}$ is decoded by the decoder to $\wi{S} \in \mcal{S}^{k}$ according to a law $\QSY \in \mcal{P}(\wi{\mcal{S}}^{k}|\mcal{Y}^{n})$. Together $(\QXS,\QSY)$ is termed as a stochastic code. An error is said to occur if the distortion between the decoded sequence and the source sequence exceeds a predefined level $\bm{d}$, that is, when  $d(S,\wi{S}) := \frac{1}{k} \sum_{i=1}^{k}d_{\sf S}(\mbb{S}_{i},\wi{\mbb{S}}_{i})  > \bm{d} $, where $d_{\sf S}$ is the distortion function defined in Section II-C and $\bm{d} \in [0,\infty)$ is the maximum allowable distortion level. A jammer selects the channels used for transmission by choosing a random state sequence $\bm{\Theta} \in \mcal{T}^{n}$ distributed according to $q \in \mcal{P}(\mcal{T}^{n})$;  $\Theta \in \Tscr$ denotes the single-letterized random variable.  We assume that the encoder and decoder do not know the actions of the jammer and that the jammer also does not have any information about the actions of the encoder and decoder or the source message. 

We assume the channel behaviour is memoryless. Thus, the resulting channel is given by the equation $  P_{Y|X,\bm{\Theta}}(y|x,\theta) = \prod_{i = 1}^{n} P_{\mbb{Y}|\mbb{X},\Theta}(y_{i}|x_{i},\theta_{i})$, which governs the probability of receiving the output sequence $y = (y_{1},\hdots,y_{n})$ when the input sequence is $x = (x_{1},\hdots,x_{n})$ and the state sequence is $\theta = (\theta_{1},\hdots,\theta_{n})$. The rate of communication in this setting is defined as $R = \frac{k}{n}$.

The probability of error is given as 
\begin{align}
 \mbb{P}(d(S,\wi{S}) > \bm{d}) =  &\sum_{s,x,y,\wi{s},\theta}  \mbb{I}\{d(s,\wi{s}) > \bm{d}\}q(\theta) P_S(s)\QXS(x|s)   P_{Y|X,\bm{\Theta}}(y|x,\theta)\QSY(\wi{s}|y). \label{eq:prob-err}
\end{align}
We assume that the encoder and decoder aim to minimize the probability of error by choosing  stochastic codes $(\QXS,\QSY)$ while the jammer tries to maximize it by choosing the distribution $q$. Thus,  for every pair of $(k,n)$, we have a zero-sum game between the encoder-decoder team and the jammer with the probability of error as the payoff. The relevant background on zero-sum games can be found in \cite[Ch. 4]{maschler2013game}.


The minimax or the \textit{upper value} of the game is given by 
$$\problemsmallabb{$\overline{\nu}(k,n)$ }
	{ \QXS, \QSY}
	{\displaystyle  \max_{q} \;\;\mbb{P}(d(S,\wi{S}) > \bm{d}) }
				 {\hspace{-2mm}\begin{array}{r@{\ }c@{\ }l}
				 \QXS \in \mcal{P}(\mcal{X}^{n}|\mcal{S}^{k}), \QSY \in \mcal{P}(\mcal{S}^{k}|\mcal{Y}^{n}), q \in \mcal{P}(\mcal{T}^{n}),
	\end{array}}
$$ 
and the maximin or the \textit{lower value} of the game is given by 
$$\maxproblemsmallbb{$\underline{\nu}(k,n)$ }
	{q}
	{\displaystyle  \min_{ \QXS, \QSY} \mbb{P}(d(S,\wi{S}) > \bm{d}) }
				 {\hspace{-2mm}\begin{array}{r@{\ }c@{\ }l}
				 \QXS \in \mcal{P}(\mcal{X}^{n}|\mcal{S}^{k}), \QSY \in \mcal{P}(\mcal{S}^{k}|\mcal{Y}^{n}), q \in \mcal{P}(\mcal{T}^{n}).
	\end{array}}
$$ 
Clearly, we have that $\overline{\nu}(k,n) \geq \underline{\nu}(k,n)$.


 It can be observed that the minimax problem is a lossy joint source-channel coding problem over an AVC with stochastic codes, since the encoder and decoder search for stochastic codes which minimize the worst case probability of error. Further, it is optimal for the jammer to pick a deterministic sequence of states since the probability of error given by equation \eqref{eq:prob-err} is linear in the distribution $q$.
In a joint source-channel coding problem over a DMC \textit{without} a jammer, asymptotically vanishing probability of error can be achieved for rates below $\frac{C'}{ R(\bm{d})}$, where $C'$ is the capacity of the DMC and $ R(\bm{d})$ is the rate distortion function. Further, the probability of error goes to one for rates above $\frac{C'}{ R(\bm{d})}$ \cite{shannon1959coding}. In this paper, we show that the above result extends to this game, where $\overline{\nu}(k,n)$ and $\underline{\nu}(k,n)$ approach each other as $k,n \rightarrow \infty$, and the value they approach depends on the asymptotic value of the rate $\frac{k}{n}$.


The main results of this paper are the following minimax theorems. Both upper and lower values tend to zero if $  \lim_{k,n \rightarrow \infty} \frac{k}{n} < \frac{C}{ R(\bm{d})}$ as shown in the following result.
\begin{theorem} \label{thm:minimax-tends-0} 
 Consider a sequence $(k,n)$ such that $\lim_{k,n \rightarrow \infty}\frac{k}{n} < \frac{C}{R(\bm{d})}$. 
 Then,
  \begin{align}
  \lim_{k,n \rightarrow \infty} \underline{\nu}(k,n)  =   \lim_{k,n \rightarrow \infty}  \overline{\nu}(k,n) = 0.  
  \end{align}
\end{theorem}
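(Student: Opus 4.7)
The plan is to combine the minimax inequality with a classical separation-based achievability scheme. Since the excerpt already notes $\underline{\nu}(k,n) \le \overline{\nu}(k,n)$, it suffices to exhibit a sequence of stochastic codes witnessing $\overline{\nu}(k,n) \to 0$. Observe also that, for any fixed stochastic code, the probability of error in \eqref{eq:prob-err} is linear in the jammer's distribution $q$; the worst-case $q$ can therefore be taken to be a point mass on a deterministic state sequence, so it is enough to control the maximum probability of error over $\theta \in \mcal{T}^{n}$.

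By hypothesis, I can fix $\epsilon > 0$ such that $\frac{k}{n}\bigl(R(\bm{d}) + \epsilon\bigr) < C - \epsilon$ for all sufficiently large $(k,n)$. First, classical rate-distortion theory yields a deterministic source code $(f_{\sf S}, \varphi_{\sf S})$ with $M \le \exp\bigl(k(R(\bm{d}) + \epsilon)\bigr)$ codewords and excess-distortion probability $\mbb{P}\bigl(d(S, \varphi_{\sf S}(f_{\sf S}(S))) > \bm{d}\bigr) \le \delta_k \to 0$. Since the induced channel coding rate $\frac{\log M}{n}$ lies strictly below $C$, the capacity expression \eqref{eq:avc-cap} together with Ahlswede's elimination technique \cite{ahlswede1978elimination} (applicable under the non-symmetrizability assumption in force) delivers a stochastic channel code $(Q_{X|W}, Q_{\wi{W}|Y})$ whose maximum probability of error satisfies $e(Q_{X|W}, Q_{\wi{W}|Y}) \le \gamma_n \to 0$.

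These are composed into a joint stochastic code by setting
\begin{align*}
\QXS(x|s) &= Q_{X|W}\bigl(x \,\big|\, f_{\sf S}(s)\bigr), \\
\QSY(\wi{s}|y) &= \sum_{w} Q_{\wi{W}|Y}(w|y)\,\mbb{I}\{\varphi_{\sf S}(w) = \wi{s}\}.
\end{align*}
A union bound over the channel decoding event and the source excess-distortion event gives, uniformly in $\theta$, $\mbb{P}(d(S,\wi{S}) > \bm{d}) \le \delta_k + \gamma_n$. Maximizing over $\theta$ (equivalently, over $q$, by linearity) and letting $k,n \to \infty$ yields $\overline{\nu}(k,n) \to 0$, and the minimax inequality then forces $\underline{\nu}(k,n) \to 0$ as well.

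The main obstacle here is essentially bookkeeping rather than any hard new estimate: one must invoke the \emph{maximum-error} version of the AVC coding theorem so that the jammer's action is controlled uniformly, and one must use a source code that bounds the excess-distortion probability rather than merely the expected distortion. Both ingredients are classical. This separation-based analysis is, however, too loose to capture the second-order regime addressed later in the paper, where a genuine joint source-channel construction built along the lines of \cite{csiszar2011information} and \cite{kostina2013lossy} is required.
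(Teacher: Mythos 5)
Your proposal is correct and proves the theorem, but it proceeds by a genuinely different route from the paper's. You invoke a classical \emph{separation} scheme: first an excess-distortion source code of size $M \le \exp\bigl(k(R(\bm{d})+\epsilon)\bigr)$, then a black-box stochastic channel code for the AVC under the maximum-error criterion (citing the stochastic-code capacity formula \eqref{eq:avc-cap} and Ahlswede's elimination technique, which the paper itself accepts in Section~II-B), and you compose the two with a union bound. The observation that the jammer's best response is a point mass (by linearity of \eqref{eq:prob-err} in $q$) is also the right reduction. This is a clean, essentially classical argument, and it is sufficient for the first-order Theorems~\ref{thm:minimax-tends-0} and~\ref{thm:minimax-tends-1}. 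The paper, by contrast, proves Theorem~\ref{thm:minimax-tends-0} by taking the limit in the finite-blocklength bound of Theorem~\ref{thm:upper-val-bnd-final}: the underlying achievability is a \emph{joint} source-channel construction (Sections~V-A through V-C) in which a random joint code is first reduced to a uniform mixture over $K$ deterministic joint codes via Lemma~\ref{lem:ran-code-red}, and the index $\bm{i}$ is transported across the channel by the Kosut--Kliewer deterministic AVC code of Theorem~\ref{thm:kosut-kliewer-achieve}. The paper's proof of the present theorem is therefore just ``send the rate below threshold into the ${\sf Q}$-function and observe that the remainder terms in \eqref{eq:upper-val-bnd-final} vanish.'' The trade-off is exactly the one you flag at the end: your separation scheme is simpler and fully adequate for the first-order dichotomy, but it loses the joint dispersion $\sqrt{nV_{{\sf C}} + kV_{{\sf S}}(\bm{d})}$ and so cannot be reused for the second-order Theorem~\ref{thm:minimax-para-L}; the paper pays the price of the more intricate joint construction once, so that a single bound serves all three asymptotic regimes. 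One small caveat worth being explicit about if you pursue this route: the ``stochastic-code capacity for maximum error'' that you treat as a citation is itself established by precisely the elimination-plus-index-transmission device the paper unpacks, so your argument is not avoiding that machinery so much as encapsulating it inside the channel-coding black box.
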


When $\lim_{k,n \rightarrow \infty} \frac{k}{n} > \frac{C}{R(\bm{d})}$, the upper and lower values tend to unity as shown in the following result.
\begin{theorem} \label{thm:minimax-tends-1} 
 Consider a sequence $(k,n)$ such that $\lim_{k,n \rightarrow \infty}\frac{k}{ n} > \frac{C}{R(\bm{d})}$. 
 Then,
  \begin{align}
  \lim_{k,n \rightarrow \infty} \underline{\nu}(k,n)  =   \lim_{k,n \rightarrow \infty}  \overline{\nu}(k,n) = 1.  
  \end{align}
\end{theorem}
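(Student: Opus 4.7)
The plan is to establish the stronger statement $\underline{\nu}(k,n) \to 1$, since the sandwich $\underline{\nu}(k,n) \leq \overline{\nu}(k,n) \leq 1$ then automatically forces $\overline{\nu}(k,n) \to 1$ as well. To lower bound $\underline{\nu}(k,n)$, I would exhibit a single explicit jammer strategy and invoke a strong converse against it. The natural choice is for the jammer to play an i.i.d.\ state sequence with single-letter marginal $q_{\Theta}^{*}$, the (assumed unique) capacity-achieving state distribution in $\Pi_{\Theta}$; that is, take $q = (q_{\Theta}^{*})^{\otimes n}$. Under this fixed $q$, the channel seen by the encoder-decoder team collapses to a genuine memoryless DMC with transition kernel $(q_{\Theta}^{*} P_{\mbb{Y}|\mbb{X},\Theta})$, whose Shannon capacity equals $C$ by the minimax characterization \eqref{eq:avc-cap}.

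The remaining task is then a strong converse for lossy joint source-channel coding over a DMC of capacity $C$ using stochastic codes: whenever $k/n > C/R(\bm{d})$, no choice of $(\QXS,\QSY)$ can keep $\mbb{P}(d(S,\wi{S}) > \bm{d})$ bounded away from $1$. I would obtain this either by combining a rate-distortion version of Fano's lemma with the strong converse to Shannon's channel coding theorem, or, more directly for the paper's framework, by invoking the finite blocklength lower bound on $\underline{\nu}(k,n)$ developed in Section~IV (the same bound that underpins the refined ``second order'' minimax theorem). Specialized to the memoryless jammer strategy $(q_{\Theta}^{*})^{\otimes n}$ and combined with the asymptotic hypothesis $k/n > C/R(\bm{d})$, that bound forces $\underline{\nu}(k,n) \to 1$ with an explicit rate.

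The main obstacle is the strong converse itself. Because the code is only locally randomized, private randomness at the encoder and decoder cannot be absorbed into shared randomness via an averaging argument, so one must ensure the converse applies to \emph{stochastic} codes and not just to random codes with common randomness. A clean route is the LP-relaxation framework of \cite{jose2017linear} already cited in the paper: the minimum error probability over stochastic codes is rewritten as an optimum over a tractable family of joint distributions, and its optimal value is then controlled by $C$ and $R(\bm{d})$ through standard information-theoretic inequalities. Once this converse is in place, convergence of $\underline{\nu}(k,n)$ to $1$ along any sequence with $\liminf k/n > C/R(\bm{d})$ is immediate, and Theorem~\ref{thm:minimax-tends-1} follows by the sandwich.
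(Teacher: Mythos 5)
Your core plan — fix the jammer at the i.i.d.\ strategy $q=(q_{\Theta}^{*})^{\otimes n}$, invoke the Section~IV lower bound on $\underline{\nu}(k,n)$, and finish via the sandwich $\underline{\nu}\le\overline{\nu}\le 1$ — is exactly what the paper does: its proof appeals to Theorem~\ref{thm:conv-bnd-asymp}, whose own proof takes precisely $q=(q_{\Theta}^{*})^{\otimes n}$, and for $k/n$ bounded above $C/R(\bm{d})$ the resulting Gaussian term ${\sf Q}\bigl(\sqrt{n}(C-(k/n)R(\bm{d})+o(1))/\sqrt{V_{\sf C}+\cdots}\bigr)\to 1$ while the remaining error terms vanish. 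Two small points in your reasoning are worth correcting, though neither sinks the argument. First, the ``main obstacle'' you identify is stated backwards: stochastic (locally randomized) codes form a strict subclass of random codes with shared randomness, so any converse valid against random codes applies \emph{a fortiori} to stochastic codes; it is the achievability direction where the local-versus-shared randomness distinction is a genuine hurdle in this paper, not the converse. The role of the LP dual construction in Section~IV is to give a one-shot lower bound that survives the outer max over $q$ and yields dispersion asymptotics, not to rescue a converse that would otherwise fail for stochastic codes. Second, your alternative route of ``Fano plus the channel-coding strong converse'' would not straightforwardly deliver the needed strong converse for lossy JSCC — Fano only gives the weak converse — but since you ultimately settle on the LP route this does not affect the final argument. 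The paper also remarks after the proof that for this theorem (unlike the second-order Theorem~\ref{thm:minimax-para-L}) dispersion bounds and uniqueness of $q_{\Theta}^{*}$ are not actually required.
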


We then consider a finer regime. The sequence $(k,n)$ is such that $\frac{k}{n} \rightarrow \frac{C}{R(\bm{d})}$ along a specific scaling. In particular, the sequence is parameterized by $\rho \in \mbb{R}$ given as, 
\begin{align}
  \frac{k}{n} = \frac{C}{R(\bm{d})} + \frac{\rho}{\sqrt{n}}. \label{eq:k-n-seq-L}
\end{align}
For the above sequence, we have the following result.
\begin{theorem}\label{thm:minimax-para-L}
Let $V_{\sf C}^{+} = V_{\sf C}^{-} = V_{\sf C}$.  Then, for the sequence $(k,n)$ chosen as \eqref{eq:k-n-seq-L}, we have
  \begin{align}
    \lim_{k,n \rightarrow \infty} \underline{\nu}(k,n) =     \lim_{k,n \rightarrow \infty} \overline{\nu}(k,n) = {\sf Q} \left(\frac{-\rho R(\bm{d})}{\sqrt{V_{{\sf C}} + \frac{C}{R(\bm{d})} V_{{\sf S}}(\bm{d})}} \right).
  \end{align}
\end{theorem}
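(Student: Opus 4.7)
The plan is to sandwich both $\underline{\nu}(k,n)$ and $\overline{\nu}(k,n)$ between a common Gaussian limit by establishing matching non-asymptotic bounds in the spirit of Polyanskiy--Poor--Verdú and Kostina--Verdú, and then performing a Berry--Esseen style analysis under the scaling \eqref{eq:k-n-seq-L}. Since $\overline{\nu}(k,n) \geq \underline{\nu}(k,n)$ always, it suffices to prove an upper bound on $\overline{\nu}(k,n)$ and a lower bound on $\underline{\nu}(k,n)$ that both converge to the same ${\sf Q}$-function. The result would then follow by the squeeze theorem.

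First I would establish a finite-blocklength achievability bound for the upper value. Fixing any capacity-achieving input distribution $P_{\mbb{X}}^{*} \in \Pi_{\mbb{X}}$, the strategy is to construct a stochastic joint source-channel code by concatenating a good lossy source code (achieving the rate-distortion function) with the deterministic average-error AVC code of \cite{kosut2018finite}, using a reduced random code obtained by derandomizing via a local random index at the encoder; this is the scheme indicated in the intro, and it yields a stochastic code of the type we allow. Combining finite-blocklength achievability results for joint source-channel coding from \cite{kostina2013lossy} with the AVC code from \cite{kosut2018finite}, this would produce a bound of the form
\begin{align*}
\overline{\nu}(k,n) \;\leq\; \mbb{P}\!\left(\, k\, j_{\mbb{S}}(\mbb{S},\bm{d}) \;\geq\; n\, i_{\mbb{X};\mbb{Y}_{q_{\Theta}^{*}}}(\mbb{X};\mbb{Y}) - \gamma_{n}\right) + o(1),
\end{align*}
where $\gamma_{n} = O(\log n)$ and the probability is under the product law $P_{\mbb{S}}^{k} \times P_{\mbb{X}}^{*n} \times (q_{\Theta}^{*}P_{\mbb{Y}|\mbb{X},\Theta})^{n}$.

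Next I would establish a finite-blocklength converse for the lower value. Here I would take the specific jammer strategy $q = (q_{\Theta}^{*})^{n}$ (i.e. i.i.d. across time from the capacity-achieving state distribution). This reduces the inner minimization to a joint source-channel coding problem over a fixed DMC $(q_{\Theta}^{*} P_{\mbb{Y}|\mbb{X},\Theta})$. Applying the linear-programming relaxation framework of \cite{jose2017linear}, which was already noted to yield tight converses in the point-to-point setting, together with the Kostina--Verdú converse for lossy JSCC, I would obtain
\begin{align*}
\underline{\nu}(k,n) \;\geq\; \mbb{P}\!\left(\, k\, j_{\mbb{S}}(\mbb{S},\bm{d}) \;\geq\; n\, i_{\mbb{X};\mbb{Y}_{q_{\Theta}^{*}}}(\mbb{X};\mbb{Y}) + \gamma_{n}'\right) - o(1),
\end{align*}
with $\gamma_{n}' = O(\log n)$ and the probability computed under the same product law (the maximum over $P_{\mbb{X}}$ and minimum over $q_{\Theta}$ coinciding at the unique saddle $(P_{\mbb{X}}^{*}, q_{\Theta}^{*})$ by the capacity formula \eqref{eq:avc-cap} and the assumption $V_{\sf C}^{+}=V_{\sf C}^{-}$).

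Finally, for the asymptotic step, I would apply Berry--Esseen to the normalized sum
$Z_{k,n} := n\, i_{\mbb{X};\mbb{Y}_{q_{\Theta}^{*}}}(\mbb{X};\mbb{Y}) - k\, j_{\mbb{S}}(\mbb{S},\bm{d})$, which under the product law has mean $nC - kR(\bm{d})$ and variance $nV_{\sf C} + kV_{\sf S}$. Substituting $k/n = C/R(\bm{d}) + \rho/\sqrt{n}$ gives mean $-\rho R(\bm{d})\,\sqrt{n} + o(\sqrt{n})$ and variance $n(V_{\sf C} + (C/R(\bm{d}))V_{\sf S}) + o(n)$, so that after rescaling the event $\{Z_{k,n} \leq \pm \gamma_{n}\}$ normalizes to the interval $(-\infty, \rho R(\bm{d})/\sqrt{V_{\sf C} + (C/R(\bm{d}))V_{\sf S}}\,]$, and the $O(\log n)/\sqrt{n}$ correction from $\gamma_{n}, \gamma_{n}'$ vanishes. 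Both bounds thus converge to ${\sf Q}\!\left(-\rho R(\bm{d})/\sqrt{V_{\sf C} + (C/R(\bm{d}))V_{\sf S}(\bm{d})}\right)$, proving the theorem.

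The main obstacle I anticipate is the converse: the lower value involves a minimization over stochastic codes, which is non-convex, so one cannot simply invoke a classical meta-converse. Carefully adapting the LP relaxation of \cite{jose2017linear} to the lossy source-channel setting and ensuring that the relaxation gap is $o(1)$ at the dispersion scale (and not just at first order) is the delicate step. A secondary difficulty is the achievability: avoiding shared randomness forces the use of the stochastic-code construction combining a source code with Kosut--Kliewer's deterministic average-error AVC code, and the finite-blocklength bookkeeping must keep the $O(\log n)$ penalty small enough to be absorbed in the $1/\sqrt{n}$ dispersion term.
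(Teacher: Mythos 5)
Your proposal takes essentially the same route as the paper: it upper-bounds $\overline{\nu}(k,n)$ via a stochastic code built from a random-code-reduction argument, a deterministic average-error AVC code (Kosut--Kliewer) to transmit the random index, and a Kostina--Verd\'u-style joint source-channel achievability bound; it lower-bounds $\underline{\nu}(k,n)$ via the LP relaxation of \cite{jose2017linear} with the jammer set to $q=(q_\Theta^*)^n$; and it applies Berry--Esseen under the scaling \eqref{eq:k-n-seq-L}. This is exactly the architecture of Theorems \ref{thm:upper-val-bnd-final} and \ref{thm:conv-bnd-asymp} combined with the squeeze on $\overline\nu\ge\underline\nu$, which is how the paper derives Theorem \ref{thm:minimax-para-L}.

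One precision worth flagging in your achievability sketch: the phrase ``concatenating a good lossy source code with the deterministic average-error AVC code'' reads as a separation scheme, which the paper explicitly notes is \emph{not} sharp enough for the second-order result --- the joint dispersion $\sqrt{nV_{\sf C}+kV_{\sf S}}$ requires the Kostina--Verd\'u construction in which the channel decoder uses a MAP rule weighted by the source-codebook-induced prior $P_{U\mid\widehat S^M}$. The role of the Kosut--Kliewer code is only to transmit the $\log K$ bits indexing the reduced random JSCC code over $d_n = O(\log n)$ extra channel uses, not to carry the message itself. You do cite \cite{kostina2013lossy} and arrive at the correct dispersion, so the underlying intent is right, but the wording should make the joint (non-separated) nature of the main JSCC code explicit, and should state that the deterministic AVC code carries only the random index.

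Your identification of the obstacles is also accurate: the paper handles the LP-relaxation gap via the explicit dual feasible point in Theorem \ref{thm:maximin-val-bnd} (whose dual cost is already a $\sqrt{n}$-tight meta-converse-style expression), and absorbs the $O(\log n)$ penalties $\Gamma(k),\gamma(n)$ and the index-transmission overhead into vanishing $1/\sqrt{n}$ corrections, exactly as you anticipate.
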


The above result states that the upper and lower values of the game tend to an intermediate value between zero and unity, unlike in Theorem \ref{thm:minimax-tends-0} and Theorem \ref{thm:minimax-tends-1}. This non-extremal value is achieved when we take a sequence with the limit as $\frac{C}{R(\bm{d})}$, which is the threshold of reliable communication in the joint source-channel setting. This insight into the finer asymptotics is possible only due to the higher order dispersion bounds that we derive in the latter sections

\section{Lower Bound on the Maximin Value}

We proceed to derive the aforementioned results by computing finite blocklength bounds for $\overline{\nu}(k,n)$ and $\underline{\nu}(k,n)$. In this section, we derive a lower bound on $\underline{\nu}(k,n) $, by relaxing the inner minimization over $(\QXS, \QSY)$ in the maximin problem.
For each $q \in \mcal{P}(\mcal{T}^n)$, the minimization can be written as 
$$\problemsmallawb{SC($q$) }
        { \QXS,\QSY}
	{\displaystyle \sum_{s,x,y,\wi{s}}  \mbb{I}\{d(s,\wi{s}) > \bm{d}\} Q(s,x,y,\wi{s}) }
				 {\begin{array}{r@{\ }c@{\ }l}
                                     Q(s,x,y,\wi{s}) &\equiv& P_{S}(s) \QXS(x|s) P_{Y_{q}|X}(y|x) \\
&& \times \QSY(\wi{s}|y), \\ 
				 \sum_{x} \QXS(x|s) &=& 1 \quad  \forall \; s, \\
 \sum_{\wi{s}} \QSY(\wi{s}|y) &=& 1  \quad \forall \; y, \\
                                    \QXS(x|s),                                    \QSY(\wi{s}|y) &\geq& 0 \quad  \forall \;s,x,y,\wi{s},	\end{array}}
 $$
where $ P_{Y_{q}|X}(y|x) := \sum_{\theta \in \Tscr^n}q(\theta) P_{Y|X,\bm{\Theta}}(y|x,\theta)$.

The above problem is non-convex in the space of the distributions $(\QXS, \QSY)$ \cite{kulkarni2015optimizer}. A particular line of approach for such problems is to derive a convex relaxation by containing the non-convex feasible region within a convex set. We consider a linear programming relaxation presented in \cite{jose2017linear} derived by a lift-and-project like method.

In this method, we linearize the objective and the constraints by replacing the bi-product terms \newline $\QXS(x|s) \QSY(\wi{s}|y)$ with an auxiliary variable $V(s,x,y,\wi{s})$. Further, we append constraints which imply that the variable $V(s,x,y,\wi{s}) \equiv \QXS(x|s) \QSY(\wi{s}|y)$. Finally, to derive a relaxation of the problem SC($q$), we drop the requirement that $V(s,x,y,\wi{s}) \equiv \QXS(x|s) \QSY(\wi{s}|y)$. Effectively, we approximate the non-convex feasible region of the problem SC($q$) by a polytope, thereby \textit{lifting} the problem into a higher dimensional space. This exercise results in a linear program which is given as 
$$ \problemsmallawb{LP($q$)}
	{ \begin{subarray}{1}
           \QXS,\QSY,\\
           \qquad V
         \end{subarray}}
	{\displaystyle \sum_{s,x,y,\wi{s}} \mbb{I}\{d(s,\wi{s}) > \bm{d}\} \bar{Q}(s,x,y,\wi{s})}
        {\begin{array}{r@{\ }c@{\ }l}
           \bar{Q}(s,x,y,\wi{s}) &\equiv& P_{S}(s)P_{Y_{q}|X}(y|x) \\
           && \quad \times V(s,x,y,\wi{s}) \\
        \sum_{x} \QXS(x|s)&=& 1 \hspace{0.05cm} : \gamma^{{\sf S}}_{q}(s) \qquad  \hspace{0.08cm} \forall s,\\
    \sum_{\wi{s}} \QSY(\wi{s}|y) &=&1 \hspace{0.05cm} :\gamma^{{\sf C}}_{q}(y) \qquad \hspace{0.08cm}\forall y,\\
 \sum_{x} V(s,x,y,\wi{s})-\QSY(\wi{s}|y)&=&0  \hspace{0.05cm} :\lambda^{{\sf S}}_{q}(s,\wi{s},y)  \hspace{0.0cm}  \forall s,\wi{s},y,\\ 
 \sum_{\wi{s}}V(s,x,y,\wi{s})-\QXS(x|s)&=&0  \hspace{0.05cm} :\lambda^{{\sf C}}_{q}(x,s,y)\hspace{0.0cm} \forall x,s,y,\\
\QXS(x|s), \QSY(\wi{s}|y)  &\geq& 0   \quad \hspace{1cm}\forall s,x,y,\wi{s}, \\
 V(s,x,y,\wi{s})&\geq& 0 \hspace{0.05cm}   \qquad  \hspace{0.6cm} \forall s,x,y,\wi{s},
         \end{array}}  $$
       where the functions  $\gamma^{{\sf S}}_{q}:\mcal{S}^{k} \rightarrow \Real$, $\gamma^{{\sf C}}_{q}:\mcal{Y}^{n} \rightarrow \Real$, $\lambda^{{\sf S}}_{q}:\mcal{S}^{k} \times \mcal{S}^{k} \times \mcal{Y}^{n} \rightarrow \Real$ and  $\lambda^{{\sf C}}_{q}: \mcal{S}^{k} \times \mcal{X}^{n} \times \mcal{Y}^{n} \rightarrow \Real$  are Lagrange multipliers. 
       
       Any feasible point of the problem LP($q$) is given by the tuple $(\QXS,\QSY,V)$ and the corresponding feasible point of SC($q$) is derived by projecting this point onto the space of $(\QXS,\QSY)$. Successively repeating the exercise results in increasingly tighter convex relations of the original problem. Further details on the lift-and-project method can be found in (\cite{conforti2014integer}, ch. 5).

       Clearly, we have $$\OPT(\SC(q)) \geq \OPT(\LP(q)), \qquad \forall \; q \in \mcal{P}(\mcal{T}^{n}).$$ Now we derive the dual problem of LP($q$) and using weak duality, we bound optimal value of LP($q$) thereby bounding $\OPT$(SC($q$)). The corresponding dual program is given as follows.
$$
\maxproblemsmallwb{DP($q$) \;}
	{\gamma^{{\sf S}}_{q},\gamma^{{\sf C}}_{q},\lambda^{{\sf S}}_{q},\lambda^{{\sf C}}_{q}}
	{\displaystyle \sum_{s}\gamma^{{\sf S}}_{q}(s)+\sum_{y}\gamma^{{\sf C}}_{q}(y)}
				 {\begin{array}{r@{\ }c@{\ }l}
				 				 \gamma^{{\sf S}}_{q}(s)- \sum_y \lambda^{{\sf C}}_{q}(x,s,y)
                                    &\leq& 0 \quad \hspace{1.3cm} \forall x,s, \hspace{0.8cm} \rm{(I)} \\
\gamma^{{\sf C}}_{q}(y)- \sum_s \lambda^{{\sf S}}_{q}(s,\wi{s},y)
                                    &\leq& 0  \quad \hspace{1.3cm} \forall \shat,y, \hspace{0.8cm} \rm{(II)}\\
\lambda^{{\sf S}}_{q}(s,\wi{s},y)+\lambda^{{\sf C}}_{q}(x,s,y)
                                    &\leq& \Pi(s,x,y,\wi{s})\hspace{0.15cm}\forall s,x,y,\wi{s}, \hspace{0.09cm} {\rm (III)}
 	\end{array}}
      $$ where $\Pi(s,x,y,\wi{s}) \equiv \mbb{I}\{d(s,\wi{s}) > \bm{d}\} P_S(s) P_{Y_{q}|X}(y|x)$. From weak duality it follows that  $\OPT({\rm SC }(q)) \geq \OPT({\rm LP}(q)) = \OPT({\rm DP}(q))\; \forall \; q$ and from \cite{jose2018game2} we get 
      \begin{align}
 \underline{\nu}(k,n) =  \max_{q} \OPT({\rm SC}(q)) \geq   \max_{q}  \OPT({\rm LP}(q)) =  \max_{q} \OPT({\rm DP}(q)) \geq \max_{q} \;{\rm FEAS}({\rm DP}(q)), \label{eq:main-ineq}
      \end{align}
where ${\rm FEAS}({\rm DP}(q))$ is the objective function of ${\rm DP}(q)$ evaluated at a feasible point. Thus, to compute a lower bound on the minimax, as well as, the maximin value of the zero-sum game, it is sufficient to derive a feasible solution of the ${\rm DP}(q)$. The following Theorem gives one such construction and computes the dual cost of ${\rm DP}(q)$ for the above feasible solution to get a lower bound for the maximin value.

\begin{theorem} \label{thm:maximin-val-bnd}
  The value $ \underline{\nu}(k,n)$ is lower bounded as
    \begin{align}
\underline{\nu}(k,n) \geq \max_{q}  \OPT({\rm DP}(q)) \geq \max_{q,P_{\overline{Y}_{q}}, {\sf U}} \; \sup_{\gamma > 0}  \left[  \sum_{s}P_{S}(s) \right. & \min_{x} \left[ \vphantom{ \sum_{u=1}^{{\sf U}}} \mbb{P}\left( j_{S}(s,\bm{d}) - i_{X;\overline{Y}_{q}|U}(x;Y|U) \leq \gamma \right)\right. \non \\ 
                    &  + \exp ( j_{S}(s,\bm{d})-\gamma) \sum_{u=1}^{{\sf U}} \sum_{y}P_{U|X}(u|x) P_{\overline{Y}_{q}|U}(y|u) \non \\
&\times \left. \left. \mbb{I}\left\{  j_{S}(s,\bm{d}) - i_{X;\overline{Y}_{q}|U}(x;y|u) > \gamma \right\}  \vphantom{ \sum_{u=1}^{{\sf U}}} \right]- \frac{{\sf U}}{\exp(\gamma)} \right],        \label{eq:dp-bound}
    \end{align}
where $U \in \Uscr := \{1,\hdots,{\sf U}\}$, ${\sf U} \in \Nbb$, $P_{\overline{Y}_{q}} \in \Pscr(\Yscr^{n})$,  $i_{X;\overline{Y}_{q}|U}(x;y|u) =  \log \frac{P_{Y_{q}|X,U}(y|x,u)}{P_{\overline{Y}_{q}|U}(y|u)}$ and $j_{S}(s,\bm{d})$ is the $\bm{d}$-tilted information as defined in Section II-D.
\end{theorem}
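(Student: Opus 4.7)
The plan is to exhibit an explicit feasible point of $\mathrm{DP}(q)$ and then invoke the chain of inequalities in \eqref{eq:main-ineq}, so that evaluating the dual objective at that point immediately yields a lower bound on $\underline\nu(k,n)$. Because the objective $\sum_s\gamma^{\sf S}_q(s)+\sum_y\gamma^{\sf C}_q(y)$ depends only on $\gamma^{\sf S}_q$ and $\gamma^{\sf C}_q$, while constraint (III) is the only one coupling all four multipliers, the natural strategy is first to construct $\lambda^{\sf S}_q$ and $\lambda^{\sf C}_q$ saturating (III) pointwise, and then to set
\[
\gamma^{\sf S}_q(s) \;=\; \min_{x}\sum_y \lambda^{\sf C}_q(x,s,y),\qquad
\gamma^{\sf C}_q(y) \;=\; \min_{\hat s}\sum_s \lambda^{\sf S}_q(s,\hat s,y),
\]
which is the largest value consistent with (I) and (II).

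The form of the right-hand side of \eqref{eq:dp-bound} dictates the ansatz. Guided by the Kostina--Verd\'u joint source--channel converse and by the LP-dual construction of \cite{jose2017linear}, I would split each multiplier into two pieces indexed by the event
\[
\mcal{E}(s,x,u,y) \;=\; \bigl\{\,j_{S}(s,\bm d)-i_{X;\overline{Y}_{q}|U}(x;y|u)\le \gamma\,\bigr\}
\]
and its complement. On $\mcal E$ the source-side multiplier $\lambda^{\sf S}_q$ carries the ``paying'' mass $P_S(s)P_{Y_{q}|X}(y|x)\,\mbb I\{d(s,\hat s)>\bm d\}$, while on $\mcal E^{c}$ the channel-side multiplier $\lambda^{\sf C}_q$ takes over, using the change-of-measure identity $P_{Y_{q}|X,U}(y|x,u)=P_{\overline Y_{q}|U}(y|u)\exp\!\bigl(i_{X;\overline Y_{q}|U}(x;y|u)\bigr)$ and the inequality $j_S(s,\bm d)-i>\gamma$ defining $\mcal E^c$ to bound $P_{Y_q|X}$ by $\exp(j_S(s,\bm d)-\gamma)P_{\overline Y_{q}|U}$ after averaging against the auxiliary $P_{U|X}$ over $u\in\{1,\dots,{\sf U}\}$. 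Summing $\lambda^{\sf S}_q+\lambda^{\sf C}_q$ over this indicator partition then recovers the required upper envelope $\mbb I\{d(s,\hat s)>\bm d\}P_S(s)P_{Y_{q}|X}(y|x)$.

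With (III) satisfied, the next step is to evaluate $\sum_y\lambda^{\sf C}_q(x,s,y)$: the $\mcal E^{c}$ piece produces exactly the term $\exp(j_S(s,\bm d)-\gamma)\sum_{u,y} P_{U|X}(u|x) P_{\overline Y_{q}|U}(y|u)\mbb I\{\mcal E^{c}\}$ appearing in \eqref{eq:dp-bound}, while the $\mcal E$ piece produces the probability $\mbb P(\mcal E\mid X=x)$ under $P_{U|X}P_{Y_{q}|X,U}$. Taking $\min_x$ and weighting by $P_S(s)$ yields the inner bracket of \eqref{eq:dp-bound}. Handling $\sum_s\lambda^{\sf S}_q(s,\hat s,y)$ symmetrically and then minimising over $\hat s$ produces the residual $-{\sf U}/\exp(\gamma)$ correction, which arises because, summed over $s$ against $P_S$, the factor $\exp(j_S(s,\bm d))$ contributes a constant of order ${\sf U}/\exp(\gamma)$ that is forfeited to guarantee feasibility when $\hat s$ falls inside the low-distortion ball. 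Finally taking the supremum over the free parameters $q$, $P_{\overline Y_{q}}$, ${\sf U}$ and $\gamma>0$ gives \eqref{eq:dp-bound}.

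The main obstacle is verifying constraint (III) uniformly in $(s,x,y,\hat s)$: the indicator $\mbb I\{d(s,\hat s)>\bm d\}$ excises a low-distortion ball in $\hat s$, coupling $\lambda^{\sf S}_q$ to $\hat s$, while $\lambda^{\sf C}_q$ depends on $x$, so the partition into $\mcal E$ and $\mcal E^{c}$ must be arranged so that the sum never exceeds $\mbb I\{d(s,\hat s)>\bm d\}P_S(s)P_{Y_{q}|X}(y|x)$ regardless of where $\hat s$ lies. This is precisely where the auxiliary index $U$ (playing a list-decoding role) and the free reference distribution $P_{\overline Y_{q}}$ earn their keep, and where the $-{\sf U}/\exp(\gamma)$ penalty enters as the price of the accompanying slack. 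Once (III) is secured, the rest of the argument is bookkeeping of expectations under the specified change of measure.
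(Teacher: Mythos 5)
Your high-level plan matches the paper's: construct an explicit feasible point of $\mathrm{DP}(q)$, evaluate the dual objective, and invoke \eqref{eq:main-ineq}. You also correctly identify $\gamma^{\sf S}_q(s)=\min_x\sum_y\lambda^{\sf C}_q(x,s,y)$ and the role of the auxiliary index $U$ and the reference output law $P_{\overline Y_q}$.

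However, the proposed decomposition of the $\lambda$'s is ill-posed. You suggest that, on the event $\mcal E(s,x,u,y)=\{j_S(s,\bm d)-i_{X;\overline Y_q|U}(x;y|u)\le\gamma\}$, the multiplier $\lambda^{\sf S}_q$ should ``carry the paying mass $P_S(s)P_{Y_q|X}(y|x)\,\mbb I\{d(s,\hat s)>\bm d\}$.'' But $\lambda^{\sf S}_q$ is a function of $(s,\hat s,y)$ only --- it cannot depend on $x$ (so it cannot carry $P_{Y_q|X}(y|x)$) nor on $u$ (so it cannot be restricted to $\mcal E$, which depends on $x$ and $u$). For the same reason the claim that the pair $(\lambda^{\sf S}_q,\lambda^{\sf C}_q)$ can ``saturate (III) pointwise'' cannot hold: when $d(s,\hat s)\le\bm d$, saturation would force $\lambda^{\sf S}_q(s,\hat s,y)=-\lambda^{\sf C}_q(x,s,y)$ for all $x$, which is impossible unless $\lambda^{\sf C}_q$ is constant in $x$. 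In the paper's construction, $\lambda^{\sf S}_q$ is a nonpositive correction, active only on the low-distortion ball $\{d(s,\hat s)\le\bm d\}$ and independent of $x$, while \emph{all} of the ``paying'' mass goes through $\lambda^{\sf C}_q(x,s,y)=P_S(s)\sum_u P_{U|X}(u|x)\min\{P_{Y_q|X,U}(y|x,u),\ P_{\overline Y_q|U}(y|u)\exp(j_S(s,\bm d)-\gamma)\}$; the $\min$ (taken inside the sum over $u$) is what implements the $\mcal E/\mcal E^c$ partition, and it automatically yields both feasibility against (III) and the two-part expression in the dual cost. You also have the two branches of the $\min$ assigned to the wrong events: on $\mcal E$ (where $P_{Y_q|X,U}\ge P_{\overline Y_q|U}\exp(j_S-\gamma)$) the minimum is the change-of-measure term $P_{\overline Y_q|U}\exp(j_S-\gamma)$, not $P_{Y_q|X,U}$, and vice versa on $\mcal E^c$. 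Finally, the paper does not take $\gamma^{\sf C}_q(y)=\min_{\hat s}\sum_s\lambda^{\sf S}_q(s,\hat s,y)$ (which, while a valid feasible choice, does not reduce cleanly); it uses the simpler feasible value $\gamma^{\sf C}_q(y)=-\exp(-\gamma)\sum_u P_{\overline Y_q|U}(y|u)$, which immediately gives $\sum_y\gamma^{\sf C}_q(y)=-{\sf U}/\exp(\gamma)$, with feasibility against (II) resting on the $\bm d$-tilted information identity $\sum_s\mbb I\{d(s,\hat s)\le\bm d\}P_S(s)\exp(j_S(s,\bm d))\le 1$. Without these corrections, the sketch does not actually produce a feasible point of $\mathrm{DP}(q)$.
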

    \begin{proof}
Define a random variable $U$ taking values in $\Uscr := \{1,\hdots,{\sf U}\}$ such that 
\begin{align}
P_{Y_{q}|X}(y|x) = \sum_{u=1}^{{\sf U}} P_{U|X}(u|x)P_{Y_{q}|X,U}(y|x,u).
\end{align}
Consider the following dual variables for ${\rm DP}(q)$
    \begin{align}
      \lambda^{{\sf S}}_{q}(s,\wi{s},y) &\equiv -\mbb{I}\{d(s,\wi{s}) \leq \bm{d}\}  \frac{P_{S}(s) \sum_{u=1}^{{\sf U}}P_{\overline{Y}_{q}|U}(y|u)}{\exp ( \gamma - j_{S}(s,\bm{d}))}, \non \\
      \lambda^{{\sf C}}_{q}(x,s,y) &\equiv P_{S}(s) \sum_{u=1}^{{\sf U}}P_{U|X}(u|x) \min \left\{P_{Y_{q}|X,U}(y|x,u), \frac{P_{\overline{Y}_{q}|U}(y|u)}{\exp ( \gamma - j_{S}(s,\bm{d}) )}\right\}, \non \\ 
   \gamma^{{\sf S}}_{q}(s) &\equiv \min_{x} \sum_{y} \lambda^{{\sf C}}_{q}(x,s,y), \gamma^{{\sf C}}_{q}(y) \equiv -\exp(-\gamma) \sum_{u=1}^{{\sf U}} P_{\overline{Y}_{q}|U}(y|u),\non 
  \end{align}
where $\gamma > 0$,  $ P_{\overline{Y}_{q}|U}(y|u) :=  \sum_{\theta} q(\theta) P_{\overline{Y}|\bm{\Theta},U}(y|\theta,u)$ and $P_{\overline{Y}|\bm{\Theta},U}$ is any distribution in $\Pscr(\Yscr^n|\Tscr^n,\Uscr).$

From the proof of  Theorem 5.3 in \cite{jose2017linear}, it follows that the above choice of dual variables are feasible for the ${\rm DP}(q)$. Further, the dual cost is given as 
      \begin{align}
        &\sum_{s} \min_{x} \sum_{y} \lambda^{{\sf C}}_{q}(x,s,y) + \sum_{y}  -\exp(-\gamma) \sum_{u=1}^{{\sf U}} P_{\overline{Y}_{q}|U}(y|u) \non \\
        &\geq \sum_{s} \min_{x} \sum_{y}P_{S}(s) \sum_{u=1}^{{\sf U}}P_{U|X}(u|x)  \min \left\{P_{Y_{q}|X,U}(y|x,u), \frac{P_{\overline{Y}_{q}|U}(y|u)}{\exp ( \gamma - j_{S}(s,\bm{d}) )}\right\}  - \frac{\sum_{y} \sum_{u=1}^{{\sf U}} P_{\overline{Y}_{q}|U}(y|u)}{\exp(\gamma)} \non \\ 
        &= \sum_{s} P_{S}(s)\min_{x} \left[ \sum_{u=1}^{{\sf U}}\sum_{y} P_{U|X}(u|x)P_{Y_{q}|X ,U}(y|x,u) \mbb{I}\left\{ \frac{P_{Y_{q}|X,U}(y|x,u)}{P_{\overline{Y}_{q}|U}(y|u)} \leq \frac{\exp( j_{S}(s,\bm{d}))}{ \exp(\gamma) }\right\}\right. \non \\ 
                    &  + \frac{1}{\exp ( \gamma - j_{S}(s,\bm{d}) )} \sum_{u=1}^{{\sf U}} \sum_{y}P_{U|X}(u|x) P_{\overline{Y}_{q}|U}(y|u) \left. \mbb{I}\left\{ \frac{P_{Y_{q}|X,U}(y|x,u)}{P_{\overline{Y}_{q}|U}(y|u)} > \frac{\exp( j_{S}(s,\bm{d}))}{ \exp(\gamma) } \right\} \right]-  \frac{{\sf U}}{\exp(\gamma)}. \non 
        \end{align}
Taking $i_{X;\overline{Y}_{q}|U}(x;y|u) = \log \frac{P_{Y_{q}|X,U}(y|x,u)}{P_{\overline{Y}_{q}|U}(y|u)}$, we get
\begin{align}
          &\sum_{s} \min_{x} \sum_{y} \lambda^{{\sf C}}_{q}(x,s,y) + \sum_{y}  -\exp(-\gamma) \sum_{u=1}^{{\sf U}} P_{\overline{Y}_{q}|U}(y|u) \non \\
        &= \sum_{s} P_{S}(s)\min_{x} \left[ \vphantom{ \sum_{u=1}^{{\sf U}}} \mbb{P}\left( i_{X;\overline{Y}_{q}|U}(x;Y|U) - j_{S}(s,\bm{d}) \leq -\gamma \right)\right.   \non \\
        &+ \exp ( j_{S}(s,\bm{d})-\gamma) \sum_{u=1}^{{\sf U}} \sum_{y}P_{U|X}(u|x) P_{\overline{Y}_{q}|U}(y|u)  \left. \mbb{I}\left\{ i_{X;\overline{Y}_{q}|U}(x;y|u) - j_{S}(s,\bm{d}) > -\gamma \right\}  \vphantom{ \sum_{u=1}^{{\sf U}}} \right]-  \frac{{\sf U}}{\exp(\gamma)}. \non
\end{align}
Taking supremum over $\gamma$ and maximum over ${\sf U}$, $ P_{\overline{Y}_{q}}$ and  $q$ in the above equation, we get the expression on the RHS of \eqref{eq:dp-bound}. The required bound follows from the equation \eqref{eq:main-ineq}. 
    \end{proof}

The linear programming relaxation gives a lower bound on the upper and lower values of the game. We now derive an upper bound by constructing an achievability code for the joint source-channel setup.

\section{Upper Bound on the Minimax Value}

To construct an upper bound on the minimax value $ \overline{\nu}(k,n)$, we construct a stochastic joint source-channel code. Recall that a stochastic joint source-channel code is a pair of distributions $(\QXS, \QSY)$ with $\QXS \in \Pscr(\Xscr^n|\Sscr^k)$ and $\QSY \in \Pscr(\Sscr^k|\Yscr^n)$. 

In this paper, we consider a stochastic code involving a stochastic encoder and a deterministic decoder. To construct the stochastic code, we consider a random joint source-channel code and a deterministic channel code. Using the idea of random code reduction, we derive another random code with uniform distribution over a smaller number of codes. The encoder randomly chooses a code from this ensemble to communicate the source message. Further, the encoder uses the deterministic code to communicate the index across the channel to the decoder. The decoder then decodes the index with the deterministic code and then uses the code corresponding to the index to decode the source message.

To construct the stochastic joint source-channel code, we first consider a deterministic channel code which gives a guarantee on the average probability of error.

\subsection{Deterministic channel code for average probability of error}

In this section, we first consider a deterministic channel code which is independent of the source code. Let $(f_{{\sf c}}, \varphi_{{\sf c}})$ be a channel code as defined in Section II-B. 

We have the following result from Theorem 1 in \cite{kosut2018finite} which provides for the existence of a deterministic code for average probability of error.
\begin{theorem} \label{thm:kosut-kliewer-achieve}
 Let $P_{X} \in \mcal{P}(\mcal{X}^{n})$ and  $Z(x,\bar{x},y) \in \{0,1\}$ be a function such that
  \begin{align}
    Z(x,\bar{x},y) Z(\bar{x},x,y) = 0 \quad \forall \; x \in \mcal{X}^{n}, \bar{x} \in \mcal{X}^{n}, y \in \mcal{Y}^{n} \label{eq:Z-func-cond}
  \end{align}
  and $\mcal{A} \subset \mcal{X}^{n} \times \mcal{Y}^{n}$ is a typical set. Let $(X,\bar{X}, Y) \sim P_{X} \times P_{X}\times P_{Y|X,\bm{\Theta} }$. Then, there exists a deterministic channel code $(f_{{\sf c}}, \varphi_{{\sf c}})$, such that
  \begin{align}
\bar{e}(f_{{\sf c}}, \varphi_{{\sf c}}) 
&\leq \sqrt{\frac{2\ln (3|\mcal{T}|^{n})}{M} } + \min_{P_{X}} \max_{\theta \in \mcal{T}^{n}} \left( \vphantom{\max_{\bar{x}} } \mbb{P}((X,Y) \notin \mcal{A} | \theta ) \right.  +  2 \log 3|\mcal{T}|^{n}\max_{\bar{x} \in \mcal{X}^{n}}   \mbb{P}(Z(X,\bar{x},Y) = 0, (X,Y) \in \mcal{A}  | \theta ) \non \\
&\left. + 2M\log e \; \mbb{P}(Z(X,\bar{X},Y) = 0, (X,Y) \in \mcal{A} | \theta ) \vphantom{\max_{\bar{x}} } \right). \label{eq:kosut-kliewer-achieve}
  \end{align}
\end{theorem}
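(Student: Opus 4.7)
The plan is a random coding argument followed by a derandomization step that must handle the worst case over all $|\mathcal{T}|^n$ jammer state sequences simultaneously. The structure of the bound makes the overall strategy transparent: the three nontrivial summands correspond, respectively, to the probability of atypicality of the transmitted codeword under the channel, to a union bound over a ``typical competitor codeword'' drawn from $P_X$, and to a concentration correction accounting for the worst competitor that actually appears in a randomly drawn codebook.

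First, I would draw a random codebook $F_c = (X_1, \ldots, X_M)$ i.i.d.\ from $P_X$ and use the decoder that, on output $y$, declares $m$ if $(X_m, y) \in \mathcal{A}$ and $Z(X_m, X_{m'}, y) = 1$ for every $m' \neq m$, and declares an error otherwise. The antisymmetry $Z(x, \bar{x}, y)\, Z(\bar{x}, x, y) = 0$ guarantees that at most one message satisfies the rule, so this decoder is well-defined. Next, for each fixed $\theta \in \mathcal{T}^n$, I would compute the expected value of the average error $\bar{e}_\theta(F_c)$ over the codebook. By symmetry it suffices to condition on message 1 being transmitted; the error event decomposes into $\{(X_1,Y) \notin \mathcal{A}\}$ --- contributing the atypicality term --- and $\bigcup_{m'\neq 1}\{(X_1,Y)\in \mathcal{A},\, Z(X_1, X_{m'}, Y) = 0\}$. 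Since the $X_{m'}$ are i.i.d.\ copies of $X$ independent of $(X_1, Y)$, a union bound over the $M-1$ competing messages yields a contribution at most $(M-1)\,\mathbb{P}(Z(X, \bar{X}, Y) = 0,\, (X,Y) \in \mathcal{A} \mid \theta)$, which after absorbing $\log e$ factors matches the last summand in the stated bound.

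Finally --- and this is the crux --- the deterministic code is extracted by showing that $\bar{e}_\theta(F_c)$ concentrates around its mean \emph{uniformly} over $\theta$. Because swapping any single codeword $X_m$ perturbs $\bar{e}_\theta(F_c)$ by at most $O(1/M)$, McDiarmid's inequality yields subgaussian concentration on the scale $\sqrt{\ln(1/\delta)/M}$; taking $\delta = (3|\mathcal{T}|^n)^{-1}$ and union-bounding over the $|\mathcal{T}|^n$ state sequences produces the $\sqrt{2\ln(3|\mathcal{T}|^n)/M}$ factor. The $2\log(3|\mathcal{T}|^n)\,\max_{\bar{x}} \mathbb{P}(Z(X, \bar{x}, Y) = 0,\, (X,Y) \in \mathcal{A} \mid \theta)$ term I would obtain from a separate Chernoff-style estimate showing that for every fixed $\bar{x} \in \mathcal{X}^n$ only at most $\log(3|\mathcal{T}|^n)$ codewords of a randomly drawn codebook can be ``$\bar{x}$-confusable'', again uniformly over $\theta$; expurgating those few bad indices leaves a deterministic codebook meeting the stated bound. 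The principal obstacle is the simultaneous control over all $|\mathcal{T}|^n$ state sequences \emph{without} retaining any common randomness between encoder and decoder, which is exactly why the proof must combine a McDiarmid-type concentration (giving the $\sqrt{\log |\mathcal{T}|^n / M}$ term) with an expurgation that replaces the ``random competitor'' $\bar{X}$ in the mean by the worst fixed $\bar{x}$ (giving the additional $\log|\mathcal{T}|^n$ factor); tracking the constants so that both survive with multipliers $2$ and $2M\log e$ is the main bookkeeping challenge.
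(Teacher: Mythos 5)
The paper does not prove Theorem~\ref{thm:kosut-kliewer-achieve}; it is imported verbatim as Theorem~1 of the cited reference \cite{kosut2018finite}, so there is no internal proof to match your argument against. That said, the overall architecture you sketch --- random coding with the $Z$-based unique decoder, the three-way error decomposition, and then a McDiarmid/Azuma concentration union-bounded over the $|\mcal{T}|^n$ state sequences to derandomize --- is the right shape for a bound of this form, and your observation that the antisymmetry $Z(x,\bar x,y)Z(\bar x,x,y)=0$ forces the decoding rule to be single-valued is correct.

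The gap is in the central concentration step. You assert that swapping a single codeword $X_m$ perturbs $\bar e_\theta(F_{\sf C})$ by $O(1/M)$, and this is false for the untruncated error functional: changing $X_m$ moves $e_{m,\theta}$ by at most $1/M$, but $X_m$ also appears as a \emph{competitor} in the decoding test $Z(X_{m'},X_m,y)=1$ for every other message $m'$. So the change in $\bar e_\theta$ can be as large as
\[
\frac{1}{M}\Bigl(1+\sum_{m'\neq m}\bigl[\mbb{P}_Y(Z(X_{m'},X_m,Y)=0,\cdot)+\mbb{P}_Y(Z(X_{m'},X'_m,Y)=0,\cdot)\bigr]\Bigr),
\]
which is $O(1)$ whenever $X_m$ is a codeword that many others find hard to distinguish. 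Without a mechanism to cap that sum --- e.g.\ a truncation of the per-pair confusion probabilities at level $\sim 1/M$, or an expurgation conditioning --- the hypothesis of McDiarmid's inequality simply does not hold, so the $\sqrt{2\ln(3|\mcal{T}|^n)/M}$ term does not follow. The $2\log(3|\mcal{T}|^n)\max_{\bar x}\mbb{P}(Z(X,\bar x,Y)=0,\cdot)$ summand in the target bound is precisely the correction that controls this unbounded-difference effect, but in your write-up it is derived from an independent ``Chernoff estimate on $\bar x$-confusable codewords'' that is presented \emph{in parallel with}, rather than \emph{as a prerequisite for}, the concentration step. That parallel Chernoff claim (``at most $\log(3|\mcal{T}|^n)$ codewords are $\bar x$-confusable'') is itself unsupported --- the expected count is $M\cdot\mbb{P}(Z(X,\bar x,Y)=0,\cdot)$, which is not a priori of order $\log|\mcal{T}|^n$. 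To repair the argument you need to first pass to a modified error functional whose bounded differences are genuinely $O(1/M)$, show that replacing the true error by this modification costs at most the $\max_{\bar x}$ term (this is where the $\log(3|\mcal{T}|^n)$ factor and the specific constant~$2$ should be tracked), and only then apply McDiarmid with the union bound over states.
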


We thus have a deterministic channel code that gives an upper bound on the average probability of error. We now construct a random joint source-channel code.

\subsection{Random joint source-channel code}

\begin{figure}
\begin{center}
\begin{tikzpicture}[node distance=2.5cm,auto,>=latex']
  \node [int] (enc1) {$f_{{\sf S}}$}; \node [int] (enc2) [right
  of=enc1, node distance=1.5cm ] {$f_{{\sf C}}$}; \node [int] (cha)
  [right of=enc2, node distance=1.9cm] {$P_{Y|X,\bm{\Theta}}$};; \node
  [int] (dec2) [right of=cha, node distance=1.9cm] {$\varphi_{{\sf C}}$};
  \node [int] (dec1) [right of=dec2, node distance=1.5cm]
  {$\varphi_{{\sf S}}$};
  \node (msg) [right of=enc1, node distance=0.8cm] {}; \node (msg2) [above of=msg, node distance=0.7cm]
  {};
  \node (msg3) [right of=dec2, node distance=0.8cm] {};
  \node (msg4) [above of=msg3, node distance=0.7cm] {};
    \node (jam) [pin={[init]above:$\bm{\Theta}$}, above of= cha, node distance=0.23cm]  {};
    \node (start) [left of=enc1,node distance=0.9cm, coordinate] {};
    \node (end) [right of=dec1, node distance=1cm] {};
    \path[->] (start) edge node {$S$} (enc1);
    \path[->] (enc1) edge node {$W$} (enc2);
    \path[->] (enc2) edge node {$X$} (cha);
    \path[->] (cha) edge node {$Y$} (dec2);
    \path[->] (dec2) edge node {$\wi{W}$} (dec1);
    \path[->] (dec1) edge node {$\wi{S}$} (end);
\end{tikzpicture}
\end{center}  
  \caption{Source-Channel Coding Setup}
  \label{fig:source-channel}
\end{figure}
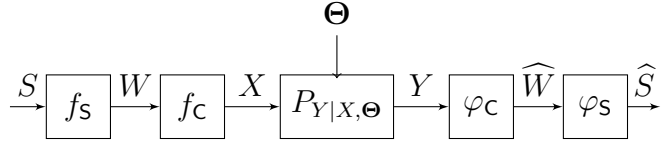

The joint source-channel coding setup is given in Figure \ref{fig:source-channel}. Let the output random variable of the source encoder be $W \in  \Wscr = \{1,\hdots, M\},$ where $ M \in \mbb{N}$, and the input random variable of the source decoder be $\wi{W} \in  \Wscr$. The source code is a pair of functions $(f_{{\sf S}}, \varphi_{{\sf S}})$ as defined in Section II-C. The channel code is a pair of functions $(f_{{\sf C}}, \varphi_{{\sf C}})$ as defined in Section II-B. The composition of the two codes gives the joint source-channel code $(f,\varphi)$  defined as
\begin{align}
  f &:= (f_{{\sf C}} \circ f_{{\sf S}}) : \Sscr^{k} \rightarrow \Xscr^{n}, \varphi := (\varphi_{{\sf S}} \circ \varphi_{{\sf C}}) : \Yscr^{n} \rightarrow \Sscr^{k}. \non 
\end{align}
A random joint source-channel code is a pair of random variables $(F,\Phi)$ taking values in the set $ \{ (f,\varphi)  | f : \Sscr^{k} \rightarrow \Xscr^{n}, \varphi  : \Yscr^{n} \rightarrow \Sscr^{k} \}$.
The probability of error for the deterministic joint source-channel code is given as $ e_{\bm{d},\theta}(f,\varphi) :=  \sum_{s,y} \mbb{I}\left\{ d(s,\varphi(y)) > \bm{d} \right\}P_{S}(s)  P_{Y|X,\bm{\Theta}}(y|f(s),\theta)$.
Correspondingly, the error for the random joint source-channel code $(F,\Phi) \sim \psi$  is given as 
\begin{align}
 e_{\bm{d},\theta}(\psi) &:= \mbb{E} \left[e_{{\bm{d},\theta}}(F,\Phi) \right], 
  e_{\bm{d}}(\psi) := \max_{\theta \in \mcal{T}^{n}} \; \mbb{E} \left[  e_{\bm{d},\theta}(\psi) \right].
\end{align}

We now construct a random joint source-channel code which is similar to the construction of a deterministic code given in \cite{kostina2013lossy} for a joint source-channel setting with a non-adversarial DMC.

\begin{theorem} \label{thm:joint-random-code-achieve}
There exists a random code $\psi$ such that the probability of error is upper bounded as
  \begin{align}
&   e_{\bm{d}}(\psi) \leq \min_{P_{L|S}} \max_{\theta \in \mcal{T}^{n}} \left( \mbb{E} \left[ \exp\left( -\left| i_{X;Y_{q^*}}(X;Y) - \log L \right|^{+} \right) |  \theta \right]  + \mbb{E}\left[ (1 - P_{\wi{S}}\left(\mcal{B}_{\bm{d}}(S))\right)^{L} \right]\right),
  \end{align}
where $(S,L,\wi{S},X,Y) \sim P_{S} \times P_{L|S} \times P_{\wi{S}} \times P_{X} \times P_{Y|X,\bm{\Theta} = \theta}$, 
\begin{align}
 i_{X;Y_{q^*}}(x;y) = \log \frac{(q^*P_{Y|X,\bm{\Theta}})(y|x)}{(P_{X}q^*P_{Y|X,\bm{\Theta}})(y)}, 
\end{align}
where $(X,Y) \sim P_{X} \times P_{Y|X,\bm{\Theta} = \theta}$, $q^*(\theta) = \prod_{i=1}^{n}q_{\Theta}^{*}(\theta_i), q_{\Theta}^{*} \in \Pi_{\Theta}$ and $\mcal{B}_{\bm{d}}(s) := \{\wi{s} \in \Sscr^{k} : d(s,\wi{s}) \leq \bm{d}\}$.  
\end{theorem}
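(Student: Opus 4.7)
The plan is to adapt the joint source-channel achievability argument of Kostina and Verd\'u (on which the paper explicitly draws) to the AVC setting. The random code I would construct consists of two ensembles glued together by a variable-sized index: for each source realization $S$, draw $L$ from the given conditional distribution $P_{L|S}$; conditionally on $L=l$, generate $l$ i.i.d.\ reconstruction candidates $\wi{S}_1,\dots,\wi{S}_l \sim P_{\wi{S}}$ together with $l$ independent channel codewords $X_1,\dots,X_l \sim P_X$, where I take $P_X = \prod_{i=1}^n P_{\mbb{X}}^\ast$ for some $P_{\mbb{X}}^\ast \in \Pi_{\mbb{X}}$. The encoder sets $W$ to be the smallest index $w \le L$ with $\wi{S}_w \in \mcal{B}_{\bm{d}}(S)$ (declaring failure if none exists), transmits $X_W$ across the channel, and the decoder applies a threshold decoder based on $i_{X;Y_{q^\ast}}$ to produce $\wi{W}$, finally outputting $\wi{S}_{\wi{W}}$.

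Next I would bound the total error by a union bound over source-coding failure (no $w\le L$ with $\wi{S}_w \in \mcal{B}_{\bm{d}}(S)$) and channel-decoding error given source success. The source piece is immediate: because the $L$ candidates are i.i.d.\ from $P_{\wi{S}}$, the conditional failure probability equals $(1 - P_{\wi{S}}(\mcal{B}_{\bm{d}}(S)))^L$, and averaging over $(S,L)$ delivers the second term of the claimed bound. For the channel piece, I would apply the Polyanskiy-Poor-Verd\'u dependence-testing argument conditionally on $L = l$, giving $\mbb{P}(\wi{W}\ne W\mid S,L=l,\theta) \le \mbb{E}[\exp(-|i_{X;Y_{q^\ast}}(X;Y)-\log l|^+)\mid \theta]$, and then averaging over $(S,L)$ to recover the first term. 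Maximizing over $\theta$ (since the jammer is adversarial) and minimizing over $P_{L|S}$ (since the designer is free to choose it) completes the bound, and a standard extraction argument extracts a single random code $\psi$ attaining it.

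The main obstacle is that the standard dependence-testing bound is stated for a DMC, whereas here the actual channel law $P_{Y|X,\bm\Theta=\theta}$ need not coincide with the reference law $q^\ast P_{Y|X,\bm\Theta}$ that defines $i_{X;Y_{q^\ast}}$. To handle this I would split the channel error into (i) the true codeword failing to cross the threshold, bounded directly by $\mbb{P}(i_{X;Y_{q^\ast}}(X;Y)<\log L\mid\theta)$, and (ii) some foreign codeword crossing it. The rescue for term (ii) is the pointwise identity
\begin{equation*}
\sum_{\bar x} P_X(\bar x)\,\frac{(q^\ast P_{Y|X,\bm\Theta})(y\mid \bar x)}{(P_X q^\ast P_{Y|X,\bm\Theta})(y)} \;=\; 1,
\end{equation*}
which holds for every $y$ and is independent of $\theta$, so the change-of-measure step that powers the DT bound survives against any state sequence the adversary may produce. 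Combining (i) and (ii) and collapsing them via the usual $\min\{1,L\exp(-i)\}$ manipulation yields exactly the $\exp(-|\,\cdot\,|^+)$ form appearing in the theorem, after which the $\min$ over $P_{L|S}$ and $\max$ over $\theta$ are applied in the final step.
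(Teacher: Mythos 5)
Your proposal captures most of the right ingredients---random coding, the source/channel error split, the $L$-capped search, and the insight that the information density is defined against the reference channel $q^{\ast}P_{Y|X,\bm\Theta}$ so that the change-of-measure identity is state-independent---but the specific construction has a gap that the paper's proof is designed to avoid.

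You generate the codebook \emph{conditionally on $L$}, which in turn is drawn from $P_{L|S}$. This makes the codebook (both $\wi{S}_1,\dots,\wi{S}_L$ and $X_1,\dots,X_L$) a function of the source realization $S$, which the decoder never observes. Similarly, your threshold decoder is set at $\log L$ and your channel-error bound is derived ``conditionally on $L = l$,'' which presumes the decoder knows $l$; but since the minimization is over all $P_{L|S}$ (and indeed the next theorem in the paper chooses $L = \lfloor \gamma / P_{\wi{S}}(\mcal{B}_{\bm d}(S))\rfloor$, which is $S$-dependent), the decoder cannot know $L$ in the regime that matters. So the threshold decoder as written is not realizable, and a fixed threshold would not give the $\log L$ inside $|\cdot|^{+}$.

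The paper sidesteps exactly this issue. It fixes a codebook of size $M$ (with $L\le M$ serving only as an encoder-side search cap), introduces $U$ equal to the transmitted index on source success and $M+1$ on failure, and uses the decoder $\Phi_{\sf C}(y) \in \arg\max_j P_{U\mid \wi S^M}(j\mid \wi s^M)\,P_{Y_{q^\ast}\mid X}(y\mid X_j)$. The prior $P_{U\mid \wi S^M}$, computable at the decoder from $P_S$, $P_{L|S}$ and the shared codebook, \emph{marginalizes} over the unknown $L$. Following Kostina--Verd\'u's Theorem 7, this MAP-style argument, rather than a threshold decoder, is what produces the $\mbb{E}[\exp(-|i_{X;Y_{q^\ast}}-\log L|^{+})]$ bound without the decoder ever seeing $L$. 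Your change-of-measure observation, and your source-coding term $\mbb{E}[(1-P_{\wi S}(\mcal{B}_{\bm d}(S)))^{L}]$, are both correct and reappear in the paper; the missing idea is the fixed codebook plus the $P_{U\mid \wi S^M}$-weighted decoder that decouples the decoder from $L$.
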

\begin{proof}
We construct a random code $\psi$ as follows. First, we construct a random source code $(F_{{\sf S}}, \Phi_{{\sf S}})$. Then, for a particular instance of the source code $(f_{\sf S}, \varphi_{\sf S})$, we construct a random channel code $(F_{{\sf C}}, \Phi_{{\sf C}})$.

  \textit{Source encoder - }  
Generate $M$ codewords from the set $\Sscr^{k}$ each according to $P_{\wi{S}} \in \Pscr(\Sscr^{k})$. Let the $i^{th}$ codeword be denoted as $\wi{S}_{i}$. 
Further, for a given source message $s \in \Sscr^{k}$, generate a random variable $L \sim P_{L|S = s} \in \Pscr(\mbb{N}|\Sscr^{k})$ such that $L \leq M$. The encoder encodes the source message $s$ as
\begin{align}
 F_{{\sf S}}(s) = \left\{
      \begin{array}{c l}
        \min\{m,L\} & d(s,\wi{S}_{m}) \leq \bm{d} < \min_{i < m}d(s,\wi{S}_{i}) \\
        L & \bm{d} < \min_{i=1,\hdots,L}d(s,\wi{S}_{i}),
      \end{array}
\right.
\end{align}
where $d$ is the distortion function and $\bm{d}$ is the maximum distortion level as defined in Section II-C. 

 \textit{Source decoder - } The source decoder decodes the output $m$ of the channel decoder as $\Phi_{\sf S}(m) = \wi{S}_{m}$. 

We now construct a random channel code for a given instance of source code $(F_{{\sf S}}, \Phi_{{\sf S}}) = (f_{{\sf S}}, \varphi_{{\sf S}})$.

\textit{Channel encoder - } Define $M$ random variables $\{X_{m}\}_{m=1}^{M}$ each distributed according to $P_{X} \in \Pscr(\Xscr^{n})$. For a given output $m$ of the source encoder, the channel encoder encodes the output as $F_{{\sf C}}(m) = X_{m}$.

\textit{Channel decoder -} For the channel decoding function, we define a random variable $U \in \{1,\hdots,M+1\}$ as follows
\begin{align}
 U = \left\{
      \begin{array}{c l}
        f_{{\sf S}}(S) & d(S,\varphi_{{\sf S}}\circ f_{{\sf S}}(S)) \leq \bm{d}  \\
        M+1 & {\rm else}.
      \end{array}
\right.
\end{align}
Let $P_{Y_{q^*}|X}(y|x) := \sum_{\theta \in \Tscr^{n}}q^*(\theta)P_{Y|X,\bm{\Theta}}(y|x,\theta)$, where $q^*(\theta) = \prod_{i=1}^{n}q_{\Theta}^{*}(\theta_i), q_{\Theta}^{*} \in \Pi_{\Theta}$. For the observed output $y$ of the channel, the channel decoder decodes the string $y$ as
\begin{align}
  \Phi_{{\sf C}}(y) = m \in \arg \max_{j \in  \{1,\hdots,M\}} P_{U|\wi{S}^{M}}(j|\wi{s}^{M})P_{Y_{q^*}|X}(y|X_{j}),
\end{align}
where $P_{U|\wi{S}^{M}} \in \Pscr \left(\{1,\hdots,M+1\}|(\Sscr^{k})^{M}\right)$.

Let $\psi$ be the distribution induced by the source codebook distribution $P_{\wi{S}}$ and the channel codebook distribution $P_{X}$ on the set of joint source-channel codes $\{f,\varphi\}$. We now compute the probability of error under this random code $\psi$ for a fixed $\theta \in \Tscr^{n}$. Let the source and the channel codebooks be $\wi{s}^{M} = (\wi{s}_{1} ,\hdots, \wi{s}_{M}) \in \Sscr^{k}$ and $x^{M} = (x_1,\hdots,x_M) \in \Xscr^{n}$. The probability of error is bounded above as
\begin{align}
\mbb{P}\left(d(S,\wi{S}) > \bm{d} | \wi{s}^{M},x^{M}, \theta\right) &\leq   \mbb{P}\left( d(S,\varphi_{{\sf S}}\circ f_{{\sf S}}(S)) > \bm{d} | \wi{s}^{M}\right) \non \\ 
&+  \mbb{P} \left(\varphi_{{\sf C}}(Y) \neq f_{{\sf S}}(S) | \wi{s}^{M}, x^{M}, \theta, d(S,\varphi_{{\sf S}}\circ f_{{\sf S}}(S)) \leq \bm{d} \right), \label{eq:porb-err-jscc}
\end{align}
where the first term is the source coding error and the second term is the channel decoding error when there is no error at the source encoder. The source coding error is given as
 $   \mbb{P}\left( d(S,\varphi_{{\sf S}}\circ f_{{\sf S}}(S)) > \bm{d} | \wi{s}^{M}\right) =  \mbb{P}\left( U > L | \wi{s}^{M}\right)$ 
since $\{ d(S,\varphi_{{\sf S}}\circ f_{{\sf S}}(S)) > \bm{d} \} = \{ U > L\}$. The channel decoding error can be further computed as follows.
\begin{align}
&     \mbb{P} \left(\varphi_{{\sf C}}(Y) \neq f_{{\sf S}}(S) | \wi{s}^{M}, x^{M}, \theta, d(S,\varphi_{{\sf S}}\circ f_{{\sf S}}(S)) \leq \bm{d} \right)  \non \\
     &=  \mbb{P} \left( \varphi_{{\sf C}}(Y) \neq U | \wi{s}^{M}, x^{M}, \theta \right) \label{eq:fs-eq-U} \\
  &= \sum_{m=1}^{M} P_{U|\wi{S}^{M}}(m|\wi{s}^{M}) \mbb{P}\left( \varphi_{{\sf C}}(Y) \neq m |  x^{M}, \theta \right) \non \\
  &=   \sum_{m=1}^{M}P_{U|\wi{S}^{M}}(m|\wi{s}^{M}) \mbb{P}\left( \bigcup_{m'} \frac{P_{U|\wi{S}^{M}}(m'|\wi{s}^{M})P_{Y_{q^*}|X}(Y|X_{m'})}{P_{U|\wi{S}^{M}}(m|\wi{s}^{M})P_{Y_{q^*}|X}(Y|X_{m})} \geq 1 | x^{M}, \theta\right), 
\end{align}
where \eqref{eq:fs-eq-U} follows since for $d(S,\varphi_{{\sf S}}\circ f_{{\sf S}}(S)) \leq \bm{d} $ implies that $f_{{\sf S}}(S) = U$.

Averaging over codebooks $(\wi{s}^M,x^M)$, we get that
\begin{align}
  &\sum_{\wi{s}^M,x^M}P_{\wi{S}^M}(\wi{s}^M)P_{X^M}(x^M) \mbb{P}\left(d(S,\wi{S}) > \bm{d} | \wi{s}^{M},x^{M}, \theta\right) = \mbb{P}\left(d(S,\wi{S}) > \bm{d} | \theta\right)   = e_{\bm{d},\theta}(\psi),
\end{align}
where $P_{\wi{S}^M}(\wi{s}^M) = \prod_{i=1}^{M}P_{\wi{S}}(\wi{s}_{i})$, $\wi{s}_{i} \in \Sscr^{k}$ and $P_{X^M}(x^M) = \prod_{i=1}^{M}P_{X}(x_{i})$, $x_{i} \in \Xscr^{n}$. $\mbb{P}\left(d(S,\wi{S}) > \bm{d} | \theta\right)   = e_{\bm{d},\theta}(\psi)$ follows since the random code is induced by the distributions $P_{X^M} $  and $ P_{\wi{S}^M}$.

Substituting the above in \eqref{eq:porb-err-jscc}, we get
\begin{align}
& e_{\bm{d},\theta}(\psi) \leq  \mbb{P}(U > L) +     \sum_{\wi{s}^{M}}P_{\wi{S}^{M}}(\wi{s}^{M}) \sum_{m=1}^{M}P_{U|\wi{S}^{M}}(m|\wi{s}^{M})  \mbb{P}\left( \;\bigcup_{m'=1}^{M} \frac{P_{U|\wi{S}^{M}}(m'|\wi{s}^{M})P_{Y_{q^*}|X}(Y|X_{m'})}{P_{U|\wi{S}^{M}}(m|\wi{s}^{M})P_{Y_{q^*}|X}(Y|X_{m})} \geq 1 | \theta\right). \label{eq:perr-sum-chan-source-err}
\end{align}
 Following the line of arguments given in Theorem 7 from \cite{kostina2013lossy}, we get 
 \begin{align}
&  \sum_{\wi{s}^{M}}P_{\wi{S}^{M}}(\wi{s}^{M}) \sum_{m=1}^{M}P_{U|\wi{S}^{M}}(m|\wi{s}^{M})  \mbb{P}\left( \;\bigcup_{m'=1}^{M} \frac{P_{U|\wi{S}^{M}}(m'|\wi{s}^{M})P_{Y_{q^*}|X}(Y|X_{m'})}{P_{U|\wi{S}^{M}}(m|\wi{s}^{M})P_{Y_{q^*}|X}(Y|X_{m})} \geq 1 |  \theta\right) \non \\
&\leq \mbb{E}\left[\exp\left(-|i_{X;Y_{q^*}}(X;Y) - \log L|^{+}\right) | \theta \right]
 \end{align}
and $  \mbb{P}(U > L) = \mbb{E}\left[\left(1-\mbb{P}(\Bscr_{\bm{d}}(S))\right)^{L}\right]$.
Substituting in \eqref{eq:perr-sum-chan-source-err}, we get 
\begin{align}
e_{\bm{d},\theta}(\psi) &\leq \mbb{E}\left[\exp\left(-|i_{X;Y_{q^*}}(X;Y) - \log L|^{+} | \theta\right)\right]  +  \mbb{E}\left[\left(1- P_{\wi{S}}(\Bscr_{\bm{d}}(S))\right)^{L}\right].
\end{align}
Taking minimum over the distributions $P_{L|S}$ and taking the maximum over the states $\theta$, we get the required bound.
\end{proof}

We further state a weaker bound by choosing  $L = \floor{\frac{\gamma}{P_{\wi{S}}(\Bscr_{\bm{d}}(S))}}$ where $\gamma > 0$ is chosen arbitrarily according to Theorem 8 in \cite{kostina2013lossy}.
\begin{theorem} \label{thm:joint-random-code-achieve-weak}
  There exists a random joint source-channel code $\psi$ such that 
  \begin{align}
&   e_{\bm{d}}(\psi) \leq  \inf_{\gamma > 0}\max_{\theta \in \mcal{T}^{n}}  \left(  \mbb{E} \left[ \exp\left( -\left| i_{X;Y_{q^*}}(X;Y) -  \frac{\gamma}{P_{\wi{S}}(\Bscr_{\bm{d}}(S))} \right|^{+} \right) | \theta \right]  - e^{1-\gamma} \vphantom{\frac{\gamma}{P_{\wi{S}}(\Bscr_{\bm{d}}(S))}} \right),
  \end{align}
where $(S,\wi{S},X,Y)$ are as defined in Theorem \ref{thm:joint-random-code-achieve}.
\end{theorem}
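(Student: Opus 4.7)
The plan is to derive this result as a direct corollary of Theorem \ref{thm:joint-random-code-achieve} by making a specific, deterministic choice for the distribution $P_{L|S}$ inside the minimization, and then bounding each of the two terms in the resulting expression separately. Theorem \ref{thm:joint-random-code-achieve} provides the freedom to pick any conditional distribution $P_{L|S}$, and the natural choice is $L = \bigl\lfloor \gamma / P_{\wi{S}}(\Bscr_{\bm d}(S)) \bigr\rfloor$, which is a deterministic function of $S$. (If needed I would take $L = \min\{M, \lfloor \gamma/P_{\wi{S}}(\Bscr_{\bm d}(S)) \rfloor\}$ and send $M \to \infty$ so the clipping is harmless.) This choice is designed so that $L$ is large precisely when $P_{\wi{S}}(\Bscr_{\bm d}(S))$ is small, which is exactly the regime in which the source encoder is likely to fail to cover $S$ within distortion $\bm d$.

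The second term in Theorem \ref{thm:joint-random-code-achieve} is $\mbb{E}\bigl[(1 - P_{\wi{S}}(\Bscr_{\bm d}(S)))^{L}\bigr]$. Writing $p := P_{\wi{S}}(\Bscr_{\bm d}(S))$, I apply the elementary inequality $(1-p)^{L} \le e^{-pL}$ and then use $L \ge \gamma/p - 1$ from the floor, which yields $pL \ge \gamma - p \ge \gamma - 1$. Therefore the integrand is pointwise bounded by $e^{1-\gamma}$, and since this bound is deterministic, the expectation is bounded by $e^{1-\gamma}$ as well. This step is the cleanest part of the argument.

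For the first term, $\mbb{E}[\exp(-|i_{X;Y_{q^*}}(X;Y) - \log L|^{+}) \mid \theta]$, I use the fact that $L \le \gamma/P_{\wi{S}}(\Bscr_{\bm d}(S))$ (again from the floor), so $\log L \le \log(\gamma/P_{\wi{S}}(\Bscr_{\bm d}(S)))$, and hence $i_{X;Y_{q^*}}(X;Y) - \log L \ge i_{X;Y_{q^*}}(X;Y) - \log(\gamma/P_{\wi{S}}(\Bscr_{\bm d}(S)))$. Since $x \mapsto |x|^{+}$ is monotone, this inequality is preserved after taking the positive part, and then reversed under $x \mapsto e^{-x}$, so the first term is upper bounded by $\mbb{E}[\exp(-|i_{X;Y_{q^*}}(X;Y) - \log(\gamma/P_{\wi{S}}(\Bscr_{\bm d}(S)))|^{+}) \mid \theta]$, matching the form in the statement (reading the quantity appearing in the statement as $\log(\gamma/P_{\wi{S}}(\Bscr_{\bm d}(S)))$).

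Combining the two bounds inside the maximum over $\theta \in \mcal T^n$ and noting that $\gamma > 0$ was arbitrary, taking the infimum over $\gamma$ yields the claimed inequality. The main conceptual step is the choice of $L$; once that is fixed, everything else is a routine application of $(1-p)^L \le e^{-pL}$ and monotonicity of $|\cdot|^{+}$. I do not anticipate a substantial obstacle: no new coding-theoretic construction is required since Theorem \ref{thm:joint-random-code-achieve} already supplies the random code and its probability of error bound.
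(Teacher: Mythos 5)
Your proof is correct and takes the same route the paper does: the paper's ``proof'' of this statement is just the one-line remark that the bound follows from Theorem~\ref{thm:joint-random-code-achieve} by choosing $L = \lfloor \gamma / P_{\wi{S}}(\Bscr_{\bm d}(S)) \rfloor$ (citing Theorem 8 of Kostina--Verd\'u), and you supply exactly the routine estimates that citation delegates --- $(1-p)^L \le e^{-pL} \le e^{1-\gamma}$ for the source term, and monotonicity of $|\cdot|^{+}$ together with $\log L \le \log(\gamma/p)$ for the channel term. One caveat: you rightly observe that the printed $\gamma/P_{\wi{S}}(\Bscr_{\bm d}(S))$ inside the information-density comparison must be read as $\log\bigl(\gamma/P_{\wi{S}}(\Bscr_{\bm d}(S))\bigr)$, but there is a second typo you do not flag --- your derivation (correctly) produces $+\,e^{1-\gamma}$, while the theorem as displayed has $-\,e^{1-\gamma}$. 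The $+$ sign is the one that reappears when this bound is invoked in Theorem~\ref{thm:ran-code-bnd}, and it is the only sign consistent with an achievability (upper) bound, so the minus in the statement is a typo in the paper, not a defect in your argument.
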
  
Using the deterministic channel code and the random joint source-channel code, we now construct a stochastic joint source-channel code.

\subsection{Stochastic source-channel code and upper bound}

In this section, we present an approach to construct a stochastic joint source-channel code. For that, we consider a lemma which is a version of the random code reduction lemma given in  \cite[Ch. 4]{csiszar2011information}.
\begin{lemma} \label{lem:ran-code-red}
  Let $(F,\Phi) \sim \psi $, and let $K$ be a positive integer such that 
  \begin{align}
K > \frac{\log(|\mcal{T}|^{n})}{e_{\bm{d}}(\psi) - \log (1 + e_{\bm{d}}(\psi))}. \label{eq:cond-K}    
  \end{align}
 Then there exist $K$ deterministic joint source-channel codes $(f_i,\varphi_i)_{ i=1}^{K}$, such that 
  \begin{align}
    \frac{1}{K}\sum_{i=1}^{K} e_{\bm{d},\theta}(f_{i},\varphi_{i}) < e_{\bm{d}}(\psi) \quad  \forall \;\theta  \in \Tscr^{n}. \label{eq:ran-code-red}
  \end{align}
\end{lemma}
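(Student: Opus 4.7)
The plan is to prove this lemma by the probabilistic method, in the spirit of the classical random code reduction argument of Csisz\'{a}r--K\"{o}rner. I would draw $K$ joint source-channel codes $(F_i, \Phi_i)$, $i = 1, \ldots, K$, independently with common law $\psi$, and show that with strictly positive probability the realizations $(f_i, \varphi_i)$ satisfy the desired inequality simultaneously for every state sequence $\theta \in \mcal{T}^n$. Since the event is non-empty, a particular realization that works must exist, producing the required deterministic codes.

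To execute this, fix $\theta \in \mcal{T}^n$ and set $Z_i^{(\theta)} := e_{\bm{d},\theta}(F_i, \Phi_i) \in [0,1]$. The variables $Z_i^{(\theta)}$ are i.i.d.\ with $\mbb{E}[Z_i^{(\theta)}] = e_{\bm{d},\theta}(\psi) \leq e_{\bm{d}}(\psi) =: \bar{e}$. I would then apply an exponential Markov (Chernoff-type) inequality, using the elementary secant bound $2^z \leq 1 + z$ for $z \in [0,1]$ to deduce $\mbb{E}[2^{Z_i^{(\theta)}}] \leq 1 + \bar{e}$. By independence this yields the per-$\theta$ estimate
\[
\mbb{P}\!\left( \frac{1}{K}\sum_{i=1}^K Z_i^{(\theta)} \geq \bar{e} \right) \;\leq\; \frac{\mbb{E}[2^{\sum_i Z_i^{(\theta)}}]}{2^{K\bar{e}}} \;\leq\; \frac{(1+\bar{e})^K}{2^{K\bar{e}}} \;=\; 2^{-K(\bar{e} - \log(1+\bar{e}))}.
\]
A union bound over the $|\mcal{T}|^n$ state sequences then gives
\[
\mbb{P}\!\left(\exists\, \theta \in \mcal{T}^n : \frac{1}{K}\sum_{i=1}^K Z_i^{(\theta)} \geq \bar{e} \right) \;\leq\; |\mcal{T}|^n \cdot 2^{-K(\bar{e} - \log(1+\bar{e}))}.
\]
By the hypothesis \eqref{eq:cond-K} this bound is strictly less than $1$, so there exists a realization $(f_i, \varphi_i)$ of $(F_i, \Phi_i)$ for which the averaged per-$\theta$ error is strictly below $\bar{e}$ for every $\theta$ simultaneously, which is precisely \eqref{eq:ran-code-red}.

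The main technical subtlety I anticipate is that for a worst-case $\theta^{\star}$ one may have $e_{\bm{d},\theta^{\star}}(\psi) = \bar{e}$, so the mean of $Z_i^{(\theta^{\star})}$ coincides with the threshold and a generic Chernoff bound would be vacuous. The argument hinges on the sharpness of the bound $\mbb{E}[2^{Z_i^{(\theta)}}] \leq 1 + \bar{e}$ together with the particular form of the exponent $\bar{e} - \log(1+\bar{e})$, which must be kept strictly positive so that the union bound collapses below $1$; the strictness in the conclusion then comes for free from the bound being strictly less than $1$, forcing at least one realization to beat $\bar{e}$ strictly for all $\theta$.
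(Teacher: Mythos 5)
Your proposal follows the standard probabilistic-method route for random code reduction (i.i.d.\ draws from $\psi$, a per-$\theta$ moment-generating-function bound via $2^z\le 1+z$ on $[0,1]$, and a union bound over $\mcal{T}^n$), and the algebra up to the per-$\theta$ estimate is fine. The gap is the sign of the exponent. With base-$2$ logarithm --- the paper's declared convention --- the function $\log(1+x)$ is concave, equals $0$ at $x=0$ and $1$ at $x=1$, so $\log(1+x)\ge x$ on $[0,1]$, with strict inequality in the interior. Hence $\bar{e}-\log(1+\bar{e})$ is strictly \emph{negative} for $\bar{e}\in(0,1)$, your per-$\theta$ bound $2^{-K(\bar{e}-\log(1+\bar{e}))}$ is $\ge 1$, and the union bound never drops below $1$ no matter how large $K$ is. Your closing paragraph says the exponent ``must be kept strictly positive so that the union bound collapses below $1$,'' but this is precisely the step you do not verify, and it fails.

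The obstruction is structural, not a matter of sharpening constants. For a worst-case $\theta^\star$ with $e_{\bm{d},\theta^\star}(\psi)=\bar{e}$, you are asking a Chernoff bound to beat the sample mean at its own expectation: by Jensen, $s\lambda-\log\mbb{E}[2^{sZ}]\le s\lambda-s\mbb{E}[Z]$, which vanishes at $\lambda=\bar{e}$, so no choice of tilting parameter gives decay there; $\mbb{P}\bigl(\frac{1}{K}\sum_i Z_i^{(\theta^\star)}\ge\bar{e}\bigr)$ is near $1/2$ by the CLT and cannot be pushed below $|\mcal{T}|^{-n}$. The classical Csisz\'{a}r--K\"{o}rner reduction introduces a separate target threshold $\lambda$ with $\lambda>\log(1+\bar{e})$ and concludes $\frac{1}{K}\sum_i e_{\bm{d},\theta}(f_i,\varphi_i)<\lambda$, not $<\bar{e}$; setting $\lambda=\bar{e}$ is exactly where the exponent degenerates. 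You should also have flagged that under base-$2$ log the right-hand side of \eqref{eq:cond-K} is nonpositive, so the hypothesis is satisfied by every positive integer $K$ including $K=1$, for which the conclusion plainly need not hold --- a sign that the statement as posed needs a correction (a natural-log convention and/or an explicit $\lambda$ larger than $\log(1+e_{\bm{d}}(\psi))$), which your argument inherits rather than repairs.
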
 

We now construct a joint source-channel code $(Q_{X|S},\varphi)$, where $Q_{X|S}$ is a stochastic encoder and $\varphi$ is a deterministic decoder. The construction is according to Theorem 12.13 in \cite{csiszar2011information}. We consider $K$ deterministic codes, $(f_{i},\varphi_{i})_{i=1}^{K}$ defined as $f_{i} : \Sscr^{k} \rightarrow \mcal{X}^{n}$ and $\varphi_{i}: \mcal{Y}^{n} \rightarrow \Sscr^{k}$, satisfying the equation \eqref{eq:ran-code-red}. A randomly chosen code from this ensemble is used to communicate the  messages from $\Sscr^{k}$. Further, we consider the deterministic code $(\wi{f},\wi{\varphi})$ defined as $\wi{f} : \{1,\hdots,K\} \rightarrow \mcal{X}^{d_n}, \; \wi{\varphi} : \mcal{Y}^{d_n} \rightarrow \{1,\hdots,K\}$ satisfying the equation \eqref{eq:kosut-kliewer-achieve}, where $d_n$ is a function of $n$. We use $(\wi{f},\wi{\varphi}) $ to communicate the index $i$ of the chosen code. Consider a random variable $\bm{i} \in \{1,\hdots,K\}$, distributed uniformly and independent of any other random variable. Given $s \in \Sscr^{k}$, the encoder chooses the input string $X \in \mcal{X}^{d_n + n}$ randomly as $ X = (\wi{f}(\bm{i}), f_{\bm{i}}(s))$. The decoder $\varphi : \mcal{Y}^{d_n+n} \rightarrow \Sscr^{k}$ decodes $y = (\wi{y},\bar{y}) \in  \mcal{Y}^{d_n + n}$ as
    \begin{align}
      \varphi(y) = \left\{ 
      \begin{array}{c l}
        s &  {\rm if} \;  (\wi{\varphi}(\wi{y}),\varphi_{i}(\bar{y})) = (i,s) \;  {\rm for \; some}\;  i\\
        0 & {\rm else}, \non
      \end{array}
\right.
    \end{align}
    where $\wi{y} \in \mcal{Y}^{d_n}$ and $\bar{y} \in \mcal{Y}^{n}$.

Using this stochastic joint source-channel code, we now bound the lower value of the zero-sum game. Note that since we encode the index of the choice of code in the input sequence of length $d_n$, the rate of communication is given as $\frac{k}{d_n+n}$.
\begin{theorem} \label{thm:minimax-val-bnd}
The minimax value of the game, $\overline{\nu}(k,n)$, is bounded above as 
  \begin{align}
&\overline{\nu}(k,n) =   \min_{\QXS,\QSY}  \max_{q } \; \mbb{P}( d(S,\wi{S}) > \bm{d})\non  \\
&\leq     \sqrt{\frac{2 \ln (3|\mcal{T}|^{d_n})}{K}} +   \min_{P_{X_{a}}} \max_{\theta_{a}} \left[ \vphantom{  \max_{\bar{x}_{a}} } \mbb{P}((X_{a},Y_{a}) \notin \mcal{A}  | \theta_{a} ) \right.  +   2K\log e \; \mbb{P}(Z(X_{a},\bar{X}_{a},Y_{a}) = 0, (X_{a},Y_{a}) \in \mcal{A} | \theta_{a}) \non\\
                    &  \left.  +  \max_{\bar{x}_{a} \in \mcal{X}^{d_n} } 2 \log 3|\mcal{T}|^{d_n} \mbb{P}(Z(X_{a},\bar{x}_{a},Y_{a}) = 0, (X_{a},Y_{a}) \in \mcal{A} | \theta_{a})  \vphantom{  \max_{\bar{x}_{a}} } \right]  \non \\
 & + \inf_{\gamma > 0}\max_{\theta_b } \left[ \mbb{E} \left[ \exp\left( -\left| i_{X;Y_{q^*}}(X_b;Y_b) -  \frac{\gamma}{P_{\wi{S}}(\Bscr_{\bm{d}}(S))} \right|^{+} \right) | \theta_b \right]  -e^{1-\gamma} \vphantom{\frac{\gamma}{P_{\wi{S}}(\Bscr_{\bm{d}}(S))}} \right],  \label{eq:minimax-val-bnd} 
  \end{align}
    where $\Xscr^{d_n} \ni \bar{X}_{a} \sim P_{X_{a}}$, $\Xscr^{d_n} \times \Yscr^{d_n} \ni (X_a,Y_a) \sim P_{X_a} P_{Y|X,\bm{\Theta}=\theta_a}$, with $\theta_a\in \Tscr^{d_n}$, $\Xscr^{n} \times \Yscr^{n} \ni (X_b,Y_{b})  \sim P_{X_b}P_{Y|X,\bm{\Theta}=\theta_b}$, with $\theta_b\in \Tscr^n$ and $P_{X_a} \in \Pscr(\Xscr^{d_n})$ and $P_{X_b} \in \Pscr(\Xscr^n)$ are any distributions.
\end{theorem}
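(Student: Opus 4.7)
The plan is to prove the bound by explicitly constructing the stochastic code $(Q_{X|S},\varphi)$ described just before the theorem statement and then decomposing its error probability into two terms: one coming from incorrect decoding of the randomly selected code index, and one coming from an error in the (correctly identified) joint source-channel code on the second segment of the transmission. Since the jammer may act differently on the two segments, I would write any jammer sequence $\theta \in \Tscr^{d_n + n}$ as a concatenation $\theta = (\theta_a, \theta_b)$ with $\theta_a \in \Tscr^{d_n}$ and $\theta_b \in \Tscr^{n}$, and bound the worst case over $\theta$ by the sum of worst cases over $\theta_a$ and $\theta_b$ separately (after applying the union bound on the two error events). This is legitimate because the two error contributions only depend on the states acting on their respective blocks.

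Step one: apply Theorem \ref{thm:joint-random-code-achieve-weak} to obtain a random joint source-channel code $\psi$ of blocklength $(k,n)$ with $e_{\bm{d}}(\psi)$ bounded by the $\inf_{\gamma>0} \max_{\theta_b}[\cdots]$ expression appearing in the last line of \eqref{eq:minimax-val-bnd}. Choose $K \in \mbb{N}$ large enough that \eqref{eq:cond-K} holds, and invoke Lemma \ref{lem:ran-code-red} to extract $K$ deterministic source-channel codes $(f_i,\varphi_i)_{i=1}^K$ satisfying the uniform-in-$\theta_b$ inequality \eqref{eq:ran-code-red}.

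Step two: apply Theorem \ref{thm:kosut-kliewer-achieve} with message set $\{1,\dots,K\}$ and blocklength $d_n$ to obtain a deterministic channel code $(\wi{f},\wi{\varphi})$ whose average probability of error $\bar e(\wi{f},\wi{\varphi})$ is upper bounded by the sum of the first two lines of the RHS of \eqref{eq:minimax-val-bnd}, \ie, the $\sqrt{2\ln(3|\Tscr|^{d_n})/K}$ term together with the $\min_{P_{X_a}}\max_{\theta_a}[\cdots]$ term. Now assemble the stochastic code exactly as in the construction preceding the theorem: draw $\mathbf{i}$ uniformly on $\{1,\dots,K\}$, transmit $X = (\wi{f}(\mathbf{i}), f_{\mathbf{i}}(S))$, and decode as described.

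Step three: bound the total error. For any $\theta = (\theta_a,\theta_b)$, the event $\{d(S,\wi{S}) > \bm{d}\}$ is contained in $\{\wi{\varphi}(\wi{Y}) \neq \mathbf{i}\} \cup \{\wi{\varphi}(\wi{Y}) = \mathbf{i},\, d(S,\varphi_{\mathbf{i}}(\bar{Y})) > \bm{d}\}$. Because $\mathbf{i}$ is uniform and independent, and the channel acts memorylessly and independently on the two blocks, the probability of the first event equals $\bar e(\wi{f},\wi{\varphi})$ evaluated at $\theta_a$, while the probability of the second event equals $\frac{1}{K}\sum_{i=1}^K e_{\bm{d},\theta_b}(f_i,\varphi_i)$, which by \eqref{eq:ran-code-red} is strictly less than $e_{\bm{d}}(\psi)$. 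Taking the union bound and then the maximum over $\theta$, the maximum decouples into separate maxima over $\theta_a$ and $\theta_b$, yielding precisely the RHS of \eqref{eq:minimax-val-bnd}. The minimax value is then bounded above by the error of this particular stochastic code, since the minimax is an infimum over all stochastic codes.

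The main obstacle I expect is the careful verification that the two error terms indeed decouple the way I claim: one must check that conditioning on the correct index being decoded does not inflate the source-channel error, and that the memorylessness of the channel lets me treat $(\theta_a,\theta_b)$ as independent choices for bounding purposes. A secondary technicality is choosing $K$ and $d_n$ consistently so that the Kosut--Kliewer hypothesis \eqref{eq:kosut-kliewer-achieve} and the random code reduction condition \eqref{eq:cond-K} are simultaneously satisfied; this is a mild issue because both bounds are monotone in the appropriate parameters, but it must be acknowledged before invoking the asymptotic results of Section VI.
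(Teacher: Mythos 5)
Your proposal is correct and follows essentially the same route as the paper's proof: decompose the error event into the index-decoding error on the first $d_n$ symbols plus the (conditional) source-channel error on the remaining $n$ symbols, apply the union bound, decouple the maximum over $\theta = (\theta_a,\theta_b)$ into separate maxima, and then invoke Theorem~\ref{thm:kosut-kliewer-achieve} for the first term and Lemma~\ref{lem:ran-code-red} together with Theorem~\ref{thm:joint-random-code-achieve-weak} for the second. The only minor imprecision is saying the probability of the second event ``equals'' $\frac{1}{K}\sum_i e_{\bm{d},\theta_b}(f_i,\varphi_i)$ — strictly it is at most that quantity, since dropping the condition $\wi{\varphi}(\wi{Y})=\mathbf{i}$ only enlarges the event — but the inequality points in the right direction and the argument is sound.
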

\begin{proof}
  The maximum probability of error is bounded as follows
   \begin{align}
     &\max_{\theta \in \mcal{T}^{n + d_n}}  \;  \mbb{P}(d(S,\wi{S}) > \bm{d}  | \bm{\Theta}=\theta)  = \max_{\theta \in \mcal{T}^{n + d_n}}  \;\frac{1}{K} \sum_{i=1}^{K} \mbb{P}(d(S,\wi{S}) > \bm{d}  |  \bm{i} = i, \bm{\Theta}=\theta)   \non \\
     & \left. +   \;\frac{1}{K} \sum_{i=1}^{K} \mbb{P}(d(S,\wi{S}) > \bm{d}, \wi{\varphi}(Y_a) = i |  \bm{i} = i, \bm{\Theta}=\theta) \right) \non \\
          &\leq  \max_{\theta \in \mcal{T}^{n + d_n}} \left(  \;\frac{1}{K} \sum_{i=1}^{K} \mbb{P}(\wi{\varphi}(Y_a) \neq i  |  \bm{i} = i, \bm{\Theta}=\theta) +   \;\frac{1}{K} \sum_{i=1}^{K} \mbb{P}(d(S,\varphi_{i}(Y_b)) > \bm{d} | \bm{i} = i,  \bm{\Theta}=\theta) \right) \non \\
               &\leq  \max_{\theta_{a} \in \mcal{T}^{d_n}}   \;\frac{1}{K} \sum_{i,y_a}\mbb{I}\{\wi{\varphi}(y_a) \neq i \}P_{Y|X,\bm{\Theta}}(y_a|\wi{f}(i),\theta_a) \;  \non \\
               &+   \max_{\theta_{b} \in \mcal{T}^{n}} \frac{1}{K} \sum_{i,s,y_b}\mbb{I} \{d(s,\varphi_{i}(y_b)) > \bm{d}\} P_{S}(s) P_{Y|X,\bm{\Theta}}(y_b|f_i(s),\theta_b) . \non 
     \end{align}

Using \eqref{eq:kosut-kliewer-achieve}  from Theorem \ref{thm:kosut-kliewer-achieve} we bound the first term and using equation \eqref{eq:ran-code-red} from Lemma \ref{lem:ran-code-red} we bound the second term to get
  \begin{align}
    \max_{\theta \in \mcal{T}^{n + d_n}}  \;  \mbb{P}(d(S,\wi{S}) > \bm{d} | \bm{\Theta}=\theta)   &\leq     \sqrt{\frac{2 \ln (3|\mcal{T}|^{d_n})}{K}} +   \min_{P_{X_{a}}} \max_{\theta_{a}} \left[ \vphantom{  \max_{\bar{x}_{a}} } \mbb{P}((X_{a},Y_{a}) \notin \mcal{A}  | \theta_{a} ) \right.  \non \\
& +  \max_{\bar{x}_{a} \in \mcal{X}^{d_n} }  2 \log 3|\mcal{T}|^{d_n} \mbb{P}(Z(X_{a},\bar{x}_{a},Y_{a}) = 0, (X_{a},Y_{a}) \in \mcal{A} | \theta_{a}) \non\\
                    &  \left.  +  2K\log e \; \mbb{P}(Z(X_{a},\bar{X}_{a},Y_{a}) = 0, (X_{a},Y_{a}) \in \mcal{A} | \theta_{a}) \vphantom{  \max_{\bar{x}_{a}} } \right]  \non \\
 & + \inf_{\gamma > 0}\max_{\theta_b } \left[ \mbb{E} \left[ \exp\left( -\left| i_{X;Y_{q^*}}(X_b;Y_b) -  \frac{\gamma}{P_{\wi{S}}(\Bscr_{\bm{d}}(S))} \right|^{+} \right) | \theta_b \right]  -e^{1-\gamma} \vphantom{\frac{\gamma}{\mbb{P}(\Bscr_{\bm{d}}(S))}} \right]. \label{eq:func-of-M}
  \end{align}
Using $\max_{\theta \in \Tscr^{d_n + n}} \mbb{P}(d(S,\wi{S}) > \bm{d} | \bm{\Theta} = \theta) = \max_{q \in \Pscr(\Tscr^{d_n + n})}\mbb{P}(d(S,\wi{S}) > \bm{d})$, we get the required result. 
\end{proof}

Having derived the bounds for the lower and upper values of the game, we now proceed to derive the rate of convergence for upper and lower values of the zero-sum game. 

\section{Asymptotics and Minimax Theorems}

We now use the bounds derived in the earlier sections to compute the limits of the upper and lower values. 
Throughout this section we assume that there exists a unique capacity achieving state distribution $q_{\Theta}^{*} \in \Pi_{\Theta}$ whereby $V_{{\sf C}}^{-} = V_{{\sf C}}^{+} =: V_{\sf C}$ and $V_{\sf C} > 0$.


Let $(\mbb{X},\mbb{Y}) \sim P_{\mbb{X}} \times \sum_{\theta \in \mcal{T}} q_{\Theta}(\theta)P_{\mbb{Y}|\mbb{X},\Theta = \theta}$, with $q_{\Theta} \in \Pscr(\Tscr)$. Then for all such distributions $P_{\mbb{X}} \in \Pscr(\Xscr)$, we assume that 
\begin{align}
  i_{\mbb{X};\mbb{Y}_{q_{\Theta}^{*}}}(x;\mbb{Y}) < \infty,  \label{eq:info-dens-fin}
\end{align}
where 
 \begin{align}
       i_{\mbb{X};\mbb{Y}_{q_{\Theta}^{*}}}(x;y) &= \log \frac{(q_{\Theta}^{*}P_{\mbb{Y}|\mbb{X},\Theta})(y|x)}{(P_{\mbb{X}}q_{\Theta}^{*}P_{\mbb{Y}|\mbb{X},\Theta})(y)}, \; x \in \Xscr, y \in \Yscr \non
 \end{align}
with $q_{\Theta}^{*} \in \Pi_{\Theta}$.
Further,  we also assume that 
\begin{align}
  j_{{\sf S}}(s,\bm{d}) < \infty, \quad \forall \; s \in \Sscr \label{eq:d-tilt-fin}
\end{align}
where $j_{{\sf S}}(s,\bm{d})$ is as defined in section II-D.

\subsection{Dispersion based asymptotics of the upper bound}

In this section we compute the limit of the lower bound by using Theorem \ref{thm:minimax-val-bnd}. We first bound the error terms due to the average probability of error in equation \eqref{eq:minimax-val-bnd}. 

\subsubsection{Average error}

 Let $P^{*}_{\mbb{X}} \in \Pi_{\mbb{X}}$ and define $P^{*}_{X}(x) := \prod_{i = 1}^{d_n}P^{*}_{\mbb{X}}(x_{i})$ for $x \in \mcal{X}^{d_{n}}$.
 To use the Theorem \ref{thm:minimax-val-bnd}, we define the set $\Ascr \subseteq \mcal{X}^{d_{n}} \times \mcal{Y}^{d_{n}}$  as follows. Let $T_{\theta}^{n}(\theta') := \prod_{i=1}^{d_{n}}T_{\theta}(\theta_{i}')$ with $T_{\theta} \in \Pscr_{n}(\Tscr)$. Define
 \begin{align}
     i_{X^{*};Y_{T_{\theta}^{n}}}(x;y) &= \log \frac{(T_{\theta}^{n}P_{Y|X,\bm{\Theta}})(y|x)}{(P_{X}^{*}T_{\theta}^{n}P_{Y|X,\bm{\Theta}})(y)}, \; x \in \Xscr^{n}, y \in \Yscr^{n}. \non
 \end{align}
Then the set $\Ascr$ is defined as 
 \begin{align}
   \Ascr &= \left\{ (x,y) \in  \Xscr^{d_n} \times \Yscr^{d_n}  :  i_{X^{*};Y_{T_{\theta}^{n}}}(x;y) > \gamma    \;\;{\rm for \; some} \; T_{\theta} \in \Pscr_n(\Tscr) \vphantom{\Xscr^{d_n} \times \Yscr^{d_n} i_{X^{*};Y_{q^*}}} \right\}. \label{eq:defn-set-A} 
 \end{align}
 Further, let
 \begin{align}
{\rm V}_{0} := \sup_{q_{\Theta} \in \mcal{P}(\mcal{T})} \; {\rm Var} \left( i_{\mbb{X^{*}};\mbb{Y}_{q_{\Theta}}}(\mbb{X};\mbb{Y}) \right), \label{eq:var-bnded}
 \end{align}
 where $(\mbb{X},\mbb{Y}) \sim P_{\mbb{X}}^{*} \times \sum_{\theta \in \mcal{T}} q_{\Theta}(\theta)P_{\mbb{Y}|\mbb{X},\Theta = \theta}$. 

The following Lemma gives a necessary condition for an AVC to be non-symmetrizable. The lemma below follows from \cite[Lemma 6]{kosut2018finite}.
\begin{lemma} \label{lem:non-symm-and-joint-dist}
 Let $\mbb{X}, \mbb{X}' \sim P_{\mbb{X}}$ and $P_{\mbb{X}}(x) > 0$ for all $x \in \mcal{X}$. Let $D_{\eta}$ be the set defined as 
  \begin{align}
    D_{\eta} = \{ &Q_{\mbb{X} \mbb{X}'\Theta \mbb{Y}} \in \mcal{P}(\mcal{X} \times \mcal{X} \times \mcal{T} \times \mcal{Y}) : D(Q_{\mbb{X} \mbb{X}'\Theta \mbb{Y}} || P_{\mbb{X}} \times Q_{\mbb{X}'\Theta} \times P_{\mbb{Y}| \mbb{X},\Theta}) \leq \eta \},
\end{align}
  where $(P_{\mbb{X}} \times Q_{\mbb{X}'\Theta} \times P_{\mbb{Y}| \mbb{X},\Theta})(x,x',\theta,y) = P_{\mbb{X}}(x) Q_{\mbb{X}'\Theta}(x',\theta) P_{\mbb{Y}| \mbb{X},\Theta}(y|x,\theta)$, $Q_{\mbb{X}'\Theta}$ is the marginal of $Q_{\mbb{X} \mbb{X}'\Theta \mbb{Y}}$ and $D(Q_{\mbb{X} \mbb{X}'\Theta \mbb{Y}} || P_{\mbb{X}} \times Q_{\mbb{X}'\Theta} \times P_{\mbb{Y}| \mbb{X},\Theta})$ is the relative entropy between $Q_{\mbb{X} \mbb{X}'\Theta \mbb{Y}}$ and $ P_{\mbb{X}} \times Q_{\mbb{X}'\Theta} \times P_{\mbb{Y}| \mbb{X},\Theta}$. Let $\eta^{*}$ be defined as
  \begin{align}
    \eta^{*} = \inf \{\; \eta &: \; Q_{\mbb{X} \mbb{X}'\Theta \mbb{Y}} \in D_{\eta}, \; Q_{\mbb{X}' \mbb{X}\Theta' \mbb{Y}} \in D_{\eta} \;\; {\rm for \; some} \; Q_{\mbb{X} \mbb{X}' \Theta \Theta' \mbb{Y}} \in \mcal{P}(\mcal{X} \times \mcal{X} \times \mcal{T} \times \mcal{T}  \times \mcal{Y})\}.
  \end{align}
If the AVC is non-symmetrizable, then $\eta^{*} > 0$. 
\end{lemma}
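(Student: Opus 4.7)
The plan is to prove the contrapositive: $\eta^* = 0$ implies symmetrizability. Since the alphabets are finite, $\mcal{P}(\mcal{X} \times \mcal{X} \times \mcal{T} \times \mcal{T} \times \mcal{Y})$ is compact, and the functional
\begin{align}
G(Q) := \max\bigl\{&D(Q_{\mbb{X}\mbb{X}'\Theta\mbb{Y}} \| P_{\mbb{X}} \times Q_{\mbb{X}'\Theta} \times P_{\mbb{Y}|\mbb{X},\Theta}),\, D(Q_{\mbb{X}'\mbb{X}\Theta'\mbb{Y}} \| P_{\mbb{X}} \times Q_{\mbb{X}\Theta'} \times P_{\mbb{Y}|\mbb{X},\Theta})\bigr\} \non
\end{align}
is lower semicontinuous in $Q$, being the max of compositions of continuous marginalisation maps with the jointly lower semicontinuous KL divergence. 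Observing that $\eta^* = \inf_Q G(Q)$, this infimum is attained at some $Q^*$. If $\eta^* = 0$, both constituent divergences vanish at $Q^*$, and since a KL divergence on a finite alphabet is zero iff the two distributions coincide, one obtains
\begin{align}
Q^*_{\mbb{X}\mbb{X}'\Theta\mbb{Y}}(x,x',\theta,y) &= P_{\mbb{X}}(x)\, Q^*_{\mbb{X}'\Theta}(x',\theta)\, P_{\mbb{Y}|\mbb{X},\Theta}(y|x,\theta), \non \\
Q^*_{\mbb{X}'\mbb{X}\Theta'\mbb{Y}}(x',x,\theta',y) &= P_{\mbb{X}}(x')\, Q^*_{\mbb{X}\Theta'}(x,\theta')\, P_{\mbb{Y}|\mbb{X},\Theta}(y|x',\theta'). \non
\end{align}

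Next I would exploit these two factorisations. Marginal consistency forces $Q^*_{\mbb{X}} = Q^*_{\mbb{X}'} = P_{\mbb{X}}$, which by hypothesis has full support, so the conditionals $Q^*_{\Theta|\mbb{X}'}$ and $Q^*_{\Theta'|\mbb{X}}$ are defined everywhere. Marginalising the first identity over $\theta$ and the second over $\theta'$ both yield the same $Q^*_{\mbb{X}\mbb{X}'\mbb{Y}}(x,x',y)$; equating them and cancelling the positive factor $P_{\mbb{X}}(x)P_{\mbb{X}}(x')$ gives
\begin{align}
\sum_{\theta} Q^*_{\Theta|\mbb{X}'}(\theta|x')\, P_{\mbb{Y}|\mbb{X},\Theta}(y|x,\theta) = \sum_{\theta} Q^*_{\Theta'|\mbb{X}}(\theta|x)\, P_{\mbb{Y}|\mbb{X},\Theta}(y|x',\theta) \non
\end{align}
for every $x,x',y$. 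Set $P^{(1)}(\theta|x) := Q^*_{\Theta|\mbb{X}'}(\theta|x)$ and $P^{(2)}(\theta|x) := Q^*_{\Theta'|\mbb{X}}(\theta|x)$, both in $\mcal{P}(\mcal{T}|\mcal{X})$.

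To assemble a single symmetrising kernel, I would use the $(x,x')$-symmetry of the identity just derived. Swapping $x \leftrightarrow x'$ produces a companion equation; adding the two and dividing by $2$ gives, for $P := \tfrac{1}{2}(P^{(1)} + P^{(2)}) \in \mcal{P}(\mcal{T}|\mcal{X})$,
\begin{align}
\sum_{\theta} P(\theta|x')\, P_{\mbb{Y}|\mbb{X},\Theta}(y|x,\theta) = \sum_{\theta} P(\theta|x)\, P_{\mbb{Y}|\mbb{X},\Theta}(y|x',\theta), \non
\end{align}
which is precisely the symmetrisability condition of Section II-B and contradicts the hypothesis. Hence $\eta^* > 0$.

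The delicate step is the lower semicontinuity of $G$: its reference distribution in each KL term is itself a function (via marginalisation) of the argument $Q$, so one must carefully justify that this self-reference is still compatible with joint lower semicontinuity of divergence. Everything downstream---exploiting zero divergence to recover the product factorisations, then averaging $P^{(1)}$ and $P^{(2)}$ to assemble a symmetriser---is routine once the hypothesis $P_{\mbb{X}}(x)>0$ for every $x$ is invoked to ensure the relevant conditionals are well defined.
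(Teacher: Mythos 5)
Your proof is correct, and it is actually more self-contained than the paper: the paper does not argue this lemma at all but simply defers to \cite[Lemma 6]{kosut2018finite}, so there is nothing to compare against line by line. The worry you flag about lower semicontinuity is not, in fact, a gap. On a finite alphabet the map $Q\mapsto Q_{\mathbb{X}'\Theta}$ is linear (hence continuous) on the simplex, and the relative entropy $D(P\|R)$ is jointly lower semicontinuous in the pair $(P,R)$ as an extended-real-valued function. Your functional is the composition of the continuous map $Q\mapsto\bigl(Q_{\mathbb{X}\mathbb{X}'\Theta\mathbb{Y}},\,P_{\mathbb{X}}\times Q_{\mathbb{X}'\Theta}\times P_{\mathbb{Y}|\mathbb{X},\Theta}\bigr)$ with the jointly LSC divergence, followed by a maximum, so it is LSC on a compact set and attains its infimum; the self-reference of the second argument creates no difficulty. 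The subsequent algebra also checks out. The averaging step is the part worth spelling out: writing $(\star)$ for the identity $\sum_\theta P^{(1)}(\theta|x')\,P_{\mathbb{Y}|\mathbb{X},\Theta}(y|x,\theta)=\sum_\theta P^{(2)}(\theta|x)\,P_{\mathbb{Y}|\mathbb{X},\Theta}(y|x',\theta)$, observe that $(\star)$ applied at $(x,x')$ makes the first term of the left side of the desired identity for $P=\tfrac12(P^{(1)}+P^{(2)})$ equal to the second term of its right side, while $(\star)$ applied at $(x',x)$ matches the remaining two terms; so the two sides agree exactly. Cancelling $P_{\mathbb{X}}(x)P_{\mathbb{X}}(x')$ is licensed by the hypothesis $P_{\mathbb{X}}>0$, which is exactly where the positivity assumption enters. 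One small exposition point: it is worth stating explicitly that $Q^*_{\mathbb{X}}=Q^*_{\mathbb{X}'}=P_{\mathbb{X}}$ follows from the two factorizations by summing out the other variables before invoking full support to define the conditionals.
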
  

Using the above lemma, we now construct the function $Z : \mcal{X}^{d_{n}} \times \mcal{X}^{d_{n}} \times \mcal{Y}^{d_{n}} \rightarrow \{0,1\}$ is defined as follows. Since the AVC is non-symmetrizable, from Lemma \ref{lem:non-symm-and-joint-dist} we have $\eta^{*} > 0$. Choose $\eta$ such that $0 < \eta < \eta^{*}$. Set 
\begin{align}
  Z(x_{a},\bar{x}_{a},y_{a}) = \left\{
  \begin{array}{c l}
    1 & {\rm if} \; (x_{a},y_{a}) \in \Ascr\; \& {\rm \;either}\; (\bar{x}_{a},y_{a}) \notin \Ascr \;{\rm or\; } \exists \; \theta_{a} \in \mcal{T}^{d_{n}} \; {\rm s. \;t.}\; T_{x_{a}, \bar{x}_{a}, \theta_{a}, y_{a}} \in D_{\eta}  \\
0 & {\rm otherwise},
  \end{array} \right. \label{eq:defn-func-Z}
\end{align}
where recall that $T_{x_{a}, \bar{x}_{a}, \theta_{a}, y_{a}} $ is the joint type of $x_{a}, \bar{x}_{a}, \theta_{a}, y_{a}.$
 Thus, the function $Z$ satisfies \eqref{eq:Z-func-cond} and we have $ Z(x_{a},\bar{x}_{a},y_{a}) Z(\bar{x}_{a},x_{a},y_{a}) = 0 \;\; \forall \; x_{a} , \bar{x}_{a} \in  \mcal{X}^{d_{n}}, y_{a} \in \mcal{Y}^{d_{n}}$. Since otherwise, there exists $ (x_{a},\bar{x}_{a},y_{a}) \in \mcal{X}^{d_{n}} \times \mcal{X}^{d_{n}} \times \mcal{Y}^{d_{n}}$ such that $Z(x_{a},\bar{x}_{a},y_{a})Z(\bar{x}_{a},x_{a},y_{a}) = 1$. This implies $(x_{a},y_{a}) \in \Ascr$ and $(\bar{x}_{a},y_{a}) \in \Ascr$ and hence, there exist joint types $T_{x_{a}, \bar{x}_{a}, \theta_{a}, y_{a}} \in D_{\eta}$ and $T_{\bar{x}_{a}, x_{a}, \theta_{a}', y_{a}} \in D_{\eta}$ for some $\theta_{a}, \theta_{a}' \in \mcal{T}^{d_n}$. Thus, from the definition of $\eta^{*}$, we get $\eta^{*} \leq \eta$. However, this is a contradiction since $\eta$ is chosen to be strictly lesser than $\eta^{*}$. Using this function $Z$ we get the following upper bound. The proof is in the Appendix.


 \begin{theorem} \label{thm:avg-error-asymp-bnd}
For the above choice of $\Ascr, Z, K$ and $d_n$, we have that 
   \begin{align}
&      \sqrt{\frac{2 \ln (3|\mcal{T}|^{d_n})}{K}} +   \min_{P_{X_{a}}} \max_{\theta_{a} \in \mcal{T}^{d_n}} \left[ \vphantom{  \max_{\bar{x}_{a}} } \mbb{P}((X_{a},Y_{a}) \notin \Ascr  | \theta_{a} ) \right.  +   2K\log e \; \mbb{P}(Z(X_{a},\bar{X}_{a},Y_{a}) = 0, (X_{a},Y_{a}) \in \Ascr | \theta_{a}) \vphantom{  \max_{\bar{x}_{a}} }  \non\\
                    &  \left.  +\max_{\bar{x}_{a} \in \mcal{X}^{d_n} }  2 \log 3|\mcal{T}|^{d_n} \mbb{P}(Z(X_{a},\bar{x}_{a},Y_{a}) = 0, (X_{a},Y_{a}) \in \Ascr | \theta_{a})  \right]  \non \\
&\leq   \sqrt{\frac{2 \ln (3|\mcal{T}|^{d_n})}{K}} + \frac{ {\rm V_{0}}}{d_n\left( \delta - \frac{\log \sqrt{d_n}}{d_n}\right)^{2}} + \frac{2\log e}{\sqrt{d_n}} +  \frac{2 \log 3|\mcal{T}|^{d_n}(d_n+1)^{|\mcal{X}|^{2}|\Theta||\mcal{Y}|}}{\exp(d_n(\eta))}. \label{eq:avg-error-asymp-bnd}
   \end{align}
 \end{theorem}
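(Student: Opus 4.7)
The plan is to estimate the three nontrivial terms on the left-hand side of \eqref{eq:avg-error-asymp-bnd} one at a time; the prefix $\sqrt{2\ln(3|\mcal{T}|^{d_n})/K}$ passes through unchanged. I will fix the threshold in the definition of $\Ascr$ as $\gamma = d_n(C - \delta) + \log\sqrt{d_n}$ for a small $\delta > 0$, and simultaneously choose $K$ exponential in $d_n$ at a rate just below $C$ so that $K\exp(-\gamma)$ is of order $1/\sqrt{d_n}$, while still satisfying the random-code-reduction hypothesis \eqref{eq:cond-K}.

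For $\mbb{P}((X_a, Y_a) \notin \Ascr \mid \theta_a)$, I will bound the probability by restricting to the single type $T_\theta = T_{\theta_a}$ appearing in the defining union of $\Ascr$. Because $\theta_a$ is deterministic and $P^*_{X_a}$ is i.i.d., the information density $i_{X^*;Y_{T_{\theta_a}^n}}(X_a;Y_a)$ becomes a sum of $d_n$ independent single-letter terms; a direct calculation gives the conditional mean $d_n I(\mbb{X};\mbb{Y}_{T_{\theta_a}}) \geq d_n C$ (using $P^*_{\mbb{X}} \in \Pi_{\mbb{X}}$) and conditional variance at most $d_n V_0$ by \eqref{eq:var-bnded}. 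Chebyshev's inequality applied to the gap $d_n\delta - \log\sqrt{d_n}$ then produces the summand $V_0/[d_n(\delta - \log\sqrt{d_n}/d_n)^2]$.

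For the term $2K\log e\cdot\mbb{P}(Z=0,(X_a,Y_a)\in\Ascr \mid \theta_a)$, I will exploit the definition \eqref{eq:defn-func-Z}: the event $\{Z=0\}\cap\{(X_a,Y_a)\in\Ascr\}$ forces $(\bar X_a,Y_a)\in\Ascr$. Since $\bar X_a \sim P^*_{X_a}$ is independent of $(X_a,Y_a,\theta_a)$, a standard change-of-measure bound on $\{i_{X^*;Y_{T_\theta^n}}(\bar X_a;Y_a) > \gamma\}$ gives $\exp(-\gamma)$ per type, and the union over the $(d_n+1)^{|\Tscr|}$ types is absorbed into the choice of $K$ so that $K \cdot \mbb{P}((\bar X_a,Y_a)\in\Ascr\mid\theta_a)\le 1/\sqrt{d_n}$; the prefactor $2\log e$ yields the term $2\log e/\sqrt{d_n}$. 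For the deterministic-$\bar x_a$ term, the other branch of \eqref{eq:defn-func-Z} requires $T_{X_a,\bar x_a,\theta_a,Y_a} \notin D_\eta$ for every $\theta_a$, so the joint empirical distribution lies at relative entropy at least $\eta$ from the reference product distribution; Sanov-type estimates by the method of types give probability at most $\exp(-d_n\eta)$ per joint type, with at most $(d_n+1)^{|\Xscr|^2|\Theta||\Yscr|}$ joint types on $\Xscr\times\Xscr\times\Tscr\times\Yscr$, and the $2\log 3|\mcal{T}|^{d_n}$ coefficient yields the last summand.

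The main obstacle is the coupled choice of the information-density threshold $\gamma$, the ensemble size $K$, and the index-prefix length $d_n$: $\gamma$ must be close enough to $d_n C$ for Chebyshev to give a $1/d_n$ bound; $K\exp(-\gamma)$ must be $O(1/\sqrt{d_n})$ so that the change-of-measure factor in Step~2 wipes out $K$; and $K$ must still satisfy \eqref{eq:cond-K} so that Lemma \ref{lem:ran-code-red} applies to the underlying random joint source-channel code. The choice $\gamma = d_n(C - \delta) + \log\sqrt{d_n}$ together with $K$ exponential in $d_n$ at rate slightly below $C$ balances all three requirements, and summing the three estimates with the passing-through term proves \eqref{eq:avg-error-asymp-bnd}.
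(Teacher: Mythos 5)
Your proof follows essentially the same route as the paper: choose $P_{X_a}$ i.i.d. according to a capacity-achieving input distribution, set the threshold $\gamma$ to $\log(\sqrt{d_n}K)$ (your $d_n(C-\delta)+\log\sqrt{d_n}$ is the same thing up to the ceiling in \eqref{eq:defn-K-and-dn}), bound $\mbb{P}((X_a,Y_a)\notin\Ascr\mid\theta_a)$ by restricting to the type $T_{\theta_a}$, applying the min-max characterization of $C$ for the mean and a Chebyshev inequality with $V_0$ for the deviation, and then bound the two $Z$-terms by change of measure and a method-of-types / Sanov estimate exactly as the paper does by citing the proof of Theorem 4 in \cite{kosut2018finite}. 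One step is left imprecise: when you say the $(d_n+1)^{|\Tscr|}$ union over state types is ``absorbed into the choice of $K$'' so that $K\cdot\mbb{P}((\bar X_a,Y_a)\in\Ascr\mid\theta_a)\le 1/\sqrt{d_n}$, this does not actually follow with $\gamma=\log(\sqrt{d_n}K)$ as stated, since multiplying $\exp(-\gamma)=1/(\sqrt{d_n}K)$ by the type count yields a polynomially growing quantity; the missing step is the argument from \cite{kosut2018finite} (which the paper also relies on without reproducing) that eliminates this factor. Modulo this gloss, which is shared with the paper's own presentation, the proposal is correct.
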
 

\vspace{0.5cm}
\subsubsection{Random code error}

In this section, we bound the error terms in \eqref{eq:minimax-val-bnd} corresponding to the random joint source-channel code. The proof is in the Appendix.
\begin{theorem} \label{thm:ran-code-bnd}
The random code  probability in \eqref{eq:minimax-val-bnd} is bounded as 
   \begin{align}
     & \inf_{\gamma > 0}\max_{\theta_b \in \mcal{T}^{n}} \left[ \mbb{E} \left[ \exp\left( -\left| i_{X;Y_{q^*}}(X_b;Y_b) - \log \frac{\gamma}{P_{\wi{S}}(\Bscr_{\bm{d}}(S))} \right|^{+} \right) |\theta_b \right] + e^{1-\gamma} \vphantom{ \log \frac{\gamma}{P_{\wi{S}}(\Bscr_{\bm{d}}(S))}}\right] \non \\
&\leq  {\sf Q}\left( \frac{nC - kR(\bm{d}) - \Gamma(k)}{\sqrt{n V_{{\sf C}} + k V_{{\sf S}}(\bm{d})}}\right) + \frac{B}{n+k} + \frac{K_{0} + 2}{\sqrt{k}}, \label{eq:ran-code-bnd}
   \end{align}
   where $B, K_{0} > 0$ are constants and $     \Gamma(k) = \bar{c}\log k+ c + \log \left(\frac{1}{2}\log k + 1 \right), \bar{c}, c > 0$.
 \end{theorem}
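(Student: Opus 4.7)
The plan is to specialize Theorem~\ref{thm:joint-random-code-achieve-weak} to a product input distribution aligned with the AVC saddle-point and then reduce the resulting bound to a Berry--Esseen estimate. Concretely, I would pick $P_{\mbb{X}}^{*} \in \Pi_{\mbb{X}}$ and the assumed unique $q_{\Theta}^{*} \in \Pi_{\Theta}$ that jointly attain the saddle point in \eqref{eq:avc-cap}, and take $P_{X_{b}} = (P_{\mbb{X}}^{*})^{\otimes n}$ together with the product state $q^{*} = (q_{\Theta}^{*})^{\otimes n}$. Because both factorize across time, so does $(q^{*} P_{Y|X,\bm{\Theta}})$ and $(P_{X_{b}} q^{*} P_{Y|X,\bm{\Theta}})$, so the channel information density decomposes as
\begin{align*}
i_{X;Y_{q^{*}}}(X_{b};Y_{b}) \;=\; \sum_{i=1}^{n} i_{\mbb{X};\mbb{Y}_{q_{\Theta}^{*}}}(\mbb{X}_{i};\mbb{Y}_{i}),
\end{align*}
a sum of independent (but not identically distributed) random variables once $\theta_{b} \in \Tscr^{n}$ is fixed, since $\mbb{Y}_{i} \sim P_{\mbb{Y}|\mbb{X},\Theta=\theta_{b,i}}$ conditioned on $\mbb{X}_{i}$.

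For the source contribution, I would invoke a Kostina--Verd\'{u}-type lemma for lossy compression: choosing $P_{\wi{S}}$ to be the product of the rate-distortion-optimal reproduction distribution, one has
\begin{align*}
-\log P_{\wi{S}}(\Bscr_{\bm{d}}(S)) \;\leq\; \sum_{i=1}^{k} j_{\sf S}(\mbb{S}_{i},\bm{d}) + \Gamma(k)
\end{align*}
with probability at least $1 - K_{0}/(2\sqrt{k})$, where $\Gamma(k) = \bar{c}\log k + c + \log(\tfrac{1}{2}\log k + 1)$ is precisely the slack appearing in \eqref{eq:ran-code-bnd}; this step accounts for the $K_{0}/\sqrt{k}$ error term and for $\Gamma(k)$ in the $\mathsf{Q}$-argument. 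Next, I would take $\gamma = \log(n+k) + 1$ in Theorem~\ref{thm:joint-random-code-achieve-weak}, so that $e^{1-\gamma} \leq B/(n+k)$ for an absolute constant $B$, yielding the $B/(n+k)$ term. Finally, the elementary estimate $\exp(-|x|^{+}) \leq \mbb{I}\{x \leq 0\} + e^{-t}\mbb{I}\{0 < x \leq t\} + e^{-t}$ (with $t$ of order $\log(n+k)$, absorbed into $\Gamma(k)$) reduces the remaining expectation to a one-sided tail probability of
\begin{align*}
Z_{n,k} \;:=\; \sum_{i=1}^{n} i_{\mbb{X};\mbb{Y}_{q_{\Theta}^{*}}}(\mbb{X}_{i};\mbb{Y}_{i}) \;-\; \sum_{i=1}^{k} j_{\sf S}(\mbb{S}_{i},\bm{d}),
\end{align*}
which is a sum of $n+k$ conditionally independent random variables for any fixed state sequence $\theta_{b}$.

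The last step is Berry--Esseen applied to $Z_{n,k}$. Here the saddle-point property is essential: the KKT conditions for $q_{\Theta}^{*} = \arg\min_{q_{\Theta}} I(\mbb{X}^{*};\mbb{Y}_{q_{\Theta}})$ with $\mbb{X}^{*} \sim P_{\mbb{X}}^{*}$ give the letterwise lower bound $\mbb{E}[\,i_{\mbb{X};\mbb{Y}_{q_{\Theta}^{*}}}(\mbb{X};\mbb{Y}) \,\vert\, \Theta = \theta\,] \geq C$ for every $\theta \in \Tscr$, so for any fixed $\theta_{b}$ one has $\mbb{E}[Z_{n,k} \mid \theta_{b}] \geq nC - kR(\bm{d})$, independently of the jammer's choice. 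The finiteness assumptions \eqref{eq:info-dens-fin}--\eqref{eq:d-tilt-fin} combined with the finite alphabet give a uniform-in-$\theta$ upper bound $\mathrm{Var}(Z_{n,k} \mid \theta_{b}) \leq nV_{\sf C} + kV_{\sf S}(\bm{d})$ (using that $V_{\sf C}^{+} = V_{\sf C}^{-}$ by uniqueness of $q_{\Theta}^{*}$) and a uniform bound on the third absolute moment. A non-identically distributed Berry--Esseen then yields the $\mathsf{Q}$-function term with an $O(1/\sqrt{n+k})$ error, and taking the maximum over $\theta_{b}$ preserves the bound since every step was uniform in $\theta_{b}$. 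The \emph{main obstacle} is exactly this uniform-in-$\theta_{b}$ control: the mean lower bound $\mbb{E}[i \mid \theta] \geq C$ is not a tautology and is what the saddle-point KKT conditions for the AVC capacity buy us; without them, the $\mathsf{Q}$-argument would degrade to a worst-case single-letter mutual information rather than the capacity $C$, and no dispersion statement could be extracted.
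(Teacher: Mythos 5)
Your proposal tracks the paper's own proof in most respects: the same product input distribution $(P_{\mbb{X}}^*)^{\otimes n}$ with $P_{\mbb{X}}^*\in\Pi_{\mbb{X}}$, the same product reconstruction distribution $(P_{\wi{\mbb{S}}}^*)^{\otimes k}$, the specialization of Theorem~\ref{thm:joint-random-code-achieve-weak}, the reduction to a sum of $n+k$ independent terms, and a Berry--Esseen finish. Your use of the saddle-point first-order condition to obtain $\mbb{E}[\,i_{\mbb{X};\mbb{Y}_{q_\Theta^*}}(\mbb{X};\mbb{Y})\mid\Theta=\theta\,]\geq C$ uniformly in $\theta$ is also exactly the right tool for the mean, and is what the paper invokes (via Lemma~12.10 of \cite{csiszar2011information}) elsewhere.

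The gap is in the variance control. You assert a uniform-in-$\theta_b$ bound $\mathrm{Var}(Z_{n,k}\mid\theta_b)\leq nV_{\sf C}+kV_{\sf S}(\bm{d})$, claiming it follows from the finite alphabet together with $V_{\sf C}^+=V_{\sf C}^-$. That does not follow. By definition $V_{\sf C}^\pm$ are variances of $i_{\mbb{X};\mbb{Y}_{q_\Theta}}$ when $\mbb{Y}\sim q_\Theta P_{\mbb{Y}|\mbb{X},\Theta}$ with $q_\Theta\in\Pi_\Theta$; their equality under uniqueness of $q_\Theta^*$ says nothing about $\mathrm{Var}\bigl(i_{\mbb{X};\mbb{Y}_{q_\Theta^*}}(\mbb{X};\mbb{Y})\mid\Theta=\theta\bigr)$ for an arbitrary fixed $\theta\in\Tscr$, which can be strictly larger than $V_{\sf C}$. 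Finite alphabets give you a uniform bound by \emph{some} constant, not by $V_{\sf C}$. The paper's proof handles this with a genuinely different step: it reparametrizes $\max_{\theta_b}$ as $\max_{q\in\Pscr(\Tscr^n)}$, carries both the $q$-dependent mean $D_{n+k}(q)$ and the $q$-dependent variance $V_{n+k}(q)$ through the Berry--Esseen threshold $\Gamma_{n+k}(q)$, and then invokes Lemmas~63 and~64 of \cite{polyanskiy2010channel} to restrict the variance maximization to $\Pi_\Theta$ up to an $O(1)$ correction --- the point being that any $q$ that inflates the variance also strictly inflates the mean above $C$ by enough to dominate. Without that trade-off argument, your $\mathsf{Q}$-function argument would be forced to use a worst-case single-letter variance rather than $V_{\sf C}$, which would not yield the stated dispersion term.

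Two smaller issues. First, your pointwise estimate $\exp(-|x|^+)\leq\mbb{I}\{x\leq 0\}+e^{-t}\mbb{I}\{0<x\leq t\}+e^{-t}$ is false for $0<x<t$; the usable form is $\exp(-|x|^+)\leq\mbb{I}\{x\leq t\}+e^{-t}$, which is what the paper's split over the set $\mcal{H}$ implements. Second, your $\gamma=\log(n+k)+1$ gives an $e^{1-\gamma}\leq B/(n+k)$ term, which matches, but the paper takes $\gamma=\tfrac12\log_e k+1$ and must then track $\log\gamma$ through $\Gamma(k)$; your account of where the $\log(\tfrac12\log k+1)$ piece of $\Gamma(k)$ comes from is not quite aligned with the paper's bookkeeping (it comes from the choice of $\gamma$, not from the Kostina--Verd\'u source lemma). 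These are fixable, but the variance-control gap is substantive and needs the Polyanskiy lemmas or an equivalent argument.
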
 

\vspace{0.5cm}

Putting together the above two theorems, we get the following result.
\begin{theorem} \label{thm:upper-val-bnd-final}
The upper value is bounded as 
\begin{align}
 \overline{\nu}(k,n) &\leq  {\sf Q}\left( \frac{nC - kR(\bm{d}) - \Gamma(k)}{\sqrt{n V_{{\sf C}} + k V_{{\sf S}}(\bm{d})}}\right) +  \frac{B}{\sqrt{n+k}} + \frac{K_{0} + 2}{\sqrt{k}} \non \\
& + \sqrt{\frac{2 \ln (3|\mcal{T}|^{d_n})}{K}} + \frac{2\log e}{\sqrt{d_n}} + \frac{ {\rm V_{0}}}{d_n\left( \delta - \frac{\log \sqrt{d_n}}{d_n}\right)^{2}}  +  \frac{2 \log 3|\mcal{T}|^{d_n}(d_n+1)^{|\mcal{X}|^{2}|\Theta||\mcal{Y}|}}{\exp(d_n(\eta))}, \label{eq:upper-val-bnd-final}
   \end{align}
   where $ \Gamma(k)$, $K_{0}$ and $B$ are as in Theorem \ref{thm:ran-code-bnd}.
\end{theorem}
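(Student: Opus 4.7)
The plan is to simply concatenate the three main bounds already assembled in Sections~V and~VI. Theorem~\ref{thm:minimax-val-bnd} expresses $\overline{\nu}(k,n)$ as the sum of (i) the average-error terms arising from the deterministic channel code used to transmit the codebook index of length $d_n$, and (ii) the random-code error terms arising from the random joint source-channel code of length $n$. I would invoke Theorem~\ref{thm:minimax-val-bnd} with the particular choices of the typical set $\mathcal{A}$ from \eqref{eq:defn-set-A} and the hypothesis-test function $Z$ from \eqref{eq:defn-func-Z}, which were crafted precisely so that $Z$ satisfies the asymmetry requirement \eqref{eq:Z-func-cond} (as verified using non-symmetrizability and Lemma~\ref{lem:non-symm-and-joint-dist}). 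With these substitutions, the first group of terms is exactly the left-hand side of \eqref{eq:avg-error-asymp-bnd}, and the second group is exactly the left-hand side of \eqref{eq:ran-code-bnd}.

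Next, I would apply Theorem~\ref{thm:avg-error-asymp-bnd} to replace the average-error group by the explicit bound
\[
\sqrt{\tfrac{2\ln(3|\mathcal{T}|^{d_n})}{K}}+\tfrac{\mathrm{V}_0}{d_n(\delta-\tfrac{\log\sqrt{d_n}}{d_n})^2}+\tfrac{2\log e}{\sqrt{d_n}}+\tfrac{2\log 3|\mathcal{T}|^{d_n}(d_n+1)^{|\mathcal{X}|^2|\Theta||\mathcal{Y}|}}{\exp(d_n\eta)},
\]
and Theorem~\ref{thm:ran-code-bnd} to replace the random-code group by the dispersion expression
\[
{\sf Q}\!\left(\tfrac{nC-kR(\bm{d})-\Gamma(k)}{\sqrt{nV_{\sf C}+kV_{\sf S}(\bm{d})}}\right)+\tfrac{B}{n+k}+\tfrac{K_0+2}{\sqrt{k}}.
\]
Adding these two bounds yields the right-hand side of \eqref{eq:upper-val-bnd-final}, except that the Berry--Esseen-type remainder appears as $\tfrac{B}{n+k}$ rather than $\tfrac{B}{\sqrt{n+k}}$; since $\tfrac{B}{n+k}\le \tfrac{B}{\sqrt{n+k}}$ for $n+k\ge 1$, the stated (weaker) form of the bound follows immediately by monotonicity.

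The only care needed is bookkeeping: one must verify that the auxiliary length $d_n$ and the code-cardinality $K$ used in the stochastic code construction of Section~V-C satisfy the hypotheses required by Theorems~\ref{thm:avg-error-asymp-bnd} and~\ref{thm:ran-code-bnd} simultaneously (in particular, condition \eqref{eq:cond-K} of Lemma~\ref{lem:ran-code-red} so that random-code reduction produces a valid ensemble of $K$ deterministic codes, and whatever growth rate of $d_n$ was used to drive the average-error bound to its form). Since both auxiliary theorems have already been established and the construction in Section~V-C consumes exactly $d_n+n$ channel uses with the reduced ensemble of size $K$, no additional analysis is required: the proof reduces to substitution and the trivial inequality on the remainder term. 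There is no genuine obstacle; the ``hard parts'' have been discharged in Theorems~\ref{thm:avg-error-asymp-bnd} and~\ref{thm:ran-code-bnd}, and the present statement is just their sum.
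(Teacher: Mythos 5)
Your proposal is correct and follows essentially the same route as the paper: the paper's proof of Theorem~\ref{thm:upper-val-bnd-final} is literally a one-line substitution of \eqref{eq:avg-error-asymp-bnd} and \eqref{eq:ran-code-bnd} into \eqref{eq:minimax-val-bnd}. Your observation about the $\tfrac{B}{n+k}$ versus $\tfrac{B}{\sqrt{n+k}}$ mismatch is a real one (and your monotonicity patch is valid), but in fact the proof of Theorem~\ref{thm:ran-code-bnd} in the Appendix derives the Berry--Esseen remainder as $\tfrac{B_{n+k}(q)}{\sqrt{n+k}}$, so the statement of that theorem contains a typo and Theorem~\ref{thm:upper-val-bnd-final} carries the correct form.
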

\begin{proof}
  The required bound follows by substituting the bounds in \eqref{eq:avg-error-asymp-bnd} from Theorem \ref{thm:avg-error-asymp-bnd} and \eqref{eq:ran-code-bnd} from Theorem \ref{thm:ran-code-bnd}  in \eqref{eq:minimax-val-bnd}. 
\end{proof}
We now choose a sequence of $(k,n)$ for which the upper and lower values of the game tend to zero. Take
\begin{align}
K = c_{0}n, \; d_n = \ceil[\bigg]{\frac{\log K}{C - \delta}}, \label{eq:defn-K-and-dn}
\end{align}
 where $ c_{0} \in \mbb{N}$ is such that $K$ satisfies the equation \eqref{eq:cond-K}, $ \delta > 0$ and $\ceil{\bullet}$ is the ceiling function.

The following corollary gives the dispersion bound for the rate of communication.
\begin{corollary}
\textit{[Dispersion based achievability]}  Let $\epsilon > 0$. Consider a sequence of $(k,n)$ satisfying
  \begin{align}
    & nC - kR(\bm{d}) - \Gamma(k)  \geq \sqrt{n V_{{\sf C}} + k V_{{\sf S}}(\bm{d})} \;{\sf Q}^{-1} \left( \epsilon \right).
  \end{align}
Then, we have that  $\underline{\nu}(k,n) \leq \overline{\nu}(k,n) \leq   \epsilon$.
\end{corollary}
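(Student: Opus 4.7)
\medskip
\noindent\textbf{Proof proposal.} The plan is to invoke Theorem~\ref{thm:upper-val-bnd-final} directly and verify that, under the stipulated scaling $K = c_0 n$ and $d_n = \lceil \log K / (C - \delta) \rceil$ from \eqref{eq:defn-K-and-dn}, the dominant ${\sf Q}$-term in \eqref{eq:upper-val-bnd-final} is at most $\epsilon$ while all remaining terms are $o(1)$ along the sequence $(k,n)$.

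First, I would use the strict monotonicity of the complementary Gaussian c.d.f.\ ${\sf Q}$: the hypothesis is equivalent to
\[
{\sf Q}\!\left(\frac{nC - kR(\bm{d}) - \Gamma(k)}{\sqrt{n V_{{\sf C}} + k V_{{\sf S}}(\bm{d})}}\right) \ \leq\ {\sf Q}({\sf Q}^{-1}(\epsilon)) \ =\ \epsilon,
\]
which bounds the leading term in \eqref{eq:upper-val-bnd-final}. The other two probability-like contributions to the bound split into ``header-coding'' terms (from Theorem~\ref{thm:avg-error-asymp-bnd}) and ``joint source-channel'' residuals (from Theorem~\ref{thm:ran-code-bnd}); the remaining task is to show each is $o(1)$.

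Next, I would check the vanishing of these residuals under $d_n = \Theta(\log n) \to \infty$ and $K = c_0 n \to \infty$. The header-coding terms $\sqrt{2\ln(3|\mcal{T}|^{d_n})/K} = O(\sqrt{\log n / n})$, $\tfrac{2\log e}{\sqrt{d_n}} = O(1/\sqrt{\log n})$, and $\tfrac{V_0}{d_n(\delta - \log\sqrt{d_n}/d_n)^2} = O(1/\log n)$ all tend to zero; the symmetrizability-gap term has numerator polynomial in $d_n$ (hence polylog in $n$) and denominator $\exp(d_n\eta)$ which grows polynomially in $n$ with positive exponent $\eta/(C-\delta)$, so the ratio vanishes. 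The random-code remainders $\tfrac{B}{\sqrt{n+k}}$ and $\tfrac{K_0+2}{\sqrt{k}}$ likewise go to zero along the growth regime forced by the hypothesis (for any fixed $\epsilon < 1/2$, $n$ and $k$ must diverge for the inequality to be sustainable).

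The main subtlety lies in reconciling two competing requirements on the header length $d_n$: it must be large enough to drive the header-coding residuals to zero, yet small enough that the prepended header contributes only sub-leading effects and its rate cost is absorbed into the $\Gamma(k)$ correction appearing inside the ${\sf Q}$-argument. The choice $d_n = \lceil \log K/(C-\delta)\rceil$ strikes exactly this balance, since $d_n/n \to 0$ and simultaneously $d_n \eta = \Omega(\log n)$, ensuring both decay of errors and negligible blocklength overhead. Combining these estimates with Theorem~\ref{thm:upper-val-bnd-final} yields $\overline{\nu}(k,n) \leq \epsilon + o(1)$ along the sequence, which gives the claimed bound for sufficiently large $n$; the inequality $\underline{\nu}(k,n) \leq \overline{\nu}(k,n)$ is immediate from the definitions of the upper and lower values.
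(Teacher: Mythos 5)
Your proposal follows the paper's approach closely: invoke Theorem~\ref{thm:upper-val-bnd-final}, bound the leading ${\sf Q}$-term using the hypothesis and monotonicity of ${\sf Q}$, and verify that the residual terms vanish under the scaling in \eqref{eq:defn-K-and-dn}. Your accounting of the residuals (the $O(\sqrt{\log n/n})$ header term, the $O(1/\sqrt{\log n})$ and $O(1/\log n)$ decays, and the polynomial-over-polynomial collapse of the symmetrizability-gap term via $\exp(d_n\eta) = (c_0 n)^{\eta/(C-\delta)}$) is correct.

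There is one small logical gap at the end that the paper handles differently. You conclude $\overline{\nu}(k,n) \le \epsilon + o(1)$ and then assert this "gives the claimed bound for sufficiently large $n$." That inference does not go through: since every residual term in \eqref{eq:upper-val-bnd-final} is nonnegative, the upper bound $\epsilon + \delta(k,n)$ with $\delta(k,n)>0$ never dips below $\epsilon$, no matter how large $n$ is. The paper sidesteps this by absorbing $\delta(k,n)$ into the ${\sf Q}^{-1}$ argument: it takes $(k,n)$ satisfying the \emph{strengthened} condition
\begin{align*}
nC - kR(\bm{d}) - \Gamma(k) \;\geq\; \sqrt{n V_{{\sf C}} + k V_{{\sf S}}(\bm{d})}\,{\sf Q}^{-1}\!\bigl(\epsilon - \delta(k,n)\bigr),
\end{align*}
under which the ${\sf Q}$-term is at most $\epsilon - \delta(k,n)$ and adding back the residuals gives exactly $\overline{\nu}(k,n)\le\epsilon$; the remark "$\delta(k,n)\to 0$" then serves as the bridge to the stated hypothesis. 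To make your proof match the claimed conclusion cleanly, either adopt that strengthened hypothesis, or state the conclusion as $\overline{\nu}(k,n) \le \epsilon + \delta(k,n)$ with $\delta(k,n)\to 0$ explicitly.
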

\begin{proof}
  Take $(k,n)$ such that   $nC - kR(\bm{d})  \geq \sqrt{n V_{{\sf C}} + k V_{{\sf S}}(\bm{d})} \;{\sf Q}^{-1} \left( \epsilon -  \delta(k,n)  \right) +  \Gamma(k)$, where
  \begin{align}
    \delta(k,n) &= \frac{B}{\sqrt{n+k}} + \sqrt{\frac{2 \ln (3|\mcal{T}|^{d_n})}{K}}  + \frac{2\log e}{\sqrt{d_n}}   + \frac{ {\rm V_{0}}}{d_n\left( \delta - \frac{\log \sqrt{d_n}}{d_n}\right)^{2}} +  \frac{2 \log 3|\mcal{T}|^{d_n}(d_n+1)^{|\mcal{X}|^{2}|\Theta||\mcal{Y}|}}{\exp(d_n(\eta))} \non
  \end{align}
  and $K, d_n$ are as in \eqref{eq:defn-K-and-dn}.
  Substituting  in \eqref{eq:upper-val-bnd-final}, we get that
  \begin{align}
& {\sf Q}\left( \frac{nC - kR(\bm{d}) - \Gamma(k)}{\sqrt{n V_{{\sf C}} + k V_{{\sf S}}(\bm{d})}}\right) +  \frac{B}{\sqrt{n+k}} + \sqrt{\frac{2 \ln (3|\mcal{T}|^{d_n})}{K}}  + \frac{2\log e}{\sqrt{d_n}} + \frac{ {\rm V_{0}}}{d_n\left( \delta - \frac{\log \sqrt{d_n}}{d_n}\right)^{2}}  \non \\
      &+  \frac{2 \log 3|\mcal{T}|^{d_n}(d_n+1)^{|\mcal{X}|^{2}|\Theta||\mcal{Y}|}}{\exp(d_n(\eta))} \leq \epsilon.
  \end{align}
  Since $\delta(k,n) \rightarrow 0$ as $k,n \rightarrow \infty$, the result follows.
\end{proof}

\subsection{Dispersion based asymptotics of the lower bound}

In this section, we compute the limit of the lower bound of the probability of error. The proof is in the Appendix.

 \begin{theorem} \label{thm:conv-bnd-asymp}
The lower value $\underline{\nu}(k,n)$ is bounded as
    \begin{align}
\underline{\nu}(k,n) &\geq  \max_{q,P_{\overline{Y}_{q}}, {\sf U}} \; \sup_{\gamma > 0}  \left[  \sum_{s}P_{S}(s) \right.  \min_{x} \left[ \vphantom{ \sum_{u=1}^{{\sf U}}} \mbb{P}\left( j_{S}(s,\bm{d}) - i_{X;\overline{Y}_{q}|U}(x;Y|U) \leq \gamma \right)\right. \non \\
     &  + \exp ( j_{S}(s,\bm{d})-\gamma) \sum_{u=1}^{{\sf U}} \sum_{y}P_{U|X}(u|x) P_{\overline{Y}_{q}|U}(y|u) \left. \left. \mbb{I}\left\{  j_{S}(s,\bm{d}) - i_{X;\overline{Y}_{q}|U}(x;y|u) > \gamma \right\}  \vphantom{ \sum_{u=1}^{{\sf U}}} \right]- \frac{{\sf U}}{\exp(\gamma)} \right] \non \\
      &\geq  {\sf Q}\left( \frac{nC - kR(\bm{d}) + \gamma(n)}{\sqrt{nV_{{\sf C}} + kV_{{\sf S}}(\bm{d}) - K_{3}}}\right) - \frac{K_{1}}{k} - \frac{K_{2}}{\sqrt{n}} - \frac{B'}{\sqrt{n+k}} - \frac{1}{\sqrt{n+1}}, \label{eq:conv-bnd-asymp}
    \end{align}
    where $K_{1}, K_{2}, K_{3}, B' > 0$ are constants and $ \gamma(n) = K_{4}\log (n+1), K_{4} > 0$.
\end{theorem}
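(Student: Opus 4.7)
The first inequality is an instantiation of Theorem~\ref{thm:maximin-val-bnd}; the substantive task is the asymptotic second inequality. My plan is to specialize the outer $\max_{q,P_{\overline{Y}_q},{\sf U}}$ by a type-indexed construction that uniformly controls the inner $\min_x$, and then invoke Berry-Esseen. Concretely, take $q(\theta)=\prod_{i=1}^n q_{\Theta}^{*}(\theta_i)$, the product of the unique capacity-achieving state distribution. Let ${\sf U}=|\mcal{P}_n(\mcal{X})|$ with $U$ equal to the empirical type of $X$, so $P_{U|X}(T|x)=\mbb{I}\{T_x=T\}$. For each type $T$ set $P_{\overline{Y}_q|U=T}(y)=\prod_{j=1}^n Q_T(y_j)$, where $Q_T(y)=\sum_{\mbb{x}} T(\mbb{x})(q_{\Theta}^{*}P_{\mbb{Y}|\mbb{X},\Theta})(y|\mbb{x})$ is the single-letter output of the averaged channel when the input is drawn from type $T$. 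Finally choose $\gamma=K_4\log(n+1)$ with $K_4>|\mcal{X}|$, so that ${\sf U}/\exp(\gamma)=O(1/\sqrt{n+1})$.

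I drop the non-negative indicator term in the bracket of \eqref{eq:dp-bound} to obtain the clean lower bound
\[
A(x,s)\;\ge\;\mbb{P}_{Y\sim P_{Y_q|X=x}}\!\left(j_{S}(s,\bm{d})-i_{X;\overline{Y}_q|U}(x;Y|T_x)\ge\gamma\right),
\]
following the direction of inequality that naturally appears in the $\min\{a,b\}$ decomposition used in the proof of Theorem~\ref{thm:maximin-val-bnd}. For $x$ of type $T$, the log-ratio $i_{X;\overline{Y}_q|U}(x;Y|T)$ is a sum of $n$ independent terms whose joint law depends on $x$ only through $T$, with mean exactly $nI(T;q_{\Theta}^{*}P_{\mbb{Y}|\mbb{X},\Theta})$; this mean is uniformly bounded by $nC$, with equality iff $T\in\Pi_{\mbb{X}}$. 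Averaging further over $S\sim P_S$, the quantity $j_{S}(S,\bm{d})-i_{X;\overline{Y}_q|U}(X;Y|T_X)$ becomes a sum of $n+k$ independent terms with mean $kR(\bm{d})-nI(T)$ and variance $kV_{{\sf S}}(\bm{d})+nV(T)$. A Berry-Esseen approximation (valid by the moment hypotheses \eqref{eq:info-dens-fin} and \eqref{eq:d-tilt-fin}) converts this tail probability into ${\sf Q}\!\big((\gamma-kR(\bm{d})+nI(T))/\sqrt{kV_{{\sf S}}(\bm{d})+nV(T)}\big)$, modulo an additive $O(1/\sqrt{n+k})$ error.

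Since ${\sf Q}$ is decreasing, the $\min_x$ reduces to maximizing the argument over the polynomial-sized set $\mcal{P}_n(\mcal{X})$ of types; the maximum is attained at the type $T^{*}$ closest to $P_{\mbb{X}}^{*}\in\Pi_{\mbb{X}}$. The $O(1/n)$ quantization gap between $I(T^{*};q_{\Theta}^{*}P)$ and $C$ contributes a $O(\log n)$ slack to the numerator that is absorbed into $\gamma(n)=K_4\log(n+1)$, while the gap between $V(T^{*})$ and $V_{{\sf C}}$ gives the $-K_3$ correction in the denominator. Collecting the Berry-Esseen error ($B'/\sqrt{n+k}$), a third-absolute-moment correction for $j_{{\sf S}}(\mbb{S},\bm{d})$ ($K_1/k$), a channel-side moment correction ($K_2/\sqrt{n}$), and the subtracted ${\sf U}/\exp(\gamma)=O(1/\sqrt{n+1})$, yields exactly \eqref{eq:conv-bnd-asymp}.

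The main obstacle is the $\min_x$. With an $x$-independent reference output $P_{\overline{Y}_q}$, an adversarial $x$ concentrated on a letter whose conditional output distribution is far from $P_{\overline{Y}_q}$ would push $\mbb{E}[i(x;Y)]$ above $nC$, shrinking the target tail probability below the required ${\sf Q}$-function and breaking the dispersion bound. The type-indexed $U$ is the crucial device that neutralizes this: by matching $P_{\overline{Y}_q|U=T}$ to the input type $T$, the conditional mean of the information density becomes $nI(T;q_{\Theta}^{*}P)\le nC$ for every realization of $X$. The worst-case $x$ then only achieves the capacity-achieving type, at which the dispersion bound is tight up to the continuity corrections absorbed in $K_3$ and $\gamma(n)$.
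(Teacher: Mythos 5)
Your proposal is correct and follows essentially the same approach as the paper's proof: you specialize $q$ to the i.i.d.\ product of $q_\Theta^*$, take $U$ to be the empirical type of $x$ with ${\sf U}=|\mcal P_n(\mcal X)|$ and $P_{\overline Y_q|U=T}$ as the corresponding product output distribution, drop the non-negative indicator term, apply Berry--Esseen with the type-quantization argument (via the analysis in Appendix~C of Kostina--Verd\'u), and choose $\gamma=\Theta(\log n)$ to control the $-{\sf U}/\exp(\gamma)$ penalty. The only cosmetic deviations are your stricter choice $K_4>|\mcal X|$ (versus the paper's $K_4=|\mcal X|-\tfrac12$, either works) and your loose phrasing of the quantization gap as contributing $O(\log n)$ slack (it contributes $O(1)$ to the numerator, but this is still absorbed into $\gamma(n)$), neither of which affects the validity of the argument.
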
 

We also have the following corollary that gives the second order dispersion bounds on the rate.
\begin{corollary}
\textit{[Dispersion based converse]}  Let $\epsilon > 0$. Consider a sequence of $(k,n)$ satisfying
  \begin{align}
    & nC - kR(\bm{d})  \leq \sqrt{n V_{{\sf C}} + k V_{{\sf S}}(\bm{d})} \;{\sf Q}^{-1} \left( \epsilon \right) -  \gamma(n).
  \end{align}
Then, we have that $  \overline{\nu}(k,n) \geq  \underline{\nu}(k,n) \geq  \epsilon$.
 \end{corollary}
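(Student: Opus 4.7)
The plan is to derive the corollary as an immediate consequence of Theorem~\ref{thm:conv-bnd-asymp}, in direct analogy with how the achievability corollary is obtained from Theorem~\ref{thm:upper-val-bnd-final}. Since $\overline{\nu}(k,n) \geq \underline{\nu}(k,n)$ holds trivially for any zero-sum game, it is enough to establish the lower bound $\underline{\nu}(k,n) \geq \epsilon$, and the entire technical content is already packaged into the second-order bound \eqref{eq:conv-bnd-asymp}.

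The first step is to rearrange the hypothesis on $(k,n)$ so as to match the argument of ${\sf Q}$ appearing in \eqref{eq:conv-bnd-asymp}. Adding $\gamma(n)$ to both sides and dividing by $\sqrt{nV_{\sf C} + kV_{\sf S}(\bm{d})}$ yields
\[
\frac{nC - kR(\bm{d}) + \gamma(n)}{\sqrt{nV_{\sf C} + kV_{\sf S}(\bm{d})}} \;\leq\; {\sf Q}^{-1}(\epsilon).
\]
Replacing the denominator by $\sqrt{nV_{\sf C} + kV_{\sf S}(\bm{d}) - K_3}$ inflates the left-hand side by a factor of $\bigl(1 + K_3/(nV_{\sf C} + kV_{\sf S}(\bm{d}) - K_3)\bigr)^{1/2}$, which tends to $1$ as $k,n \to \infty$. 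Invoking the continuity and monotonicity of ${\sf Q}$ at ${\sf Q}^{-1}(\epsilon)$, I would then conclude
\[
{\sf Q}\!\left( \frac{nC - kR(\bm{d}) + \gamma(n)}{\sqrt{nV_{\sf C} + kV_{\sf S}(\bm{d}) - K_3}} \right) \;\geq\; \epsilon - o(1),
\]
where the $o(1)$ quantifies the perturbation caused by the $-K_3$ correction.

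Substituting back into \eqref{eq:conv-bnd-asymp} and folding the residual terms $K_1/k$, $K_2/\sqrt{n}$, $B'/\sqrt{n+k}$, and $1/\sqrt{n+1}$ into a single vanishing correction $\delta(k,n) \to 0$, I obtain $\underline{\nu}(k,n) \geq \epsilon - \delta(k,n)$. As in the achievability corollary preceding this statement, the hypothesis is understood to hold along a sequence with enough slack that $\delta(k,n)$ can be absorbed, equivalently by replacing $\epsilon$ on the right-hand side of the hypothesis by $\epsilon + \delta(k,n)$ and applying the same chain of inequalities. The conclusion $\underline{\nu}(k,n) \geq \epsilon$ then follows, and combining with $\overline{\nu}(k,n) \geq \underline{\nu}(k,n)$ finishes the argument.

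I do not anticipate a genuine obstacle here: the heavy lifting (constructing dual-feasible multipliers in Theorem~\ref{thm:maximin-val-bnd} and carrying out the Berry--Esseen type expansion in Theorem~\ref{thm:conv-bnd-asymp}) has already been done. The only step requiring mild care is verifying that the $K_3$ perturbation of the variance and the $\gamma(n) = K_4 \log(n+1)$ offset are both of smaller order than the $\sqrt{n}$-scale gap controlled by the hypothesis, which is immediate from the scaling $\gamma(n) = O(\log n)$ and $\sqrt{nV_{\sf C} + kV_{\sf S}(\bm{d})} = \Theta(\sqrt{n+k})$.
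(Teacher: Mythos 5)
Your proposal is correct and takes essentially the same route as the paper: both derive the corollary directly from Theorem~\ref{thm:conv-bnd-asymp}, rearrange the hypothesis to match the argument of ${\sf Q}$, and absorb the vanishing terms ($K_1/k$, $K_2/\sqrt{n}$, $B'/\sqrt{n+k}$, $1/\sqrt{n+1}$, and the $K_3$ variance shift) by effectively replacing $\epsilon$ with $\epsilon + \delta(k,n)$ in the hypothesis, which is exactly how the paper's proof opens. One small remark: your statement that the $-K_3$ correction ``inflates the left-hand side'' holds only when the numerator $nC - kR(\bm{d}) + \gamma(n)$ is nonnegative; when it is negative (the typical converse regime with $\epsilon > 1/2$), the correction makes the fraction more negative, which only helps since ${\sf Q}$ is decreasing, so your conclusion ${\sf Q}(\cdot) \geq \epsilon - o(1)$ is valid in either case and the final absorption step goes through as you describe.
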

 
\begin{proof}
  Take $(k,n)$ such that $ nC - kR(\bm{d}) + \gamma(n)  \leq \sqrt{n V_{{\sf C}} + k V_{{\sf S}}(\bm{d})} \;{\sf Q}^{-1} \left( \epsilon +  \delta(k,n)  \right)$,  where 
  \begin{align}
    \delta(k,n) &=  \frac{K_{1}}{k} + \frac{K_{2}}{\sqrt{n}} + \frac{1}{\sqrt{n+1}} + \frac{B'}{\sqrt{n+k}}. \non
  \end{align}
  Substituting in \eqref{eq:conv-bnd-asymp}, we get that
  \begin{align}
& {\sf Q}\left( \frac{nC - kR(\bm{d}) + \gamma(n)}{\sqrt{nV_{{\sf C}} + kV_{{\sf S}}(\bm{d}) - K_{3}}}\right) - \frac{K_{1}}{k} - \frac{K_{2}}{\sqrt{n}} - \frac{1}{\sqrt{n+1}} - \frac{B'}{\sqrt{n+k}} \leq \epsilon. 
  \end{align}
  Since $\delta(k,n) \rightarrow 0$ as $k,n \rightarrow \infty$, the result follows.
\end{proof}

\subsection{Proof of the minimax theorems}

We are now prepared to derive the asymptotic results stated in Section \ref{sec:form}. We consider sequences of $(k,n)$ and study the limiting behaviour of the finite blocklength bounds along these sequences. 

We first prove Theorem \ref{thm:minimax-tends-0}.

\begin{proof}[\textit{of Theorem \ref{thm:minimax-tends-0}}]
Consider the bound from Theorem \ref{thm:ran-code-bnd}.  Take a sequence $(k,n) \uparrow \infty$ such that $\lim \frac{k}{n} < \frac{C}{R(\bm{d})}$. Then, we have that
  \begin{align}
    & \lim_{k,n \rightarrow \infty} {\sf Q}\left( \sqrt{n}\left(\frac{C - \frac{k}{n}R(\bm{d}) + \frac{\Gamma(k)}{n} }{\sqrt{ V_{{\sf C}} + \frac{k}{n} V_{{\sf S}}(\bm{d})}}\right)\right) = 0. \non
  \end{align}
Further, as $k,n \rightarrow \infty$ and for $K$ and $d_n$ chosen as \eqref{eq:defn-K-and-dn}, we have
\begin{align}
& \frac{B}{n+k} + \frac{K_{0}+2}{\sqrt{k}} + \sqrt{\frac{2 \ln (3|\mcal{T}|^{d_n})}{K}} + \frac{2\log e}{\sqrt{d_n}}   + \frac{ {\rm V_{0}}}{d_n\left( \delta - \frac{\log \sqrt{d_n}}{d_n}\right)^{2}} +  \frac{2 \log 3|\mcal{T}|^{d_n}(d_n+1)^{|\mcal{X}|^{2}|\Theta||\mcal{Y}|}}{\exp(d_n(\eta))} \rightarrow 0. \non
\end{align}
Thus, $ \overline{\nu}(k,n) \rightarrow 0$ and hence $\underline{\nu}(k,n) \rightarrow 0$.
\end{proof}

We now come to the proof of Theorem \ref{thm:minimax-tends-1}.

\begin{proof}[\textit{of Theorem \ref{thm:minimax-tends-1}}]
Consider the bound in Theorem \ref{thm:conv-bnd-asymp}.  Take a sequence $(k,n) \uparrow \infty$ such that $\lim \frac{k}{n} > \frac{C}{R(\bm{d})}$. Then, we have that
  \begin{align}
      \lim_{k,n \rightarrow \infty} {\sf Q}\left( \sqrt{n} \left( \frac{C - \frac{k}{n}R(\bm{d}) + \frac{\gamma(n)}{n}}{\sqrt{V_{{\sf C}} + \frac{k}{n}V_{{\sf S}}(\bm{d}) - \frac{K_{3}}{n}}} \right)\right) = 1. \non
  \end{align}
Further, as $k,n \rightarrow \infty$
\begin{align}
   - \frac{K_{1}}{k} - \frac{K_{2}}{\sqrt{n}} - \frac{1}{\sqrt{n+1}} - \frac{B'}{\sqrt{n+k}} \rightarrow 0. \non 
\end{align}
Thus, $\underline{\nu}(k,n) \rightarrow 1 $ and hence $\overline{\nu}(k,n) \rightarrow 1 $.
\end{proof}
Note that proving the coincidence of the upper and lower values as in Theorem~\ref{thm:minimax-tends-0} and Theorem~\ref{thm:minimax-tends-1} can be shown without dispersion bounds (see \eg, the arguments in~\cite{vora2019minimax}). Consequently, these claims hold even when the capacity achieving state distribution is not unique.

To prove Theorem \ref{thm:minimax-para-L}, we consider a refined definition of the rate with $O(\frac{1}{\sqrt{n}})$ backoff from $\frac{C}{R(\bm{d})}$. Recall from  \eqref{eq:k-n-seq-L} that the sequence is given as 
\begin{align}
  \frac{k}{n} = \frac{C}{R(\bm{d})} + \frac{\rho}{\sqrt{n}}, \label{eq:k-n-seq-L-proof}
\end{align}
where $\rho \in \mbb{R}$ is fixed. The non-asymptotic formulae we have derived allow us to evaluate the upper and lower value of the game along this refined definition of the rate. Along this sequence, the upper and lower value now both tend to the same value, which depends on $\rho$. This value is in general neither $0$ or $1$. This coincidence in finer asymptotics hinges on there being a unique capacity achieving distributions for the jammer. We discuss this in the following section.

\begin{proof}[\textit{of Theorem \ref{thm:minimax-para-L}}]
Substituting \eqref{eq:k-n-seq-L-proof} in the lower bound given by \eqref{eq:conv-bnd-asymp}, we get the following Gaussian term in the inequality
    \begin{align}
&    {\sf Q}\left( \frac{nC - kR(\bm{d}) + \gamma(n)}{\sqrt{nV_{{\sf C}} + kV_{{\sf S}}(\bm{d}) - K_{3}}}\right) 
    = {\sf Q}\left( \frac{- \rho R(\bm{d}) + \frac{\gamma(n)}{\sqrt{n}}}{\sqrt{V_{{\sf C}} + (\frac{C}{R(\bm{d})} + \frac{\rho}{\sqrt{n}})V_{{\sf S}}(\bm{d}) - \frac{K_{3}}{n}}}\right).
    \end{align}
  Taking limit $k,n \rightarrow \infty$, we get
    \begin{align}
   & \lim_{k,n \rightarrow \infty } \underline{\nu}(k,n)  \geq  {\sf Q}\left( \frac{- \rho R(\bm{d}) }{\sqrt{V_{{\sf C}} + \frac{C}{R(\bm{d})}V_{{\sf S}}(\bm{d})}}\right),  
    \end{align}
    since the other terms in \eqref{eq:conv-bnd-asymp} tend to zero asymptotically.
        
Similarly, substituting \eqref{eq:k-n-seq-L-proof} in \eqref{eq:upper-val-bnd-final}, we get the following Gaussian term in the inequality.
\begin{align}
  &{\sf Q}\left( \frac{nC - kR(\bm{d}) - \Gamma(k)}{\sqrt{ nV_{{\sf C}} + k V_{{\sf S}}(\bm{d})}}\right) 
= {\sf Q}\left(   \frac{ -\rho R(\bm{d}) - \frac{\Gamma(k)}{\sqrt{n}} }{\sqrt{ V_{{\sf C}} + (\frac{C}{R(\bm{d})} + \frac{\rho}{\sqrt{n}})V_{{\sf S}}(\bm{d})}}\right). 
\end{align}
Taking the limit as $k,n \rightarrow \infty$, we get 
 \begin{align}
   & \lim_{k,n \rightarrow \infty } \underline{\nu}(k,n)  \geq  {\sf Q}\left( \frac{- \rho R(\bm{d}) }{\sqrt{V_{{\sf C}} + \frac{C}{R(\bm{d})}V_{{\sf S}}(\bm{d})}}\right),  
    \end{align}
    since the other terms in \eqref{eq:upper-val-bnd-final} vanish asymptotically. This completes the proof.
\end{proof}

\subsection{Non-uniqueness of channel dispersion}

In a general AVC setting, there could be multiple pairs of distributions $(P_{\mbb{X}},q_{\Theta})$ that achieve the capacity in \eqref{eq:avc-cap} and consequently the sets $\Pi_{\mbb{X}}$ and $\Pi_{\Theta}$ could be non-singleton sets. When the capacity achieving input and the state  distributions are non-unique, the channel dispersions $ V_{{\sf C}}^{+} $ and $ V_{{\sf C}}^{-}$ defined in Section II-D need not be equal. The dispersions are equal if either one of the capacity achieving input or state distributions is unique.

 In this paper, we computed the dispersion bounds for stochastic joint source-channel coding over an AVC under the assumption that the capacity achieving state distribution is unique.  Recently, authors in \cite{kosut2018finite} and \cite{kosut2017dispersion} computed the second order dispersion bounds for the rate of communication over an AVC for deterministic and random codes respectively. They showed that in the case of non-unique capacity achieving distributions, the achievability and converse bounds for the rate are not equal. Thus, it is natural to suppose that in the case of non-unique capacity achieving distributions, the `finer' minimax theorem may not hold and thereby 
 the achievability and converse bounds for the rate may not match in the joint source-channel setting. Validating this would require improved joint source-channel coding strategies which is a point of future work.


\section{Conclusion}

We formulated the adversarial communication problem as zero-sum game between the encoder-decoder team and the jammer where the encoder-decoder attempt to minimize the probability of error while the jammer tries to maximize it. The problem is non-convex in the space of strategies of the encoder-decoder team and hence a minimax theorem need not hold. However, we showed that an approximate minimax theorem holds for the game. We derived finite blocklength upper and lower bounds for the minimax and maximin values of the game and showed that asymptotic minimax theorems hold as the blocklength tends to infinity. In particular, for rates below $\frac{C}{R(\bm{d})}$, the upper and lower values tend to zero and for rates above $\frac{C}{R(\bm{d})}$, the values tend to unity. For rates tending exactly to $\frac{C}{R(\bm{d})}$ with a $O(\frac{1}{\sqrt{n}})$ backoff, the values tend to the same constant under a technical assumption on the uniqueness of capacity achieving distributions.

\begin{appendices}
\section{}
We begin with the following central limit theorem due to Berry and Esseen (see ~\cite{polyanskiy2010channel}).
 \begin{theorem}[Berry-Esseen CLT] \label{thm:berry-esseen}
 Fix $n \in \mbb{N}$. Let $W_{i}$ be independent random variables. Then, for $t \in \mbb{R}$, we have
   \begin{align}
     \left| \mbb{P}\left( \frac{1}{n}\sum_{i=1}^{n} W_{i} >  D_{n} + t\sqrt{\frac{V_{n}}{n}} \right) - {\sf Q}(t)\right| \leq \frac{B_{n}}{\sqrt{n}},
   \end{align}
   where ${\sf Q}$ is the complementary Gaussian function, and
   \begin{align}
     D_{n} &= \frac{1}{n}\sum_{i=1}^{n}\mbb{E}[W_{i}], \;\; V_{n} =  \frac{1}{n}\sum_{i=1}^{n}{\sf Var}[W_{i}], \;\;
     A_{n} =  \frac{1}{n}\sum_{i=1}^{n}\mbb{E}[|W_{i} - \mbb{E}[W_{i}|^{3}]], \;\; B_{n} = \frac{c_{0}A_{n}}{V_{n}^{3/2}}, c_{0} >0. 
   \end{align}
 \end{theorem}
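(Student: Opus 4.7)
The plan is to follow the classical Fourier-analytic proof of Berry and Esseen. After centering and rescaling, write $Z_n := (nV_n)^{-1/2}\sum_{i=1}^n (W_i - \mbb{E}[W_i])$, a mean-zero random variable with ${\sf Var}(Z_n) = 1$. Let $F_n$ denote its distribution function and $\phi_n$ its characteristic function. The claim is equivalent to $\sup_{t \in \mbb{R}} |F_n(t) - \Phi(t)| \leq B_n/\sqrt{n}$, where $\Phi = 1 - {\sf Q}$ is the standard Gaussian CDF.

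First I would invoke Esseen's smoothing inequality: for every $T > 0$,
\begin{align*}
\sup_{t} |F_n(t) - \Phi(t)| \leq \frac{1}{\pi}\int_{-T}^T \frac{|\phi_n(u) - e^{-u^2/2}|}{|u|}\,du + \frac{c_1}{T},
\end{align*}
with $c_1$ an absolute constant. This reduces the problem to controlling the characteristic function difference on a bounded interval. Next I would Taylor expand each factor $\phi_{\tilde W_i}$ appearing in $\phi_n = \prod_i \phi_{\tilde W_i}$, where $\tilde W_i := (W_i - \mbb{E}[W_i])/\sqrt{nV_n}$, to third order. The quadratic pieces assemble to $-u^2/2$ since $\sum_i {\rm Var}(\tilde W_i) = 1$, while the cubic remainder contributes at most a universal constant times $|u|^3\mbb{E}[|W_i - \mbb{E}[W_i]|^3]/(nV_n)^{3/2}$ per factor. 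Summing logarithms and exponentiating, one obtains, for $|u|$ in a range of size proportional to $\sqrt{n}\,V_n^{3/2}/A_n$,
\begin{align*}
|\phi_n(u) - e^{-u^2/2}| \leq c_2\, \frac{A_n}{V_n^{3/2}\sqrt{n}}\,|u|^3\, e^{-u^2/3}.
\end{align*}

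Finally I would choose $T = c_3\sqrt{n}\, V_n^{3/2}/A_n$ so that the Taylor estimate is valid throughout $[-T,T]$. Substituting into Esseen's inequality, the integral contributes $O(A_n/(V_n^{3/2}\sqrt{n}))$ because $\int |u|^2 e^{-u^2/3}\,du$ is finite, and the residual $c_1/T$ is of the same order. Absorbing all universal constants into $c_0$ yields the stated bound. The main technical obstacle is securing a universal constant $c_0$ that is independent of $n$ and of the individual distributions of the $W_i$; in particular the third-order Taylor estimate for $\log \phi_{\tilde W_i}$ must hold uniformly up to $|u| \sim \sqrt{n}V_n^{3/2}/A_n$, which requires delicate tracking of the logarithmic branch and of the pointwise bound $|\phi_{\tilde W_i}(u)| \leq 1$ in intermediate regimes. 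The non-identically distributed setting introduces no essential new difficulty: one simply replaces the familiar $n\sigma^2$ and $n\rho$ by the aggregate quantities $\sum_i {\rm Var}(W_i) = nV_n$ and $\sum_i \mbb{E}[|W_i - \mbb{E}[W_i]|^3] = nA_n$ throughout.
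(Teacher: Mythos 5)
The paper does not supply a proof of this theorem: it is stated as a known result and attributed to the literature via the reference to Polyanskiy \etal~\cite{polyanskiy2010channel}, so there is no in-paper argument to compare yours against line by line. Your sketch is the standard Fourier-analytic route for independent, non-identically distributed summands and the bookkeeping is consistent with the stated normalization: centering and rescaling to unit variance, applying Esseen's smoothing inequality to reduce the uniform Kolmogorov distance to a weighted integral of the characteristic-function difference over $[-T,T]$, Taylor-expanding each factor to third order so the remainder is governed by the Lyapunov ratio $A_n/(V_n^{3/2}\sqrt{n})$, and choosing $T$ proportional to the inverse Lyapunov ratio so that both the integral term and the residual $c_1/T$ are of order $B_n/\sqrt{n}$. (Your reduction $\mbb{P}(\cdot)-{\sf Q}(t)=\Phi(t)-F_n(t)$ is also correct, since ${\sf Q}=1-\Phi$.) The two places a complete write-up must be careful are exactly the ones you flag: justifying the passage to a sum of logarithms requires each factor to stay bounded away from zero uniformly over $|u|\leq T$, which standardly forces a split between a small-$|u|$ regime where the logarithmic Taylor estimate applies and an intermediate regime where one falls back on the fact that each characteristic function has modulus at most one together with a crude exponential upper bound on the product; and the constants have to be tracked so that $c_0$ is absolute, independent of $n$ and of the individual laws of the $W_i$. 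Neither is a conceptual obstruction in the non-identically-distributed case, and the plan, once executed, yields precisely the stated inequality.
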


\begin{proof}[\textit{of Theorem \ref{thm:avg-error-asymp-bnd}}]

We weaken the bound in \eqref{eq:minimax-val-bnd} by taking $P_{X_{a}}(x_{a}) = \prod_{i=1}^{d_n} P^{*}_{\mbb{X}}(x_{i}), P_{\mbb{X}}^* \in \Pi_{\mbb{X}}$. Let $\Ascr, Z, K$ and $d_n$ be as defined in \eqref{eq:defn-set-A}, \eqref{eq:defn-func-Z} and \eqref{eq:defn-K-and-dn} respectively.
Let $  U_{i} := i_{\mbb{X}^{*};\mbb{Y}_{T_{\theta}}}(X_{ai};Y_{ai})$, where $(X_{ai}, Y_{ai}) \sim P^{*}_{\mbb{X}} \times \sum_{\theta \in \mcal{T}}T_{\theta}(\theta)P_{\mbb{Y}|\mbb{X},\Theta = \theta} \; \forall \;i$. Since the channel is memoryless, we have $i_{X^{*};Y_{q^*}}(X_a;Y_a) = \sum_{i=1}^{d_{n}}U_{i}$. 
     Thus, taking $\gamma=\log(\sqrt{d_n}K)$ we can write the following, $\mbb{P}\left(i_{X^{*};Y_{q^*}}(X_a;Y_a) \leq \gamma \right) =  
\mbb{P} \left( \sum_{i=1}^{d_n}U_{i} \leq \log (\sqrt{d_n}K ) \right)$
           From Lemma 12.10 in  \cite{csiszar2011information}, we have $ C \leq  \mbb{E} [i_{\mbb{X};\mbb{Y}_{T_{\theta}^{n}}}(X_{ai};Y_{ai})] = E[U_{i}]$ for all $T_{\theta} \in \Pscr_{n}(\Tscr)$ and hence $ d_nC \leq   \sum_{i = 1}^{d_n} \mbb{E}[U_{i}]$. Also, substituting for $\log K$ from equation \eqref{eq:defn-K-and-dn}, we get
     \begin{align}
   \mbb{P}\left( \sum_{i=1}^{d_n}U_{i} \leq \log \sqrt{d_n} + \log K  \right) 
   &\leq    \mbb{P}\left( \sum_{i=1}^{d_n} U_{i} \leq \log \frac{\sqrt{d_n}}{\exp  d_n\delta}  +  \sum_{i = 1}^{d_n} \mbb{E}[U_{i}]   \right) \non\\
        &\leq          \mbb{P}\left( \left| \sum_{i=1}^{d_n} \left(U_{i} - \mbb{E}[U_{i}]\right) \right| \geq  \log \frac{\exp  d_n\delta}{\sqrt{d_n}}  \right),  
        \end{align}
where the last equation follows from triangle inequality. Using Chebyshev's inequality and taking supremum over $\theta_{a}$, we get
        \begin{align}
\sup_{\theta_{a} \in \mcal{T}^{d_n}}   \mbb{P}\left( \left| \sum_{i=1}^{d_n} (U_{i} - \mbb{E}[U_{i}]) \right|  \geq  d_n\delta - \log \sqrt{d_n} \vphantom{\sum_{i=1}^{d_n} (i_{\mbb{X};\mbb{Y}_{q^{*}}}(X_{ai};Y_{ai})} \right) &\leq   \sup_{\theta_{a} \in \mcal{T}^{d_n}} \frac{d_n}{(d_n \delta - \log \sqrt{d_n})^{2}} {\rm Var} \left( i_{\mbb{X}^{*};\mbb{Y}_{T_{\theta_{a}}}}(\mbb{X}_{a};\mbb{Y}_{a}) \right) \non \\
   &\leq  \frac{V_{0}}{d_n\left( \delta - \frac{\log \sqrt{d_n}}{d_n}\right)^{2}}, \label{eq:v0-proof} 
\end{align}
where \eqref{eq:v0-proof} follows from equation \eqref{eq:var-bnded}.

The bounds on the other two terms given as $\mbb{P}\left(Z(X_{a},\bar{X}_{a},Y_{a} \right) = 0, (X_{a},Y_{a}) \in \Ascr | \theta_{a}) \leq \frac{1}{\sqrt{d_n}K} $ and $\mbb{P}(Z(X_{a},\bar{x}_{a},Y_{a}) = 0, (X_{a},Y_{a}) \in \Ascr | \bar{X}_{a} = \bar{x}_{a}, \theta_{a}) \leq (d_n+1)^{|\mcal{X}|^{2}|\Theta||\mcal{Y}|} \exp(-d_n\eta)$ follows from the proof of Theorem 4 in \cite{kosut2018finite}. 
Using \eqref{eq:v0-proof} and above bounds, we  get the required bound.  
\end{proof}

\begin{proof}[\textit{of Theorem \ref{thm:ran-code-bnd}}]
To derive the bound, we construct a random joint source-channel code $(F,\Phi)$ or equivalently a distribution $\psi$ on the set of codes $\{ f,\varphi \}$. From Theorem \ref{thm:joint-random-code-achieve}, it suffices to choose the distributions $P_{X}$ and $P_{\wi{S}}$. Let $P^{*}_{\mbb{X}}$ be a distribution from the set $\Pi_{\mbb{X}}$ that achieves the $V_{{\sf C}}^{+}$ in \eqref{eq:chan-disp}. Define $P_{X}(x) := \prod_{i = 1}^{n}P^{*}_{\mbb{X}}(x_{i})$ for $x \in \mcal{X}^{n}$. Further, take $P_{\wi{S}}(\wi{s})  = \prod_{i=1}^{k}P_{\wi{\mbb{S}}}^{*}(\wi{s}_{i}), \wi{s} \in \Sscr^{k}$,  where $P_{\wi{\mbb{S}}}^{*}$ achieves the optimal in \eqref{eq:rate-dist-func}. Clearly, with these choice of distributions, we have that 
\begin{align}
i_{X^{*};Y_{q^*}}(X;Y) &= \sum_{i=1}^{n} i_{\mbb{X}^{*};\mbb{Y}_{q_{\Theta}^{*}}}(X_{i};Y_{i}), \;\;  j_{S}(S,\bm{d}) = \sum_{j=1}^{k}j_{{\sf S}}(S_{j},\bm{d}).  \non
\end{align}

Recall the error terms corresponding to the random code from Theorem \ref{thm:minimax-val-bnd}.
   Writing the maximization over $\theta_b \in \Tscr^n$ as maximization over $q \in \Pscr(\Tscr^n)$, we can write the bound as
   \begin{align}
 &   \max_{q \in \Pscr(\mcal{T}^{n})} \left[ \mbb{E} \left[ \exp\left( -\left| i_{X^{*};Y_{q^*}}(X_b;Y_b) - \log \frac{\gamma}{P_{\wi{S}}(\Bscr_{\bm{d}}(S))} \right|^{+} \right)  \right]  + e^{1 - \gamma} \vphantom{\frac{\gamma}{P_{\wi{S}}(\Bscr_{\bm{d}}(S))}} \right], \non 
   \end{align}
   where $q(\theta) = \prod_{i=1}^{n}q_i(\theta_i)$ with $q_i \in \Pscr(\Tscr), \theta \in \Tscr^n$.
Let $ h(X_b,Y_b,S) :=  \sum_{j=1}^{n}i_{\mbb{X}^{*};\mbb{Y}_{q_{\Theta}^{*}}}(X_{bj};Y_{bj}) - \log \frac{\gamma}{P_{\wi{S}}(\Bscr_{\bm{d}}(S))}$
Further, define the set $\Dscr$ as
\begin{align}
  \mcal{D} :=& \left\{ s \in \Sscr^{k} : \log \frac{1}{P_{\wi{S}}(\Bscr_{\bm{d}}(s))}  \leq \sum_{i=1}^{k}j_{{\sf S}}(s_{i},\bm{d}) + \left( \bar{c} - \frac{1}{2} \right) \log k  + c \right\},
\end{align}
where $\bar{c}, c$ are constants defined in \cite{kostina2013lossy}. Define the random variable $W_{l}$ as 
\begin{align}
 W_{l} = W_l(n,k):= \begin{cases} 
i_{\mbb{X}^{*};\mbb{Y}_{q_{\Theta}^{*}}}(X_{bl};Y_{bl}) & \eef l \leq n \\
 -  j_{{\sf S}}(S_{l-n},\bm{d}) & \eef n<l \leq n+k 
\end{cases} \label{eq:defn-Wl}
\end{align}
The expectation $\mbb{E} \left[ \exp\left( -\left| h(X_b,Y_b,S) \right|^{+} \right)  \right]$ can be written as 
\begin{align}
  &   \mbb{E} \left[ \vphantom{\frac{\gamma}{P_{Z}(\Bscr_{\bm{d}}(S))}} \exp\left( -\left| h(X_b,Y_b,S) \right|^{+} \right)  \mbb{I}\left\{ S \in \mcal{D} \right\}  \right] +  \mbb{E} \left[ \vphantom{\frac{\gamma}{P_{Z}(\Bscr_{\bm{d}}(S))}} \exp\left( -\left| h(X_b,Y_b,S) \right|^{+} \right)\mbb{I}\left\{ S \notin \mcal{D} \right\}  \right] \non \\
  &\leq  \mbb{E} \left[ \exp\left( -\left|  \sum_{l=1}^{n+k}  W_{l} -  \log (k^{\left( \bar{c} - \frac{1}{2}\right)}\exp(c)\gamma)  \right|^{+} \right)  \right] + \frac{K_{0}}{\sqrt{k}}, \label{eq:prob-bnd-set-D}
\end{align}
where the first term in  \eqref{eq:prob-bnd-set-D}  follows by using the definition of the set $\Dscr$ and the second term follows by using $\mbb{E} \left[ \exp\left( -\left| h(X_b,Y_b,S) \right|^{+} \right)\mbb{I}\left\{ S \notin \mcal{D} \right\}  \right] \leq \mbb{P}(S \notin \Dscr)$  and applying Lemma 5 from \cite{kostina2013lossy}.

We define the following moments to be used in bounding the first term in \eqref{eq:prob-bnd-set-D} using Berry-Esseen CLT.
\begin{equation}
\begin{aligned}
D_{n+k}(q) &= \frac{1}{n+k}\sum_{l=1}^{n+k}\mbb{E}[W_{l} ],  \;\;
V_{n+k}(q) =  \frac{1}{n+k}\sum_{l=1}^{n+k}{\sf Var}[W_{l} ],  \\
A_{n+k}(q) &=  \frac{1}{n+k}\sum_{l=1}^{n+k}\mbb{E}[|W_{l} - \mbb{E}[W_{l}|^{3} ],  \;\;
B_{n+k}(q) = \frac{c_{0}A_{n+k}(q)}{V_{n+k}^{3/2}(q)}, c_{0} >0. \label{eq:moments-Wl}
\end{aligned}  
\end{equation}
Note that the moments are computed with respect to the distribution $ P_{X}\times \sum_{\theta} q(\theta)P_{Y|X,\bm{\Theta} = \theta} \times P_{S}$.  Next, we define the following set
        \begin{align}
  \mcal{H} &= \left\{ \vphantom{\sqrt{\frac{V_{n+k}}{n+k}}} (x,y,s) \in \Xscr^{n} \times \Yscr^{n} \times \Sscr^{k} : \frac{1}{n+k}\sum_{l=1}^{n+k}W_{l}   > \left(D_{n+k}(q) - t_{k,n}\sqrt{\frac{V_{n+k}(q)}{n+k}}\right)\right\}, 
   \end{align}
where $t_{k,n} > 0$ will be chosen later.
For the sake of brevity, we define the term in the $\exp$ as follows
   \begin{align}
     g(X_b,Y_b,S) &= \sum_{l=1}^{n+k}  W_{l} -  \log (k^{\left( \bar{c} - \frac{1}{2}\right)}\exp(c)\gamma), \label{eq:defn-func-g}\\
\Gamma_{n+k}(q) &= (n+k)\left(D_{n+k}(q) - t_{k,n}\sqrt{\frac{V_{n+k}(q)}{n+k}}\right) -  \log (k^{\left( \bar{c} - \frac{1}{2} \right)} \exp(c)\log \gamma). \label{eq:defn-big-gamma}
   \end{align}

   Thus, we can write \eqref{eq:prob-bnd-set-D} as $ \mbb{E} \left[ \exp\left( -\left| g(X_b,Y_b,S)  \right|^{+} \right) \right]$ which can be written as 
   \begin{align}
     &  \mbb{E} \left[ \exp\left( -\left| g(X_b,Y_b,S) \right|^{+} \right) \mbb{I}\left\{ (X_b,Y_b,S) \in \mcal{H} \right\}  \right] +  \mbb{E} \left[ \exp\left( -\left| g(X_b,Y_b,S)  \right|^{+} \right) \mbb{I}\left\{ (X_b,Y_b,S) \notin \mcal{H} \right\}  \right] \non \\
     &\leq  \mbb{E} \left[ \exp\left( -\left| \Gamma_{n+k}(q)  \right|^{+} \right) \mbb{I}\left\{ (X_b,Y_b,S) \in \mcal{H} \right\} \right] +  \mbb{E} \left[  \mbb{I}\left\{ (X_b,Y_b,S) \notin \mcal{H} \right\}   \right],      \label{eq:exp-term-bnd-1}
   \end{align}
   where the first term in \eqref{eq:exp-term-bnd-1} follows by using the definition of the set $\mcal{H}$ and the second term follows by using $\exp(-|\bullet|^{+}) \leq 1$. Using the above results and \eqref{eq:prob-bnd-set-D}, we get the following bound.
   \begin{align}
     &   \max_{q \in \Pscr(\mcal{T}^{n})} \left[ \mbb{E} \left[ \exp\left( -\left| i_{X^{*};Y_{q^{*}}}(X_b;Y_b) - \log \frac{\gamma}{P_{\wi{S}}(\Bscr_{\bm{d}}(S))} \right|^{+} \right)  \right]  + e^{1 - \gamma} \vphantom{\frac{\gamma}{P_{\wi{S}}(\Bscr_{\bm{d}}(S))}} \right] \non \\
     &\leq  \max_{q \in \Pscr(\mcal{T}^{n})}\mbb{E} \left[ \exp\left( -\left| \Gamma_{n+k}(q)  \right|^{+} \right) \mbb{I}\left\{ (X_b,Y_b,S) \in \mcal{H} \right\}  \right] + \max_{q \in \Pscr(\mcal{T}^{n})} \mbb{P} \left(   (X_b,Y_b,S) \notin \mcal{H}  \right) +  e^{1 - \gamma} + \frac{K_{0}}{\sqrt{k}}.      \label{eq:bnd-before-clt}
   \end{align}
   To upper bound the first term in \eqref{eq:bnd-before-clt}, we compute the maximum of $ \exp\left( -\left| \Gamma_{n+k}(q)  \right|^{+} \right)$ over all distributions. Since $\exp(-|\bullet|^+)$ is a decreasing function with respect to $\Gamma_{n+k}(q)$, we get
   \begin{align}
     & \max_{q \in \Pscr(\mcal{T}^{n})}\exp\left( -\left| \Gamma_{n+k}(q)  \right|^{+} \right) \leq \exp\left( -\left| \min_{q \in \Pscr(\mcal{T}^{n})}\Gamma_{n+k}(q)  \right|^{+} \right).
   \end{align}
To compute the minimum, we consider the following 
   \begin{align}
  D_{n+k}(q) &= \frac{1}{n+k}\sum_{i=1}^{n}\mbb{E}[i_{\mbb{X}^{*};\mbb{Y}_{q_{\Theta}^{*}}}(\mbb{X}_{bi};\mbb{Y}_{bi})] - \frac{k}{n+k} R(\bm{d}), \non \\
     V_{n+k}(q) &=  \frac{1}{n+k} \sum_{i=1}^{n}{\sf Var}(i_{\mbb{X}^{*};\mbb{Y}_{q_{\Theta}^{*}}}(\mbb{X}_{bi};\mbb{Y}_{bi})) + \frac{k}{n+k} V_{{\sf S}}(\bm{d}), \non 
   \end{align}
where the moments  are with respect to $P_{\mbb{X}}^*\times \sum_{\theta \in \Tscr} q_{i}(\theta)P_{\mbb{Y}|\mbb{X},\Theta = \theta}$.
Thus, the minimum is given as 
   \begin{align}
     & \min_{q \in \Pscr(\Tscr^n)}(n+k) \left( D_{n+k}(q) - t_{k,n}\sqrt{\frac{V_{n+k}(q)}{n+k}} \right) =  nC - k R(\bm{d})  - (n+k)\max_{q_i \in \Pi_{\Theta}} t_{k,n}\sqrt{\frac{V_{n+k}(q)}{n+k}} + O(1), \label{eq:max-of-sum}
   \end{align}
   where the maximum in the second term in \eqref{eq:max-of-sum} is restricted to $\Pi_{\Theta}$ by using Lemma 63 and Lemma 64 in \cite{polyanskiy2010channel}. 
   Since $P_{\mbb{X}}^{*} \in \Pi_{\mbb{X}}$ and $q_{\Theta}^{*} \in \Pi_{\Theta}, |\Pi_{\Theta}| = 1$, we have $\max_{q_i \in \Pi_{\Theta}} V_{n+k}(q) = nV_{{\sf C}}.$ Thus, we have that
   \begin{align}
     & \min_{q \in \Pscr(\Tscr^n)}(n+k) \left( D_{n+k}(q) - t_{k,n} \sqrt{\frac{V_{n+k}(q)}{n+k}} \right)  = nC - kR(\bm{d}) -  t_{k,n} \sqrt{n V_{{\sf C}} + k V_{{\sf S}}(\bm{d})}  + O(1),
   \end{align}
   We choose $t_{k,n}$ as  $t_{k,n} = (nC - kR(\bm{d})  - \bar{c}\log k - \log \gamma - c)/ \sqrt{n V_{{\sf C}} + k V_{{\sf S}}(\bm{d})}$. Thus, we get  \newline $  \min_{q \in \Pscr(\Tscr^{n})}\Gamma_{n+k}(q)  \geq \frac{1}{2}  \log k + O(1).$ Substituting in \eqref{eq:bnd-before-clt}, we get the following upper bound
\begin{align}
  & \max_{q \in \Pscr(\Tscr^{n})}\frac{1}{\sqrt{k}}\mbb{P} \left( (X_b,Y_b,S) \in \mcal{H} \right) + \max_{q \in \Pscr(\Tscr^{n})} \mbb{P} \left(   (X_b,Y_b,S) \notin \mcal{H}  \right) +  e^{1 - \gamma} +  \frac{K_{0}}{\sqrt{k}}\non  \\
       &\leq   {\sf Q}(t_{k,n})  + \max_{q \in \Pscr(\Tscr^{n})}\frac{B_{n+k}(q)}{\sqrt{n+k}} + \frac{K_{0} + 2}{\sqrt{k}}, \label{eq:berry-esseen-bnd}
\end{align}
where \eqref{eq:berry-esseen-bnd} follows by taking $\gamma = \frac{1}{2}\log_{e}k + 1$  and  bounding $\mbb{P} \left(  (X_b,Y_b,S) \notin \mcal{H}  \right)$ using Berry-Esseen CLT. 

Recall that $B_{n+k}(q)$ is given as $ B_{n+k}(q) = \frac{c_{0}A_{n+k}(q)}{V_{n+k}^{3/2}(q)}$. Assuming $\min_{l \in \{1,\hdots,n+k\}}{\sf Var}[W_{l}] \neq 0$, we can  bound $B_{n+k}(q)$ as from the definition of $A_{n+k}(q)$ and $V_{n+k}(q)$ as
\begin{align}
   B_{n+k}(q) \leq \frac{c_{0} \max_{l} \mbb{E}[|W_{l} - \mbb{E}[W_{l}]|^{3}] }{(\min_{l}{\sf Var}[W_{l}])^{3/2}}. \label{eq:Bnk-bnd}
\end{align}
From \eqref{eq:info-dens-fin} and \eqref{eq:d-tilt-fin}, we have that the RHS is  finite for all $q$ and independent of $k,n$. Thus, we have that  $\max_{q \in \Pscr(\Tscr^{n})} B_{n+k}(q)$ is a finite constant. Substituting $t_{k,n}$ from above, taking $ \max_{q \in \Pscr(\Tscr^{n})} B_{n+k}(q) \leq B$ with $B > 0$ and using \eqref{eq:berry-esseen-bnd}, we get the required bound.
\end{proof}

\begin{proof}[\textit{of Theorem \ref{thm:conv-bnd-asymp}}]
From \eqref{eq:dp-bound}, it suffices to construct $q, P_{\bar{Y}_{q}}$, the random variable $U$ and $\gamma$ to get a lower bound on $\underline{\nu}(k,n)$. 
Take $ q(\theta) = q^{*}(\theta) = \prod_{i=1}^{n} q^{*}_{\Theta}(\theta_{i})$ where $q^{*}_{\Theta} \in \Pi_{\Theta}$. Let ${\sf U}$ be the number of types in $\Pscr_{n}(\Xscr)$ and $\Uscr = \{1, \hdots, {\sf U}\}$ be indices corresponding to each of the types. Thus, for a given sequence $x \in \Xscr^{n}$, $U$ maps it to its type which is denoted by some index $u \in \Uscr$. Further, let $P_{\overline{Y}_{q}|U}$ be defined as
\begin{align}
  P_{\overline{Y}_{q}|U}(y|u) &= (P_{X}q^{*}P_{Y|X,\bm{\Theta}})(y) = \sum_{x,\theta}  P_{X}(x)q^{*}(\theta)P_{Y|X,\bm{\Theta}}(y|x,\theta),
\end{align}
where $P_{X}(x) = \prod_{i=1}^{n}T_{x}(x_{i}),  x \in \Xscr^n$ with $T_{x} \in \Pscr_{n}(\Xscr)$ being the type corresponding to the index $u$. Thus, we have that $i_{X;\overline{Y}_{q}|U}(x;Y|u) = \sum_{i=1}^{n} i_{\mbb{X};\mbb{Y}_{q^{*}}}(x_{i};Y_{i}), \;\;
  j_{S}(s,\bm{d}) = \sum_{j=1}^{k}j_{{\sf S}}(s_{j},\bm{d}), $
where 
\begin{align}
 i_{\mbb{X};\mbb{Y}_{q^{*}}}(x';y)  := \log \frac{(q_{\Theta}^{*}P_{\mbb{Y}|\mbb{X},\Theta})(y|x')}{(T_{x}q_{\Theta}^{*}P_{\mbb{Y}|\mbb{X},\Theta})(y)}, \; x' \in \Xscr, y \in \Yscr.
\end{align}
Since $q$ is taken as an i.i.d. distribution, effectively, we have a channel where \newline $Y \sim \prod_{i=1}^{n} \sum_{\theta_{i} \in \Tscr}q_{\Theta}^{*}(\theta_{i})P_{\mathbb{Y}|\mathbb{X} = x_{i},\Theta = \theta_{i}}$ when the input is $x = (x_1, \hdots, x_n)$. Thus, the LHS of \eqref{eq:conv-bnd-asymp} is a converse of the standard DMC without a jammer with the channel given as the above averaged channel.  Following the line of arguments given in Appendix C of \cite{kostina2013lossy}, we get the following inequality.
\begin{align}
   & \max_{q,P_{\overline{Y}_{q}}, {\sf U}} \; \sup_{\gamma > 0}  \left[  \sum_{s}P_{S}(s) \right.  \min_{x} \left[ \vphantom{ \sum_{u=1}^{{\sf U}}} \mbb{P}\left( j_{S}(s,\bm{d}) - i_{X;\overline{Y}_{q}|U}(x;Y|U) \leq \gamma \right)\right. \non \\ 
                    &  + \exp ( j_{S}(s,\bm{d})-\gamma) \sum_{u=1}^{{\sf U}} \sum_{y}P_{U|X}(u|x) P_{\overline{Y}_{q}|U}(y|u) \left. \left. \mbb{I}\left\{  j_{S}(s,\bm{d}) - i_{X;\overline{Y}_{q}|U}(x;y|u) > \gamma \right\}  \vphantom{ \sum_{u=1}^{{\sf U}}} \right]- \frac{{\sf U}}{\exp(\gamma)} \right] \non \\
   &\geq \mbb{P}\left( \sum_{i=1}^{n}i_{\mbb{X};\mbb{Y}_{q_{\Theta}^{*}}}(x_{i}^{*};Y_{i}) - \sum_{j=1}^{k} j_{{\sf S}}(S_{j},\bm{d})) \leq -\gamma \right)  - \frac{K_{1}}{k} - \frac{K_{2}}{\sqrt{n}} - (n+1)^{|\Xscr|-1} \exp(-\gamma), \label{eq:conv-bnd-before-clt}
\end{align}
where the second term in the LHS is bounded below by zero, $K_{1}$ and $K_{2}$ are some constants and  $x^{*} = (x_{1}^{*},\hdots,x_{n}^{*})$ is a sequence such that its type $T_{x^{*}}$ minimizes
\begin{align}
\min_{T_{x} \in \Pscr_{n}(\Xscr)}  \left|T_{x} - P_{\mbb{X}}^{*} \right| \label{eq:x*-defn}
\end{align}
where $P_{\mbb{X}}^{*} \in \Pi_{\mbb{X}}$. Let
\begin{align}
W_{l} = W_{l}(n,k) := \left\{
  \begin{array}{c l}
 i_{\mbb{X};\mbb{Y}_{q_{\Theta}^{*}}}(x_{l}^{*};Y_{l})     &  \eef l \leq n\\
j_{{\sf S}}(S_{n-l},\bm{d}) &   \eef   n < l \leq n+k. 
  \end{array} \right. \non
\end{align}
Define the following moments of the random variable $W_{l}$. 
\begin{align}
  D_{n+k} &= \frac{1}{n+k}\sum_{l=1}^{n+k}\mbb{E}[W_{l}], \;\;
  V_{n+k} =  \frac{1}{n+k}\sum_{l=1}^{n+k}{\sf Var}[W_{l}],  \non \\
 A_{n+k} &=  \frac{1}{n+k}\sum_{l=1}^{n+k}\mbb{E}[|W_{l} - \mbb{E}[W_{l}|^{3}]],  \;\;
  B_{n+k}' = \frac{c_{0}A_{n+k}}{V_{n+k}^{3/2}}, c_{0} >0. \non
\end{align}  
From Berry-Esseen CLT, we have
\begin{align}
  \mbb{P}\left( \sum_{l=1}^{n+k}W_{l} \leq -\gamma \right) \geq {\sf Q}\left( \frac{D_{n+k} + \frac{\gamma}{n+k}}{\sqrt{\frac{V_{n+k}}{n+k}}}\right) - \frac{B_{n+k}'}{\sqrt{n+k}}. \label{eq:CLT-bnd-conv}
\end{align}
We also have the following inequalities from the Appendix C of \cite{kostina2013lossy}. 
\begin{align}
    D_{n+k}   &\leq \frac{n}{n+k}C - \frac{k}{n+k}R(\bm{d}),   \label{eq:conv-cap-rate-ineq}\\
  V_{n+k}   &\geq \frac{n}{n+k}V_{{\sf C}} + \frac{k}{n+k}V_{{\sf S}}(\bm{d}) - \frac{K_{3}}{n+k}, \label{eq:conv-disp-ineq}
\end{align}
where $K_3 > 0$  is some constant. Further, from \eqref{eq:info-dens-fin} and \eqref{eq:d-tilt-fin},  we can show that $A_{n+k}$ is bounded and hence $B'_{n+k}$ is bounded by a constant $B' > 0$.
Using \eqref{eq:conv-cap-rate-ineq} and \eqref{eq:conv-disp-ineq}, we get 
\begin{align}
  &{\sf Q}\left( \frac{D_{n+k} + \frac{\gamma}{n+k}}{\sqrt{\frac{V_{n+k}}{n+k}}}\right) \geq  {\sf Q}\left( \frac{nC - kR(\bm{d}) + \gamma}{\sqrt{nV_{{\sf C}} + kV_{{\sf S}}(\bm{d}) - K_{3}}}\right). \non 
\end{align}
Substituting the above in \eqref{eq:CLT-bnd-conv}, taking $\gamma = \left(|\Xscr| - \frac{1}{2}\right)\log (n+1)$ and using \eqref{eq:conv-bnd-before-clt}, we get the required bound.  
\end{proof}

\end{appendices}

\bibliographystyle{IEEEtran}
\bibliography{ref.bib}
\end{document}